\setlist{noitemsep,parsep=6pt,partopsep=0pt,topsep=0pt}
 \theoremstyle{remark}
\theoremstyle{plain}
\newtheorem{theorem}{Theorem}
\newtheorem{assumption}{Assumption}
\newtheorem{definition}{Definition}
\newtheorem{lemma}{Lemma}
\newtheorem{proposition}{Proposition}
\newenvironment{proof}[1][Proof]{\noindent\textbf{#1} }{\ \rule{0.5em}{0.5em}}
\renewcommand{\epsilon}{\varepsilon}
\newcommand{\citepos}[1]{\citeauthor{#1}'s (\citeyear{#1})} %Possesive cite command
\newcommand{\R}{\mathbb{R}}
\newcommand{\E}{\mathbb{E}}
\def\Reals{\mathbb R}
\newcommand{\uv}{\underline{v}}
\newcommand{\ov}{\overline{v}}
\newcommand{\oP}{\overline{P}}
\DeclareMathOperator{\Supp}{supp}
\newcommand{\ed}{(\epsilon,\delta)}
\let \savenumberline \numberline
\def \numberline#1{\savenumberline{#1.}}
  \renewcommand\@seccntformat[1]{\csname the#1\endcsname.{\hskip.7em\relax}} %Gets period after section title
\renewenvironment{proof}[1][\proofname] {\par\pushQED{\qed}\normalfont\topsep6\p@\@plus6\p@\relax\trivlist\item[\hskip\labelsep\bfseries#1\@addpunct{.}]\ignorespaces}{\popQED\endtrivlist\@endpefalse}
\newcommand{\mailto}[1]{\href{mailto:#1}{\texttt{#1}}} %creates an email command
\let\oldfootnote\footnote
\renewcommand\footnote[1]{\oldfootnote{\hspace{.5mm}#1}}
\titlespacing\section{0pt}{10pt plus 2pt minus 2pt}{4pt plus 2pt minus 2pt} %Tightens up spacing after section title
\titlespacing\subsection{0pt}{6pt plus 2pt minus 2pt}{2pt plus 2pt minus 2pt} %Tightens up spacing after subsection titile
\titlespacing\subsubsection{0pt}{6pt plus 2pt minus 2pt}{0pt plus 2pt minus 2pt} %Tightens up spacing after 
\titlespacing{\paragraph}{%
  0pt}{%              left margin
  0.5\baselineskip}{% space before (vertical)
  1em}%               space after (horizontal)
\newcommand{\appendixref}[1]{\hyperref[#1]{Appendix \ref{#1}}}
\tikzstyle{info}=[circle,thick,draw=black,fill=black!25,minimum size=4mm]
\tikzstyle{uninfo}=[circle,thick,draw=black,fill=white,minimum size=4mm]
\tikzstyle{inforecog}=[circle,line width=1mm,draw=black!50,fill=black!25,minimum size=4mm]
\tikzstyle{uninforecog}=[circle,line width=1mm,draw=black!50,fill=white,minimum size=4mm]
\tikzstyle{traded}=[draw, line width=1mm]
\tikzstyle{recog}=[draw=black!50, line width=1mm]
\newcommandx{\nageeb}[2][1=]{\todo[linecolor=blue,backgroundcolor=blue!25,bordercolor=blue,#1]{#2}}
\newcommandx{\andreas}[2][1=]{\todo[linecolor=black,backgroundcolor=black!25,bordercolor=black,#1]{#2}}
\newcommandx{\navin}[2][1=]{\todo[linecolor=red,backgroundcolor=red!25,bordercolor=red,#1]{#2}} 
\begin{document}

\begin{titlepage}

\title{Sequential Veto Bargaining with Incomplete Information\thanks{We thank Laura Doval, Faruk Gul, Marina Halac, Emir Kamenica, Juan Ortner, Daniel Rappoport, Jan Zapal, the Co-editor (Alessandro Lizzeri), four anonymous referees, 
and various audiences for useful discussions. We gratefully acknowledge financial support from NSF Grants SES-2018948 (Kartik) and SES-2018983 (Kleiner). C\'esar Barilla and Yangfan Zhou provided exemplary research assistance. }}
\author{
S. Nageeb Ali\thanks{Department of Economics, Pennsylvania State University. Email: \mailto{nageeb@psu.edu}.}
\and
Navin Kartik\thanks{Department of Economics, Columbia University.  Email: \mailto{nkartik@columbia.edu}.}
\and 
Andreas Kleiner\thanks{Department of Economics, Arizona State University.  Email:  \mailto{andreas.kleiner@asu.edu}.}
}

\maketitle

\begin{abstract}
\noindent 

We study sequential bargaining between a proposer and a veto player. Both have single-peaked preferences, but the proposer is uncertain about the veto player’s ideal point. The proposer cannot commit to future proposals. When players are patient, there can be equilibria with Coasian dynamics: the veto player's private information can largely nullify proposer's bargaining power. Our main result, however, is that under some conditions there also are equilibria in which the proposer obtains the high payoff that he would with commitment power. The driving force is that the veto player's single-peaked preferences give the proposer an option to ``leapfrog'', i.e., to secure agreement from only low-surplus types early on to credibly extract surplus from high types later. Methodologically, we exploit the connection between sequential bargaining and static mechanism design.

\end{abstract}
\thispagestyle{empty} 

\end{titlepage}

\setstretch{1.1}

\clearpage
	\setcounter{tocdepth}{2}
	\tableofcontents
	\thispagestyle{empty}
	\clearpage
\setcounter{page}{1}

\onehalfspacing

\section{Introduction}

\begin{quote}
{``\textit{If the Congress returns the bill having appropriately addressed these concerns, I will sign it.  For now, I must veto the bill.}''\footnote{Closing of \href{https://obamawhitehouse.archives.gov/the-press-office/2016/07/22/veto-message-president-hr-1777}{Obama's Veto Message} when he vetoed H.R. 1777.}} \\ --- President Barack Obama
\end{quote}

An important feature of U.S. politics is that legislatures (e.g., the Congress or a State Assembly) send bills to executives (e.g., the President or a Governor) who can veto them, and conversely, executives must secure confirmation from legislatures for certain appointments (e.g., to the Supreme Court and the Federal Reserve Board). More broadly, there are many contexts in which one party or group makes proposals and another decides whether to approve them. For instance, search committees put forward candidates for approval by their organizations, Boards of Directors may require sign-off from shareholders on certain initiatives, and some public school districts require citizens to ratify the budget proposed by their school boards.

In an influential paper, \citet{RR:78} introduced a framework to study \emph{veto bargaining}, i.e., bargaining over a one-dimensional policy between two players who have single-peaked preferences. Only one player, Proposer, has the power to make proposals; the other player, Vetoer, decides whether to accept a proposal or reject it and preserve the status quo. \citeauthor{RR:78} assumed complete information—specifically, Proposer knows Vetoer’s preferences—and a single take-it-or-leave-it proposal. These are important benchmarks, but for many applications both assumptions ought to be relaxed: Proposer may be uncertain about Vetoer’s preferences, and, as illustrated in our epigraph, Proposer can make sequential proposals.

Sequential veto bargaining with incomplete information presents rich possibilities for learning and signaling. When a proposal is rejected, Proposer updates about Vetoer's preferences and might modify his proposal in response. Anticipating that, Vetoer has an incentive to strategically reject proposals that she prefers over the status quo in order to extract proposals she likes even more. (Consider our epigraph, again.) But then, to what extent does Proposer actually benefit from making multiple proposals?

Existing work on these issues primarily undertakes only a two-period analysis (e.g., \citealp{Cameron:00}, pp.~110-116; \citealp{CM:04}, Section 4).\footnote{We discuss two exceptions, \citet{RR:79} and \citet{CE:94}, in \autoref{sec:discussion}.} But there are limitations to models with a short bargaining horizon. On the one hand, being able to make proposals repeatedly may allow Proposer to reap benefits from screening Vetoer's type. On the other hand, a short horizon confers significant commitment power to Proposer.

The implications of a long horizon have been studied in the neighboring arena of bargaining between a seller and a buyer with privately-known valuation. There, following the classic Coase Conjecture \citep{Coase:72}, it has been
shown that if offers can be made indefinitely and players are patient, then %absence 
lack
of commitment wipes out the seller's bargaining power. The outcome is (approximately) that the buyer only pays her lowest possible valuation so long as it is common knowledge that there are gains from trade.\footnote{This point has been established for the ``gap case'' and, subject to a ``stationary equilibirum'' qualification, also for the ``no gap case'' \citep*{FLT:85,GSW:86}. \cite{AD:89} provide an important counterpoint in the no gap case with non-stationary equilibria.} Applying Coasian logic to veto bargaining would suggest that because sequential rationality compels Proposer to repeatedly moderate future proposals, an inability to commit would significantly hurt Proposer.

Accordingly, the goal of our paper is to study sequential veto bargaining with incomplete information in an infinite-horizon model with patient players. Our main result is that, contrary to a Coasian intuition, the lack of commitment need not harm Proposer. More specifically, we establish that under certain conditions, \emph{if players are patient, Proposer can achieve a payoff that is arbitrarily close to his payoff with commitment power} (\autoref{prop:commitment_achievable_general}).

Central to this result is Proposer's ability to \emph{leapfrog}: he may initially propose a policy that is far from his own interests, targeting acceptance by ``low'' Vetoer types whose ideal points are further away from his and closer to the status quo. Upon rejection, Proposer concludes that Vetoer's ideal point is closer to his own preferred policy. He is then able to extract surplus from these ``high'' types because it is then credible to only offer policies that are even closer to his own ideal point. Put differently, by securing initial acceptance from (only) low types, leapfrogging limits the implications of sequential rationality for subsequent policy moderation, so much so that Proposer is not harmed by the lack of commitment.

Leapfrogging is viable in our model because Vetoer has single-peaked preferences: there are policies that low types are willing to accept and high types are not, given suitable subsequent policy proposals. By contrast, in the canonical model 
of seller-buyer bargaining, all buyer types prefer low to high prices. Offering low prices early on to subsequently charge high-value buyers a higher price would be futile; indeed, any equilibrium in seller-buyer bargaining features decreasing prices with the so-called \emph{skimming} property: the current price is always accepted by  an interval of the highest-value buyer types. 

After presenting our model in \Cref{sec:model}, we use a two-type example in \Cref{sec:twotypes} to develop the logic of leapfrogging. We first show how the option to leapfrog implies that, if an equilibrium exists, there is one that achieves a high Proposer payoff. Our option-based argument is succinct, but leaves open whether and how leapfrogging can be supported in an equilibrium. Accordingly, we also explicitly construct a high Proposer payoff equilibrium that uses leapfrogging (\autoref{prop:twotypes}).

We turn in \autoref{sec:general} to a setting with a continuum of types and Vetoer preferences given by a quadratic loss function. As is familiar in sequential bargaining, an upper bound on Proposer's payoff when he can commit to a strategy in the dynamic game is provided by an auxiliary static mechanism design problem (\autoref{lemma:static_provides_upper_bound}). This static problem has been studied recently by \cite{KKVW:21}; we assume that what they call ``interval delegation'' is an optimal mechanism. \autoref{prop:commitment_achievable_general} then establishes our main result: the static mechanism design payoff can be (approximately) achieved in a sequential veto bargaining equilibrium. Our argument is non-constructive, but crucially exploits Proposer's option to leapfrog in the dynamic game and certain properties of the optimal mechanism (\autoref{lemma:induction_base}). Combining \autoref{lemma:static_provides_upper_bound} and \autoref{prop:commitment_achievable_general}, we conclude that Proposer can achieve (approximately) the same payoff in an equilibrium as he could by committing to a strategy in the dynamic game.

In \autoref{sec:equilibria}, we show that there can be multiple equilibrium outcomes. \autoref{sec:skimmingequilibrium} constructs, under reasonable conditions, a ``skimming equilibrium'' that features Coasian dynamics: Proposer starts with demanding proposals but compromises rapidly, so much so that Vetoer (approximately) gets her ideal point unless it is sufficiently extreme.  In some cases this outcome is a lower bound on Proposer's equilibrium payoff, and an upper bound on Vetoer's. In \autoref{sec:highpayoffeqm}, we build on the skimming equilibrium to explicitly describe the dynamics of a leapfrogging equilibrium that delivers (approximately) Proposer's commitment payoff. Proposer begins by leapfrogging with a low offer, and upon rejection skims among the remaining high types. Although intuitive, this approach bootstraps on the ``bad'' skimming equilibrium by using it as a punishment if Proposer deviates, reminiscent of \citet{AD:89}. By contrast, our non-constructive proof of \autoref{prop:commitment_achievable_general} does not presume existence of a low-payoff equilibrium.
In \Cref{sec:leapfroggingnecessary}, we establish that leapfrogging is sometimes necessary to achieve Proposer's commitment payoff.

As there can be a range of equilibrium payoffs, our analysis calls attention to the role of ``norms''---equilibrium selection---in veto bargaining. 
In particular, if the norm favors Proposer, then the ability to make multiple proposals is always valuable to Proposer; however, under an unfavorable norm, in some environments Proposer could be worse off than if he could only make a single take-it-or-leave-it offer.

\Cref{sec:discussion} relates our work to the existing literature on veto and Coasian bargaining. \Cref{sec:conclusion} concludes.

\section{Model}
\label{sec:model}

Proposer (he) and Vetoer (she) jointly choose a policy or action $a \in \mathbb{R}$. In each period $t=0,1,2,\ldots$, so long as agreement has not already been reached, Proposer makes a proposal $a_t \in \mathbb{R}$ that Vetoer can accept or reject. The game ends when Vetoer accepts a proposal. Both players share a common discount factor $\delta\in [0,1)$. If agreement is reached in some period $T$ on action $a_T$, then 
Proposer's payoff is $\delta^{T}u(a_T)$ and Vetoer's is $\delta^{T} u_V(a_T,v)$; both players' payoffs are $0$ if agreement is never reached. The variable $v\in \Reals$ in Vetoer's payoff is her private information, or type, drawn ex ante from some  cumulative distribution $F$. We interpret the players' payoffs as arising from flow utilities $u$ and $u_V$ when a status-quo policy $0$ is implemented in every period from $0$ to $T-1$ and the agreement policy $a_T$ is implemented forever starting from period $T$, with a normalization that both players' utilities from the status quo is $0$. That is, a player's utility from a policy is his/her gain from that policy relative to the status quo. We assume both players have strictly single-peaked preferences, with Proposer's ideal point being $1$ and Vetoer's $v$. That is,
$u(a)$ is strictly increasing on $(-\infty,1]$ and strictly decreasing on $[1,\infty)$, and analogously for $u_V(a,v)$.\footnote{We adopt the convention that ``increasing'', ``larger than'', ``prefers'', etc., should be understood in the weak sense unless explicitly qualified by ``strict''.} Our main result (\autoref{prop:commitment_achievable_general} in \autoref{sec:general}) allows Proposer's utility $u$ to be any concave function but assumes that $u_V$ is quadratic loss.

A history in this game is a sequence of proposals. A strategy for Proposer is a function that assigns to every history a probability distribution over proposals, interpreted as the (possibly random) proposal Proposer makes given that all proposals in the history have been rejected. A strategy for Vetoer is a function that specifies for each history and each type the probability of accepting the last proposal. Our equilibrium concept is a standard version of Perfect Bayesian Equilibrium: both players play sequentially rationally and beliefs are updated by Bayes rule whenever possible---upon rejection of a proposal at any history, Proposer's belief about Vetoer's type is updated by Bayes rule if rejection has positive probability given Proposer's belief at that history. We also require, as usual, that Proposer's proposals do not (directly) affect his beliefs about Vetoer's type. 

Although our model formally has a single veto player, it can also be applied to settings in which Proposer has to secure approval from a committee of voters; so long as Proposer observes only whether his proposal passes or not, Vetoer can be interpreted as the median member of the committee. We elaborate in \autoref{sec:committee}.

\section{Two-Type Example}
\label{sec:twotypes}

This section presents an example to illustrate the logic of leapfrogging and how it benefits Proposer. The example has linear loss functions and a binary type distribution. Accordingly, for this section take
$$
u(a)=1-|1-a| \text{ \ and \ } u_V(a,v)=v -|v-a|,
$$
where the constants are determined by our normalization that both Proposer's and Vetoer's payoffs from the status quo (action $0$) are $0$. For simplicity, assume in this section that Proposer can only propose actions in $[0,1]$.
Suppose there are two Vetoer types, $l$ and $h$, and let $\mu_0$ be the prior probability of type $h$. We focus on the case where 
\begin{align}\label{Equation-Example}
    0<l<1/2<h<2l<1,
\end{align}
as it best illustrates the strategic issues at the core of our analysis. Proposer's first best---i.e., his optimum under complete information subject to Vetoer's approval---is action $1$ from type $h$ and action $2l$ from type $l$. The assumption that $h<2l$ implies that Vetoer of type $h$ prefers $2l$ to $1$ and so this first-best allocation cannot be implemented under incomplete information.

\paragraph{A Static Benchmark:} We begin our analysis with a useful benchmark. Consider a static (one-period) problem in which Proposer selects a menu of actions from which Vetoer can choose (if she opts to not exercise her veto);
equivalently, Proposer offers a deterministic mechanism or delegation set. In this problem, Proposer’s linear loss utility implies that he either pools both types with the singleton menu $\{2l\}$ or separates them using the menu $\{a^*,1\}$, where $a^*:=2h-1$ makes type $h$ indifferent between action $1$ and action $a^*$.\footnote{To see why optimal separation is via $\{a^*,1\}$, suppose separation is better than pooling and allocation $\{a^l,a^h\}$ with $a^l<a^h$ is an optimal separating allocation. It must be that $a^h>2l$; otherwise, pooling on $2l$ would be strictly better for Proposer. Hence, $a^l<h$; otherwise, both types would strictly prefer $a^l$. Consequently, each type $i\in\{l,h\}$ receives $a^i$. Incentive compatibility (IC) implies $a^l\leq 2h-a^h$; if this inequality is strict, raising $a^l$ a little preserves IC and is strictly profitable for Proposer. So $a^l=2h-a^h$, and it follows that only $a^h=1$ (which implies $a^l=a^*$) maximizes Proposer's payoff.} Separation is optimal whenever $\mu_0>\mu^*$, where $\mu^*$ is defined by
\begin{equation}
u(2l)=(1-\mu^*)u(a^*)+\mu^*u(1),
\label{eq:eg-sepoptimal}
\end{equation}
and pooling is optimal otherwise.
We refer to $\max\{u(2l),(1-\mu_0)u(a^*)+\mu_0 u(1)\}$ as Proposer's \emph{delegation payoff}. 

It is straightforward that when players are patient, Proposer can achieve approximately the delegation payoff in our sequential bargaining game if he could commit to a strategy.\footnote{Our analysis in \autoref{sec:general} shows that under certain conditions, the delegation payoff is in fact an upper bound on Proposer’s payoff in the dynamic game, even with commitment power. But those conditions ensure that delegation---a deterministic mechanism---is optimal in the static problem among stochastic mechanisms, which is not true in this example because of Vetoer's linear loss utility and discrete types. See also \autoref{fn:delayhelpsineg}.} But can Proposer achieve (approximately) the delegation payoff without commitment power?

\paragraph{The Sequential Rationality Problem:}  The difficulty when separation is optimal is that of \emph{Coasian dynamics}, which suggest the impossibility of screening Vetoer types when players are patient \citep*[e.g.,][]{FLT:85,GSW:86}, given that type $h$ prefers $l$'s allocation to her own. Specifically, if Proposer secures agreement initially (even with only high probability) from type $h$ on an action close to $1$, sequential rationality then impels him to offer $2l$ to reach an agreement immediately with type $l$. But anticipating the offer of $2l$, a patient type $h$ would not accept the initial high action. Indeed, it can be shown that in any equilibrium in which the on-path sequence of offers is decreasing---which guarantees that agreement is first secured with type $h$---Proposer's payoff at the patient limit is no higher than from pooling both types on action $2l$.  
This payoff is strictly below, and possibly far from, the delegation payoff when separation is optimal.

\paragraph{The Leapfrogging Solution:} Our key insight is that Coasian dynamics can be negated by \emph{leapfrogging}, i.e., making an offer that is accepted by the low type and rejected by the high type. Specifically,  
Proposer can first propose an action close to $a^*$ that is accepted only by type $l$. Upon rejection, Proposer credibly offers action $1$ ever after. In other words, leapfrogging uses a low action to first target the low type so that Proposer can subsequently extract a high action from the high type; crucially, at the latter stage, Proposer is no longer constrained by sequential rationality to moderate his offer if it is rejected. We highlight that it is Vetoer's single-peaked preferences that permit offers that type $l$ is willing to accept but type $h$ is not. 

We now make precise how Proposer can exploit leapfrogging with a succinct argument that presumes equilibrium existence. We argue that if separation is optimal, there is an equilibrium in which Proposer achieves  approximately the delegation payoff, at least. (Here and subsequently, we sometimes leave implicit that statements should be understood as holding for large $\delta$.) Let $a^{\delta}:=\delta a^*=\delta(2h-1)$ be the action below $h$ that makes type $h$ indifferent between obtaining action $1$ in the next period 
and obtaining action $a^{\delta}$ in the current period. Assume we are given an equilibrium. Modify that equilibrium to obtain a new equilibrium with strategy profile $\sigma$ and beliefs $\mu$ as follows:
\begin{enumerate}
\item \label{eg:nonconstr1}
if Proposer offers $a^{\delta}$ in the first period, type $l$ accepts and type $h$ rejects. After a first-period rejection of $a^\delta$, Proposer's belief assigns probability 1 to type $h$, and so he proposes $1$ in all future periods; in these periods, type $h$ accepts any proposal in $[a^\delta,1]$ and rejects all others, and type $l$ accepts any proposal in $[0,2l]$ and rejects all others;

\item \label{eg:nonconstr2} if Proposer offers $a\neq a^{\delta}$ in the first period, continuation play follows the original equilibrium;

\item \label{eg:nonconstr3} in the first period, Proposer chooses a proposal that maximizes his expected payoff.\footnote{We can assume a maximizer exists: if one doesn't, it must be that in the original equilibrium it is optimal for Proposer to choose $a^\delta$ in the first period, with a payoff larger than $(1-\mu_0)u(a^\delta)+\delta \mu_0 u(1)$; so the original equilibrium itself yields at least approximately the delegation payoff.}

\end{enumerate}

Point \ref{eg:nonconstr1} above implies that we have an equilibrium in the continuation game after a first-period proposal of $a^\delta$ is rejected. It follows from Points \ref{eg:nonconstr2} and \ref{eg:nonconstr3} that $(\sigma,\mu)$ is an equilibrium. In this equilibrium, either Proposer leapfrogs by offering $a^\delta$ in the first period which is accepted by type $l$, followed by action $1$ being accepted by type $h$ in the second period, or Proposer obtains an even higher payoff by proposing something different in the first period. When $\delta$ is close to $1$, $a^{\delta}$ is close to $a^*$ and Proposer's equilibrium payoff is close to the delegation payoff or even higher.

When separation is optimal, this argument shows that the \emph{option} to leapfrog yields Proposer approximately his delegation payoff or higher. But it does not establish that leapfrogging actually occurs, and it presumes equilibrium existence. We now turn to a full-fledged equilibrium construction that features leapfrogging; the construction also describes an equilibrium when pooling is optimal. 

\begin{proposition}
\label{prop:twotypes}
When $\delta$ is large, for any $\mu_0$ there is an equilibrium in which Proposer's payoff is approximately his delegation payoff.\footnote{More precisely: letting $u^d$ denote the delegation payoff, for all $\epsilon>0$ there is $\underline \delta<1$ such that for any $\delta>\underline \delta$ and for all $\mu_0$, there is an equilibrium in which Proposer's payoff is at least $u^d-\epsilon$.} 
In particular, there exist $\mu^\delta$ and $\bar\mu^\delta$, with $0<\mu^*<\mu^\delta<\bar\mu^\delta<1$, such that at $(\mu_0,\delta)$ there is an equilibrium  with on-path behavior as follows:
\begin{enumerate}[label={(\alph*)}]
\item \label{prop:twotypes1} (Skimming.) If $\mu_0 < \mu^\delta$, Proposer offers a finite sequence of actions that decreases to $2l$. Each offer strictly higher than $2l$ is accepted with positive probability by type $h$ and rejected by $l$. 

\item \label{prop:twotypes2} (Leapfrogging.) If $\mu_0 \in \left( \mu^\delta, \bar \mu^\delta\right)$, Proposer offers action $a^\delta$ in the first period, which is accepted by type $l$ and rejected by $h$; in the second period Proposer offers action $1$, which is accepted by type $h$.

\item \label{prop:twotypes3} (Delayed Leapfrogging.) If $\mu_0 > \bar \mu^\delta$, Proposer offers action $1$ in the first period, which is accepted with positive probability by type $h$ and rejected by $l$; in the second period Proposer randomizes between skimming and leapfrogging (parts \ref{prop:twotypes1} and \ref{prop:twotypes2}, respectively).
\end{enumerate}
\end{proposition}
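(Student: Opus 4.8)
The plan is to assemble the equilibrium in each regime from two reusable continuation equilibria, indexed by the posterior probability $\mu$ that Vetoer is type $h$: a \emph{skimming} continuation (low Proposer payoff) and a \emph{leapfrogging} continuation. Because payoffs are discounted, the game is continuous at infinity, so I would invoke the one-shot deviation principle and check only one-period incentives throughout. First I would record the elementary facts the construction rests on: type $l$ weakly accepts exactly the actions in $[0,2l]$ and type $h$ weakly accepts all of $[0,1]$; type $h$ is exactly indifferent between accepting $1$ next period and accepting $a^\delta=\delta(2h-1)$ now, and between $1$ and $a^*=2h-1$ within a period; and $u(2l)=2l>u(a^*)=2h-1$, with $a^\delta<a^*<2l$. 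These last inequalities are what make leapfrogging feasible: at $a^\delta$ type $l$ strictly gains, while type $h$ is held exactly to its one-period-delayed value of action $1$.

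The backbone is the skimming continuation, which I would construct for every belief below a pooling threshold. I would define a decreasing sequence of offers $a_0>a_1>\cdots>2l$ by the type-$h$ indifference $2h-a_t=\delta(2h-a_{t+1})$ (note every such offer exceeds $h$, so type $h$'s flow value is $2h-a_t$), with terminal offer $2l$ accepted by both types, and I would choose type $h$'s acceptance probabilities so that the posterior on $h$ declines step by step into the pooling region while type $l$ strictly rejects every offer above $2l$. The delicate step is Proposer's sequential rationality: I must confirm that at each belief the prescribed screening offer is at least as good as jumping immediately to $2l$ or to any other action. I would then argue, by gap-case logic (Proposer strictly prefers agreement at $2l$ to the status quo), that the number of screening periods is bounded uniformly in $\delta$, that every offer converges to $2l$ as $\delta\to1$, and hence that the skimming payoff converges to $u(2l)$ uniformly over the relevant beliefs.

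With the backbone in hand I would turn to the three regimes. For leapfrogging (part (b)) Proposer offers $a^\delta$, type $l$ accepts and the indifferent type $h$ rejects; after rejection the belief jumps to $h$ and Proposer offers $1$ forever, which type $h$ accepts, yielding payoff $(1-\mu_0)u(a^\delta)+\mu_0\delta u(1)$. I would define $\mu^\delta$ as the belief at which this equals the skimming payoff and $\bar\mu^\delta$ as the belief above which offering $1$ immediately dominates leapfrogging for Proposer; a limit argument then gives $0<\mu^*<\mu^\delta<\bar\mu^\delta<1$ with $\mu^\delta\to\mu^*$. For delayed leapfrogging (part (c)) Proposer offers $1$, type $l$ rejects, and type $h$ accepts with a probability chosen so that the posterior lands exactly at $\bar\mu^\delta$. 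To keep type $h$ willing to mix, its continuation value from period $2$ must equal $(2h-1)/\delta$; this number lies strictly between the low value type $h$ obtains from the leapfrogging continuation, $\delta(2h-1)$, and the higher value it obtains from the skimming continuation, approximately $2h-2l$ for large $\delta$. Hence a suitable randomization between the two continuations---legitimate precisely because at $\bar\mu^\delta$ Proposer is indifferent between them---delivers exactly the required value.

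Finally I would glue the pieces and verify the payoff bound. After any unexpected first-period offer I would prescribe the skimming continuation at the Bayes-updated posterior (or a belief jump to $h$), so deviations are met by the low-payoff equilibrium; Proposer's lack of a profitable one-shot deviation in period $1$ then follows in each regime because the thresholds were defined by Proposer's own indifference. For the payoff claim: in (b) the payoff converges to the separation delegation payoff $(1-\mu_0)u(a^*)+\mu_0u(1)$; in (c) it weakly exceeds the leapfrogging payoff and so is also at least $u^{d}-\epsilon$ (where $u^d$ is the delegation payoff); and in (a), since $u^{d}$ is Lipschitz in $\mu_0$ with $u^{d}(\mu^*)=u(2l)$ and $\mu^\delta\to\mu^*$, we have $u^{d}(\mu_0)\le u(2l)+o(1)$ for $\mu_0<\mu^\delta$ while the skimming payoff is at least $u(2l)-o(1)$, which combine to give $\ge u^{d}-\epsilon$ uniformly. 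I expect the two hardest steps to be establishing Proposer's sequential rationality throughout the skimming phase (the Coasian backbone) and simultaneously reconciling the three conditions in part (c)---the target posterior $\bar\mu^\delta$, type $h$'s indifference, and Proposer's indifference between the two continuations.
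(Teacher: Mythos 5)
Your architecture is essentially the paper's: a seller--buyer-style skimming backbone valid for beliefs below $\mu^\delta$ (the indifference belief between skimming and leapfrogging), on-path leapfrogging for intermediate beliefs, and delayed leapfrogging with type $h$ mixing for high beliefs. But part (c) as you specify it would fail. You have type $h$ reject the first-period offer of $1$ with a probability that lands the posterior ``exactly at $\bar\mu^\delta$,'' and you justify Proposer's second-period randomization between skimming and leapfrogging ``because at $\bar\mu^\delta$ Proposer is indifferent between them.'' That is not what $\bar\mu^\delta$ is: by your own definition it equates two \emph{first-period} options (offering $1$ versus offering $a^\delta$), whereas the belief at which Proposer is indifferent between the skimming and leapfrogging \emph{continuations} is $\mu^\delta$. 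Since leapfrogging's advantage over skimming is strictly increasing in the belief (Proposer prefers skimming against $l$ and leapfrogging against $h$), at any posterior $\bar\mu^\delta>\mu^\delta$ he strictly prefers to leapfrog and will not mix. Without Proposer's mixing, type $h$'s period-2 continuation value is $\delta(2h-1)$, strictly below the $(2h-1)/\delta$ you correctly identify as necessary, so $h$ strictly prefers to accept $1$ in period 1---which destroys the very rejection probability that was supposed to generate the posterior; the construction unravels. The paper's resolution is the one you almost wrote: $h$ rejects with probability $r^\delta(\mu_0)=\frac{\mu^\delta(1-\mu_0)}{(1-\mu^\delta)\mu_0}$ so the posterior lands at $\mu^\delta$, where Proposer is genuinely indifferent and can randomize with the weight $\lambda$ delivering $h$'s value $(2h-1)/\delta$; your sandwich check (this value lies strictly between the leapfrogging continuation value $\delta(2h-1)$ and the skimming value $u_h^*\approx 2h-2l$, using $2l<1$) is exactly the right existence argument for $\lambda$. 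The threshold $\bar\mu^\delta$ is then \emph{defined} by Proposer's first-period indifference between $a^\delta$ and $1$ given that fixed continuation, which also removes the circularity in your version, where the payoff from offering $1$ depends on a continuation anchored at $\bar\mu^\delta$ itself.

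A secondary gap is the off-path treatment in parts (b) and (c): ``prescribe the skimming continuation at the Bayes-updated posterior'' is not available for deviations to offers in $(2l,1]$ when $\mu_0>\mu^\delta$, because the skimming equilibrium deters the leapfrog deviation only for beliefs at most $\mu^\delta$, and $h$'s response to a deviant offer must itself be a mixed rejection pinned down by an indifference condition. The paper needs two extra devices here that your sketch omits: for deviations $a_0\in(2l,\bar a^\delta]$, where $\bar a^\delta$ makes $h$ indifferent between accepting now and receiving the skimming value $u_h^*$ next period, type $h$'s rejection probability lands the posterior at an interior belief $\mu^n\le\mu^\delta$ at which Proposer is indifferent between starting the skimming sequence at $a^n$ or $a^{n-1}$, and Proposer randomizes over those two starting offers to sustain $h$'s mixing; for deviations $a_0\in(\bar a^\delta,1]$, the same land-at-$\mu^\delta$ device as above applies, after which one verifies that $a_0=1$ is the most profitable such deviation---which is precisely the computation that makes $\bar\mu^\delta$ the correct boundary between your regimes (b) and (c).
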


(All proofs of formal results are in the Appendices.)

Case \ref{prop:twotypes1} of \autoref{prop:twotypes} concerns low priors.  Here we construct a \emph{skimming equilibrium} in which Proposer begins with an offer exceeding $2l$ but compromises to lower actions following rejections. As $\delta\rightarrow 1$, Proposer's payoff converges to the pooling payoff, $u(2l)$, from the static benchmark; moreover, $\mu^\delta$ also converges to $\mu^*$, and so for all priors less than $\mu^*$, Proposer is obtaining approximately his delegation payoff. The skimming equilibrium adapts a construction that is standard in seller-buyer bargaining  (\citealp{Hart:89}; \citealp[pp.~409--10]{FT:91}). However, there are novel considerations in deterring Proposer from offering actions lower than $2l$. In our construction, the most attractive deviation is leapfrogging, wherein Proposer first offers $a^\delta$ to secure acceptance from type $l$ and then extracts action $1$ from type $h$. Such deviations are profitable when type $h$ is sufficiently likely, which explains why our construction is an equilibrium only for a low prior (whereas in seller-buyer bargaining, the analogous equilibrium exists for all priors because no buyer type would wait for a higher price). The threshold $\mu^{\delta}$ is the (lowest) belief at which Proposer is indifferent between skimming and leapfrogging. 

\autoref{prop:twotypes}\ref{prop:twotypes2} and \ref{prop:twotypes3} are the main cases of interest, because here the prior is such that separation is optimal in the static benchmark. 
In Case \ref{prop:twotypes2}, Proposer leapfrogs at the outset, securing action $a^\delta$ from type $l$ in the first period and then action $1$ from type $h$ in the second period. As $\delta \to 1$, $a^\delta \to a^*$ and Proposer obtains his delegation payoff. The challenge with supporting leapfrogging is ensuring that Proposer does not deviate to a high offer in the first period. Such a deviation (if accepted with sufficient probability by type $h$) would be profitable if the prior is too large. 
The precise threshold $\bar \mu^\delta$ is determined by Proposer's indifference between leapfrogging and the most attractive deviation, which is an offer of $1$. In equilibrium this offer is accepted by type $h$ only with some probability, which brings Proposer's belief upon rejection down to the threshold $\mu^\delta$ described in the previous paragraph, so that Proposer then randomizes between skimming and leapfrogging in a manner that justifies $h$'s randomization. 
The full construction of the leapfrogging equilibrium is fairly involved; \autoref{Figure-Leapfrogging} summarizes, with details provided in the formal proof.

\begin{figure}[t]
\centering
{	\begin{tikzpicture}[scale=.81]
        \coordinate (zero) at (0,0);
        \coordinate (one) at (20,0);
        \coordinate (h) at (12,0);
        \coordinate (2l) at (14,0);
        \coordinate (adelta) at ($2*(h)-(one)-(1,0)$);
        \coordinate (abardelta) at ($(2l)+(2,0)$);
        
        \coordinate (ticklow) at (0,.-.2);
        \coordinate (tickhigh) at (0,.5);
        \coordinate (label) at (0,-.6);
        
		\draw[ultra thick] (zero) -- (one);
		\draw[gray] ($(adelta)+(ticklow)$)--($(adelta)+(tickhigh)$);
		\node at ($(adelta)+(label)$) {$a^\delta$};
		\draw[gray] ($(h)+(ticklow)$)--($(h)$);
		\node[blue] at ($(h)+(label)$) {$h$};
		\draw[gray] ($(2l)+(ticklow)$)--($(2l)+(tickhigh)$);
		\node at ($(2l)+(label)$) {$2l$};
	    \draw[gray] ($(abardelta)+(ticklow)$)--($(abardelta)+(tickhigh)$);
		\node at ($(abardelta)+(label)$) {$\bar{a}^\delta$};
		\draw[gray] ($(zero)+(ticklow)$)--($(zero)+(tickhigh)$);
		\node at ($(zero)+(label)$) {$0$};
		\draw[gray] ($(one)+(ticklow)$)--($(one)+(tickhigh)$);
		\node at ($(one)+(label)$) {$1$};
 		\node at ($.5*($(zero)+(tickhigh)+(adelta)+(tickhigh)$)$){{\footnotesize $I$}};
 		\node at ($.5*($(adelta)+(tickhigh)+(2l)+(tickhigh)$)$){{\footnotesize $II$}};
 		\node at ($.5*($(2l)+(tickhigh)+(abardelta)+(tickhigh)$)$){{\footnotesize $III$}}; 		
 		\node at ($.5*($(abardelta)+(tickhigh)+(one)+(tickhigh)$)$){{\footnotesize $IV$}}; 		
 	\end{tikzpicture}	
	}
\vspace{-0.1in}
\caption{
Proposer's first-period incentives in the equilibrium for \Cref{prop:twotypes}{\ref{prop:twotypes2}} and \ref{prop:twotypes3}.  Offers in Region $I$ (including $a^\delta$) are accepted only by type $l$; action $1$ is then offered and accepted by $h$. Offers in Region $II$ are accepted by both types. Offers in Region $III$ are accepted with some probability by  $h$ and rejected by $l$; rejection leads to a belief lower than $\mu^\delta$, whereafter there is a (suitably randomized) skimming equilibrium. Action $\bar a^\delta$ makes type $h$ indifferent between accepting $\bar a^\delta$ now and waiting one period to play \Cref{prop:twotypes}\ref{prop:twotypes1}'s skimming equilibrium under belief $\mu^\delta$. Offers in Region $IV$ are accepted by $h$ with some probability and rejected by $l$; rejection leads to belief $\mu^\delta$, whereafter Proposer mixes between skimming and leapfrogging. For any prior $\mu_0>\mu^\delta$, Proposer's optimal offer is either $a^\delta$ or $1$. Belief $\bar \mu^\delta$ is defined by Proposer's indifference between these two offers. Hence $\mu_0\in (\mu^\delta,\bar \mu^\delta)$ leads to leapfrogging (\Cref{prop:twotypes}\ref{prop:twotypes2}), whereas $\mu_0 > \bar \mu^\delta$ leads to a positive probability of delayed leapfrogging (\Cref{prop:twotypes}\ref{prop:twotypes3}).
}

\label{Figure-Leapfrogging}
\end{figure}
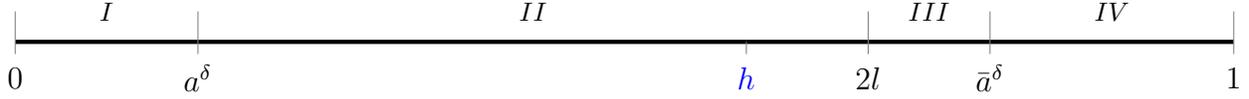

Finally, \autoref{prop:twotypes}\ref{prop:twotypes3} concerns the case of high priors, where leapfrogging from the outset cannot be sustained due to Proposer's strong incentive to secure agreement in the first period with the high type on a high action. 
Instead we have delayed and only probabilistic leapfrogging. As foreshadowed in the previous paragraph, now Proposer actually offers action $1$ in the first period, which is accepted by type $h$ with positive probability; upon rejection, Proposer randomizes in the second period between skimming and leapfrogging. Since Proposer is indifferent in the second period, his payoff is as if he always leapfrogs then, and his payoff therefore converges to the delegation payoff as $\delta \to 1$.

It is worth noting that although Cases \ref{prop:twotypes2} and \ref{prop:twotypes3} of \autoref{prop:twotypes} yield Proposer an identical payoff at the patient limit, both cases remain relevant even at that limit: $\lim_{\delta\to 1} \mu^\delta<\lim_{\delta\to 1} \bar \mu^\delta<1$ (see \autoref{fn:mubardelta_limit} in the appendix).
% \footnote{
% Using the definition of $\bar\mu^{\delta}$ and $r^{\delta}(\mu)$, we get
% \begin{align*}
% \bar\mu^{\delta} &= \frac{(1-\delta) u(a^{\delta}) +\frac{\mu^{\delta}}{1-\mu^{\delta}}u(1)(1-\delta^2) }{ (1-\delta)[u(a^{\delta})+ u(1)] + \frac{\mu^{\delta}}{1-\mu^{\delta}}u(1)(1-\delta^2) }\\
% &= \frac{ u(a^{\delta})(1-\mu^{\delta}) +\mu^{\delta}u(1)(1+\delta) }{ [u(a^{\delta})+ u(1)](1-\mu^{\delta}) + \mu^{\delta}u(1)(1+\delta)}.
% \end{align*}
% Therefore, 
% \begin{align*}
%     \lim_{\delta\to 1}\bar \mu^{\delta} &= \frac{ u(a^{*})(1-\mu^{*}) +\mu^{*}u(1)2 }{ [u(a^{*})+ u(1)](1-\mu^{*}) + \mu^{*}u(1)2}\\
%     &= \frac{u(2l)+u(1)\mu^*}{u(2l)+u(1)}.
% \end{align*}
% Therefore, $[u(2l)+u(1)]\lim \bar\mu^{\delta} = u(2l)+u(1)\mu^*$, and assuming $\lim\bar\mu^{\delta} =\mu^*$ leads to $\mu^*<1$, a contradiction. Since $\bar \mu^{\delta}>\mu^{\delta}$ for all $\delta<1$ and $\lim \mu^{\delta}=\mu^*$, this yields $\lim \bar\mu^{\delta}<\mu^*$.
%} 
Moreover, since Proposer's delegation payoff becomes arbitrarily close to his complete-information payoff as $\mu_0\to 1$, \autoref{prop:twotypes} implies that there are equilibria in which Proposer's payoff at the patient limit is continuous in the prior even when the probability of type $l$ vanishes.\footnote{More precisely: $\lim_{\mu_0 \to 1} \lim_{\delta \to 1} U(\mu_0,\delta)=u(1)$, where $U(\mu_0,\delta)$ denotes Proposer's payoff in the equilibrium constructed in \autoref{prop:twotypes} for the belief $\mu_0$ and discount factor $\delta$.} By contrast, in seller-buyer bargaining, in any equilibrium (of the ``gap case''), the uninformed seller's payoff in the patient limit drops discontinuously when he ascribes any positive probability to the low-value buyer.

\paragraph{Limitations:} Although this example conveys the logic of leapfrogging and how Proposer can exploit it, there are two interrelated limitations. First, it is difficult to determine whether there are equilibria that are even better (or worse) for Proposer than that identified in \autoref{prop:twotypes}.
Second, while the delegation payoff provides a high target for Proposer, a more compelling benchmark is Proposer's payoff if he can commit to his strategy in the sequential bargaining game. Indeed, in this example dynamic commitments can achieve more than the delegation payoff.\footnote{\label{fn:delayhelpsineg}Let $t$ be the earliest period such that type $h$ prefers agreement on action $1$ in the first period to agreement on $2l$ in period $t$. 
If Proposer offers $1$ up until period $t-1$  and offers $2l$ from period $t$ on, then it is optimal for type $h$ to accept $1$ in the first period and for type $l$ to accept $2l$ in period $t$. For large $\delta$, $h$ is approximately indifferent: $2h-1\approx \delta^t (2h-2l)$, or equivalently, $(2h-1)\frac{l}{h-l}\approx \delta^t 2l$. It follows that Proposer's payoff from dynamic commitment is at least
 $\mu_0 u(1) + (1-\mu_0) \delta^t u(2l) \approx \mu_0 u(1) + (1-\mu_0) u(2h-1) \frac{l}{h-l}$. This latter expression is strictly larger than Proposer's payoff from the menu $\{a^*,1\}$ because $a^*\equiv 2h-1$ and $\frac{l}{h-l}>1$ (as $2l>h$ by assumption).
That dynamic commitment  strictly improves on the delegation payoff implies that the optimal static mechanism in this example must be stochastic (see \Cref{lemma:static_provides_upper_bound} below).
}  The following section addresses these issues by identifying assumptions within our general model such that Proposer (approximately) achieves his dynamic commitment payoff in an equilibrium.

\section{General Analysis}

\label{sec:general}

We hereafter assume Proposer's utility function $u(a)$ is concave and Vetoer's is 
\[u_V(a,v)=-(v-a)^2+v^2,\]
which is the standard quadratic loss function with our normalization that Vetoer's payoff from the status quo is $0$.
We also assume Vetoer's type is distributed according to a cumulative distribution $F \in \mathcal F$, where $\mathcal F$ is the set of distributions with interval support that admit a density that is bounded away from both $0$ and $\infty$ on the support. We denote the support of $F$ by $[\uv,\ov]$.
For this section alone, we assume that $\overline v \leq 1$, i.e., Vetoer's ideal point is always lower than Proposer's. We do not view this restriction as critical; indeed, our equilibrium constructions in \autoref{sec:equilibria} dispense with it. Note that we allow for $\ov \leq 1/2$, which is tantamount to Proposer having monotonic preferences.

Vetoer's quadratic loss function assures single-crossing expectational differences (SCED) as defined by \citet{KLR:19}: for any two lotteries over time-stamped actions---pairs $(a,t)$ representing agreement on action $a$ at time $t$, with $t=\infty$ capturing no agreement---their expected utility difference is single crossing in Vetoer's type $v$.\footnote{This is because the utility from any lottery over time-stamped actions is $-\E_{(a,t)}[\delta^t a^2]+2v\E_{(a,t)}[\delta^t a]$, which is linear in $v$. More generally, if $u_V(a,t)$ has SCED for non-time-stamped action lotteries (i.e., lotteries over actions within single period), then it follows from \citet*[Theorem 4]{KLR:19} that SCED will also hold over time-stamped action lotteries. We assume quadratic loss because of some additional tractability, but believe that our results would extend under SCED with weaker assumptions such as smoothness and concavity around the ideal point.} This single-crossing property will play an essential role because it guarantees ``interval choice'' \citep*[Theorem 1]{KLR:19}: given any Proposer strategy, at every history the set of types that find it optimal to accept the current offer is an interval.

\subsection{A Static Problem}
\label{sec:staticproblem}
We define an auxiliary static mechanism design problem that will turn out to provide
a tight upper bound on payoffs in the dynamic game. In this auxiliary problem, a (direct, stochastic) mechanism assigns each type a lottery over actions. Formally, a \emph{mechanism} $m$ is a measurable function $m:[\underline v,\overline v]\rightarrow M_0(\R)$, where $M_0(\R)$ is the set of probability distributions on $\R$ with finite expectation and finite variance. For notational convenience we write $m(v)=a$ when $m(v)$ puts probability $1$ on action $a$ and also extend the domain of Proposer's utility $u$ to include lotteries: $u(m(v)) := \E_{m(v)}[u(a)]$.   
A mechanism $m$ is \emph{incentive compatible} if every Vetoer type $v$ prefers $m(v)$ to $m(v')$ for all $v'$. It is \emph{individually rational} if every type $v$ prefers $m(v)$ to action $0$. Let $\mathcal S$ denote the set of incentive compatible and individually rational mechanisms.\footnote{More precisely, any $m \in \mathcal S$ must also be such that $v\mapsto \E_{m(v)}[u(a)]$ is integrable.}  Proposer's \emph{static problem} is:
\begin{align*}\label{e:static_problem}
\max_{m\in \mathcal{S}} \int u(m(v)) \mathrm dF(v). \end{align*}
We denote Proposer's maximum value by $U(F)$.

Any incentive compatible and individually rational mechanism that assigns every type a deterministic action can be implemented as a \emph{delegation set}: Proposer chooses a subset $A\subseteq \mathbb{R}$ and Vetoer is allowed to pick any action in $A\cup\{0\}$. We say that an \emph{interval delegation set} is optimal if a solution to the static problem can be implemented by delegating an interval $[c^*,1]$ for some $c^*\in[0,1]$. Our analysis below assumes environments in which such simple mechanisms are optimal. That is, we maintain hereafter:

 \begin{assumption}\label{ass:interval_optimal}
For some $c^*\in [0,1]$, an interval delegation set $[c^*,1]$ solves Proposer's static problem.
 \end{assumption}

The static problem has been studied by \citet*{KKVW:21}. Among other things, they motivate interval delegation and investigate when it is optimal. Their Corollary 3 establishes that sufficient conditions for  \autoref{ass:interval_optimal} are that Proposer's utility $u$ is a linear or quadratic loss function (or a combination thereof) and Vetoer's type density $f$ is logconcave.\footnote{While that paper maintains some assumptions on the type distribution that we don't assume, those assumptions are not needed for its sufficient conditions for optimality of interval delegation. We also note that the logic of Corollary 1 in that paper implies that the interval delegation set $[\max\{0,2\uv\},1]$ is an optimal mechanism if $f$ is decreasing on $[\max\{0,\uv\},\ov]$, given only that $u$ is concave.} Many commonly used distributions have logconcave densities \citep{BagnoliBergstrom:05}.

\subsection{An Upper Bound on the Commitment Payoff}

In the static problem, Proposer screens different Vetoer types by exploiting their heterogeneous preferences over (distributions of) actions within a single period. In our dynamic environment, delay is an additional screening instrument. Nevertheless, Proposer can do no better in the dynamic game \emph{even if he could commit to his strategy}:

\begin{lemma}\label{lemma:static_provides_upper_bound}
There is no Proposer strategy and Vetoer best response that yield Proposer a payoff strictly higher than $U(F)$.
\end{lemma}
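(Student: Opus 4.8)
The plan is to show that any commitment strategy for Proposer, together with a Vetoer best response, induces an outcome that a feasible static mechanism can replicate, so its value cannot exceed $U(F)$. Fix such a strategy profile. Because the game ends only when Vetoer accepts, every non-terminal history is a sequence of rejected offers, and Proposer's offers therefore depend only on that public history and his own randomization---never directly on Vetoer's type. Given the profile, each type $v$ thus faces the same type-independent stochastic process of offers and chooses a (possibly randomized) stopping rule; this induces for each $v$ a distribution $\pi_v$ over time-stamped outcomes $(a,t)$, where $t=\infty$ (with the convention $\delta^\infty=0$) records perpetual disagreement. Proposer's dynamic payoff is $\int \E_{\pi_v}[\delta^t u(a)]\,\mathrm dF(v)$, and I will build $m\in\mathcal S$ with value equal to this quantity.

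The key reduction replaces delay by within-period randomization. To each time-stamped action $(a,t)$ I associate the static lottery $L_{a,t}$ that delivers $a$ with probability $\delta^t$ and the status quo $0$ with probability $1-\delta^t$, and define the candidate mechanism $m(v):=\E_{\pi_v}[L_{a,t}]$. Exploiting the normalizations $u(0)=0$ and $u_V(0,v)=0$, the lottery $L_{a,t}$ yields Proposer exactly $\delta^t u(a)+(1-\delta^t)u(0)=\delta^t u(a)$ and yields Vetoer of type $v$ exactly $\delta^t u_V(a,v)$. Consequently $m$ preserves, type by type, both Proposer's payoff contribution and Vetoer's utility: $u(m(v))=\E_{\pi_v}[\delta^t u(a)]$ and, for every pair $v,v'$, type $v$'s utility from $m(v')$ equals $\E_{\pi_{v'}}[\delta^t u_V(a,v)]$. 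Since any accepted action satisfies $u_V(a,v)\ge 0$ and hence lies in $[0,2\ov]$, the support of $m(v)$ is bounded, so $m(v)\in M_0(\R)$ and the relevant integrals are finite.

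It remains to verify $m\in\mathcal S$. For individual rationality, rejecting forever guarantees payoff $0$, so $\E_{\pi_v}[\delta^t u_V(a,v)]\ge 0=u_V(0,v)$, which by the preceding identity is exactly static IR. For incentive compatibility I would use a mimicking argument: since the offer process is type-independent, type $v$ can adopt type $v'$'s stopping rule; coupling the two plays, both reject every offer up to the (common) stopping time and hence face identical offer realizations, so type $v$ thereby obtains the very distribution $\pi_{v'}$. Optimality of type $v$'s own behavior then gives $\E_{\pi_v}[\delta^t u_V(a,v)]\ge \E_{\pi_{v'}}[\delta^t u_V(a,v)]$, which is precisely static IC for $m$. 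Therefore $m\in\mathcal S$ and $\int u(m(v))\,\mathrm dF(v)$ equals Proposer's dynamic payoff, so the latter is at most $U(F)$.

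The main obstacle is the incentive-compatibility step: making the mimicking argument rigorous requires carefully coupling the randomizations of Proposer's offers and Vetoer's acceptance so that the deviating type genuinely reproduces $\pi_{v'}$, and one must also confirm measurability of $v\mapsto\pi_v$ (hence of $m$) together with the requisite integrability. By contrast, the delay-elimination step is robust: it uses only the status-quo normalizations, not the quadratic form of $u_V$---the quadratic structure and \autoref{ass:interval_optimal} matter for the subsequent achievability result rather than for this upper bound.
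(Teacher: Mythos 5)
Your proposal is correct and follows essentially the same route as the paper's proof: you map each agreement $(a,t)$ to the static lottery giving $a$ with probability $\delta^t$ and the status quo $0$ otherwise, note this preserves both players' payoffs type by type thanks to the status-quo normalizations, and derive IC and IR from the fact that Vetoer is best responding against a type-independent offer process (the paper states this mimicking step in one line; your coupling argument just makes it explicit). The only quibble is that with $\uv<0$ an accepted action may be negative, so the accepted actions lie in $\left[\min\{2\uv,0\},\max\{2\ov,0\}\right]$ rather than $[0,2\ov]$---still bounded, so your conclusion that $m(v)\in M_0(\R)$ is unaffected.
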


The idea behind this result is straightforward, and familiar in the seller-buyer bargaining literature \citep[e.g.,][]{AD:89-Direct}: the outcome of any Proposer strategy and Vetoer best response can be replicated by a mechanism in the static problem. To elaborate, any Proposer strategy and Vetoer best response induce, for each Vetoer type, a probability distribution over agreements on time-stamped actions. We can transform any such distribution into a static lottery by mapping an agreement on action $a$ in period $t$ into a static lottery that gives action $a$ with probability $\delta^t$ and action $0$ with remaining probability. This transformation is payoff equivalent for Proposer and all Vetoer types. Therefore, the static mechanism induced by transforming each type's equilibrium distribution is incentive compatible and individually rational because Vetoer is playing a best response in the game, and the mechanism delivers Proposer the same payoff as in the game.

We highlight that while it is crucial that the static problem allow for stochastic mechanisms, the argument for \autoref{lemma:static_provides_upper_bound} does not require any assumption on either player's preferences beyond discounted expected utility 
with a common discount factor. Furthermore, the argument only uses the distribution of agreement times and actions for each type and the requirement that Vetoer is best responding to Proposer, nothing more about the game form. It follows that the static problem provides an upper bound on Proposer's commitment payoff in the dynamic game even if Proposer could, in any period, offer a menu of (possibly stochastic) actions, allow Vetoer to send cheap-talk messages, or engage in other complex protocols. Indeed, any incentive compatible and individually rational mechanism that assigns each type a lottery over time-stamped actions yields Proposer a payoff at most $U(F)$.

\subsection{Obtaining the Commitment Payoff without Commitment}
\label{sec:mainresult}

In light of \autoref{lemma:static_provides_upper_bound}, we say that \emph{Proposer can achieve approximately his commitment payoff for a belief $F'$} if given the belief $F'$ (at some history), for every $\epsilon>0$ there is $\underline \delta<1$ such that for all $\delta>\underline\delta$, there is a (continuation) equilibrium in which Proposer's payoff is at least $U(F')-\epsilon$. For brevity, we say that \emph{Proposer can achieve approximately his commitment payoff} if he can approximately achieve his commitment payoff for the prior $F$.\footnote{To be clear: conceptually, by ``commitment payoff'' we have in mind Proposer's payoff if he could commit to a strategy in the dynamic game. But operationally, we refer to the static problem's payoff $U(F)$ as the commitment payoff because of \autoref{lemma:static_provides_upper_bound}, our focus on large $\delta$, and \autoref{prop:commitment_achievable_general} below.} Our main result, \autoref{prop:commitment_achievable_general} below, presumes:
\begin{equation}
\label{eqmexists}
\text{An equilibrium exists for all $\delta$ and all beliefs in $\mathcal F$.}
\tag{EqmExists}
\end{equation}
We view this presumption as benign, and we provide reasonable sufficient conditions for equilibrium existence in \autoref{sec:equilibria}. In particular, it is sufficient that $\uv \leq 0$, i.e., some Vetoer types prefer the status quo to any action Proposer prefers.

\begin{theorem}\label{prop:commitment_achievable_general}
Suppose \ref{eqmexists}.
Proposer can achieve approximately his commitment payoff.
\end{theorem}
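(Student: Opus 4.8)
The plan is to combine three ingredients: the characterization of the optimal static mechanism from \autoref{ass:interval_optimal}, the upper bound in \autoref{lemma:static_provides_upper_bound}, and an \emph{option}-based lower bound that mimics, for a continuum of types, the succinct two-type argument given above. Under \autoref{ass:interval_optimal} the target outcome is transparent: interval delegation $[c^*,1]$ gives every type $v\ge c^*$ its ideal action $v$, pools the types in $[\,\max\{\uv,c^*/2\},c^*]$ on the action $c^*$, and leaves types below $c^*/2$ at the status quo. Writing $m^*$ for this assignment, the goal is to exhibit, for each $\epsilon>0$ and all large $\delta$, an equilibrium in which Proposer's payoff is at least $\int u(m^*(v))\,dF(v)-\epsilon=U(F)-\epsilon$; \autoref{lemma:static_provides_upper_bound} then upgrades this lower bound to ``approximately $U(F)$.''

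First I would set up a \emph{leapfrogging-from-below} path. Fix a fine finite grid $c^*=w_0<w_1<\cdots<w_N=\ov$ of the separating region, with mesh small relative to $\epsilon$ and $N=N(\epsilon)$ chosen \emph{before} sending $\delta\to1$. On the candidate path Proposer first offers (a slightly discounted) $c^*$, which the pooling types accept while higher types reject and wait; thereafter, in successive periods, Proposer offers an increasing sequence $a_1,\dots,a_N$, where $a_k$ is calibrated so that the boundary type $w_k$ is indifferent between accepting now and waiting, so that exactly the cell $[w_{k-1},w_k]$ accepts $a_k$. Quadratic-loss utility delivers SCED and hence \emph{interval choice} \citep{KLR:19}, which guarantees the accepting set is an interval; the indifference equation $u_V(a_k,w_k)=\delta\cdot(\text{continuation value of }w_k)$, together with the fact that a cell's lower types are (off-path) believed to be high---and hence would face even higher, less attractive future offers---pins the cutoff at $w_k$ and makes the sub-cell types strictly accept. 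Because future offers only rise, each type $v$ receives an action within one mesh of its ideal, so the path reproduces $m^*$ up to the mesh; and since the grid is fixed, total delay is at most $N+1$ periods, whence $\delta^{N+1}\to1$ controls the discounting loss. Choosing the mesh small and then $\delta$ near $1$ makes the path's value exceed $U(F)-\epsilon$.

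To turn this path into an equilibrium without commitment I would argue by downward induction on the grid, paralleling the two-type construction: take an arbitrary equilibrium (which exists for every belief in $\mathcal F$ by \ref{eqmexists}); modify it so that, after the prescribed leapfrog offer at a belief truncated to $[w_k,\ov]$, continuation play is the inductively constructed equilibrium for the belief truncated to $[w_{k+1},\ov]$, while after any \emph{other} offer continuation play reverts to the arbitrary equilibrium for the induced belief; and let Proposer optimize in each period. The base case---the content I would package as a lemma on the optimal mechanism---is a belief concentrated near $\ov$, where simply offering $\approx\ov$ is accepted and delivers each type essentially its ideal, seeding property (ii) below. Because the leapfrog offer is always an \emph{available} action with a well-defined equilibrium continuation, Proposer's equilibrium payoff is bounded below by the leapfrog value; I never need the leapfrog offer to be uniquely optimal, and \autoref{lemma:static_provides_upper_bound} caps any better deviation at $U(F)$.

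The step I expect to be the crux is the coupling between two requirements on each inductive continuation: (i) it must give Proposer at least the leapfrog target for $[w_{k+1},\ov]$, and (ii) it must deliver to each waiting type $v>w_{k+1}$ a continuation value close to its ideal $v^2$, since it is exactly this that makes such types reject the low offer $a_{k+1}$ in the \emph{outer} step. The tension is that, for an upward-truncated belief, Proposer's static optimum can favor pooling high types on an action well above their ideals (e.g.\ pooling all of $[w,\ov]$ on $\ov$ once $w\ge\ov/2$), which would depress the waiting types' continuation values and could unravel the outer rejection incentive. Reconciling ``Proposer optimizes in the continuation'' with ``waiting types still obtain near-ideal values''---\emph{without} invoking a low-payoff punishment equilibrium---is precisely what the base-case lemma and the grid calibration must secure, and is where the specific structure of interval delegation (monotonicity of the optimal cutoff under truncation, and its implication that the lowest remaining type is peeled near its ideal) does the work. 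Two subsidiary technical points I would handle separately: the low types below $c^*/2$ (present when $\uv$ is small) never trade and rest at the status quo, matching $m^*$ while rendering Proposer's belief's support disconnected, so I would verify that these permanently-rejecting types do not affect the incentives of the trading types; and I would confirm that truncations of $F$ remain in $\mathcal F$ so that \ref{eqmexists} applies at every history.
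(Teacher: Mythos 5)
There is a genuine gap, and it sits exactly where you flagged ``the crux'': your induction runs in the wrong direction, and the tension you name cannot be resolved within your scheme. You induct on \emph{left}-truncations $F_{[w_k,\ov]}$ with an \emph{increasing} sequence of offers, so at every on-path belief the rejecting (waiting) types are the \emph{high} types, and their rejection of $a_k$ requires the continuation to deliver each of them a near-ideal value. But an option-based modification of an arbitrary equilibrium only bounds Proposer's payoff from \emph{below}; it gives no control over what Proposer actually does in the continuation, and it is precisely his actual continuation behavior that the waiting types' rejection incentives depend on. Worse, sequential rationality pushes against you: for a belief $F_{[w,\ov]}$ with $w>c^*/2$ there is no conditional-optimality result available (the paper's \autoref{lemma:opt_delegation_set} covers only beliefs $F_{[c,c']}$ with the \emph{lower} endpoint $c\le c^*/2$), and Proposer's optimum for such beliefs typically pools remaining types well above their ideals. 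Your proposed base case already fails concretely: with belief concentrated near $\ov$ (say support $[w,\ov]$, $w$ close to $\ov\le 1$), Proposer's sequentially rational offer is near $\min\{2w,1\}$, not near $\ov$, since individual rationality lets him extract up to $2v$ from type $v$; the waiting types then get value near zero rather than near $v^2$, so the boundary type in your outer step strictly prefers to grab the current low offer, and the calibrated cutoffs unravel. Since you explicitly forswear low-payoff punishment continuations, there is no instrument left to restore the rejection incentives.

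The paper's proof avoids this circularity by reversing the direction of everything. It inducts on \emph{right}-truncations $F_{[\uv,c]}$, with a \emph{decreasing} on-path sequence of high offers $a_{n-1}>c_{n-1}$ each accepted by the top interval of remaining types; the types who must reject are the \emph{low} types, and their rejection follows from SCED given only the single indifference of the cutoff type $c_{n-1}$ against the \emph{actual} continuation equilibrium $(\hat\sigma,\hat\mu)$, whatever its play --- no uniform ``near-ideal continuation value'' condition is ever needed. Because the lower endpoint of the belief's support stays at $\uv$ throughout, \autoref{lemma:opt_delegation_set} applies at every step, keeping $[c^*,1]$ conditionally optimal and hence (via \autoref{lemma:static_provides_upper_bound}) making the target continuation payoff consistent with Proposer's sequential rationality. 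The leapfrog then occurs at the \emph{base} of the induction (\autoref{lemma:induction_base}): at belief $F_{[\uv,c^*]}$ Proposer offers $0$, clearing types below $c^*/2$, after which the belief $F_{[c^*/2,c^*]}$ still lies in the region where $[c^*,1]$ is optimal and the follow-up offer $c^*$ is credible. In short: your grid, your use of \ref{eqmexists}-plus-modification, and your appeal to \autoref{lemma:static_provides_upper_bound} all match the paper's architecture, but the orientation (low offers first, beliefs truncated from below) breaks the one mechanism --- cutoff indifference plus conditional optimality on right-truncations --- that makes the paper's induction close.
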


Together,  \autoref{lemma:static_provides_upper_bound} and \autoref{prop:commitment_achievable_general} imply that, when players are patient, there are equilibria in which Proposer suffers (almost)
no loss from the inability to commit in the dynamic game. In particular, Proposer is not harmed by the ability to make sequential proposals; in fact, whenever the optimal delegation set has $c^*<1$, Proposer strictly benefits from that ability, as the outcome from that delegation set cannot be replicated with a single proposal.  
Moreover, Proposer's gain from the ability to offer a menu of actions, rather than a single action, in each period vanishes as $\delta \to 1$.

\autoref{prop:commitment_achievable_general}'s conclusion may be best appreciated when $c^*>\max\{0,2\uv\}$, say $0<2\uv<c^*$. In that case the result contrasts with the negative conclusion from Coasian dynamics: intuitively, if Proposer were to continually compromise starting from a high offer, sequential rationality would drive offers all the way down to $2\uv$; it would not be credible for Proposer to stop at $c^*$.

An intuition one might proffer for \Cref{prop:commitment_achievable_general} is that, when $\delta \approx 1$, Proposer can begin with an offer of action $0$---leapfrog---and then offer a decreasing sequence of actions along a fine grid of $[c^*,1]$. Vetoer's best response would be to accept the offer of $0$ if her type is in $[0,c^*/2]$, and otherwise accept an offer in $[c^*,1]$, resulting in approximately the same outcome as the optimal delegation set $[c^*,1]$. This intuition is incomplete because Proposer must be incentivized to offer $0$ initially, and even thereafter, it is not clear that he would be willing to follow the decreasing sequence of offers. We are able to substantiate this intuition in \autoref{sec:highpayoffeqm} under some conditions, by exploiting equilibrium payoff multiplicity to deter any Proposer deviations. 
Instead, we pursue a different approach to prove \autoref{prop:commitment_achievable_general} that does not rely on equilibrium payoff multiplicity and highlights the power of Proposer's \emph{option} to leapfrog. It is this argument that we sketch in the remainder of this subsection.

Our first step is to derive a ``conditional optimality'' property of interval delegation: given the assumption that delegation set $[c^*,1]$ is an optimal static mechanism for the prior type distribution $F$, it is also optimal for certain conditional distributions. To state the result, let $F_{[v_1,v_2]}$ denote the conditional distribution of $F$ given $v\in[v_1,v_2]$, for any $v_1,v_2\in [\underline{v},\overline{v}]$ with $v_1\leq v_2$.

\begin{lemma}\label{lemma:opt_delegation_set}
 The delegation set $[c^*,1]$ solves Proposer's static problem for any belief $F_{[c,c']}$ with $c\le c^*/2\le c^*\le c'$.
\end{lemma}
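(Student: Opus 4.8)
The plan is to argue by contradiction, deriving conditional optimality from \autoref{ass:interval_optimal} through a gluing (splicing) construction. Let $m^*$ denote the mechanism induced by the delegation set $[c^*,1]$. Since $u_V$ is quadratic loss, under $m^*$ each type picks the point of $[c^*,1]\cup\{0\}$ nearest its ideal, so (using $\ov\le 1$) a type $v$ obtains action $0$ if $v<c^*/2$, action $c^*$ if $v\in[c^*/2,c^*)$, and its ideal $v$ if $v\in[c^*,1]$. The hypotheses $c\le c^*/2\le c^*\le c'$ are exactly what make the conditional support $[c,c']$ straddle all three regimes, so that $m^*$ restricted to $[c,c']$ assigns the status quo to the bottom piece $[c,c^*/2)$ and ideal actions to the top piece $[c^*,c']$.

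Suppose for contradiction that $[c^*,1]$ is not optimal for $F_{[c,c']}$. By compactness pick an optimal conditional mechanism $\tilde m$ with $\int_{[c,c']}u(\tilde m)\,dF>\int_{[c,c']}u(m^*)\,dF$ (equivalent to beating $m^*$ under $F_{[c,c']}$, since $dF_{[c,c']}\propto dF$ there). Writing $\alpha(v):=\E_{\tilde m(v)}[a]$, I would first record the standard reductions that $\tilde m$ may be taken monotone (its expected-action schedule $\alpha$ nondecreasing, a consequence of SCED) and supported on actions in $[0,1]$ (valid because Proposer's ideal is $1$ and $u$ is increasing on $(-\infty,1]$), so $\alpha\ge 0$. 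I then consider the mechanism $m^{new}$ on $[\uv,\ov]$ implementing the menu consisting of the status quo, all of $\tilde m$'s lotteries, and point masses on the actions $a\in(c',\ov]$. If $m^{new}$ is incentive compatible and individually rational with types below $c$ taking the status quo, types in $[c,c']$ taking their $\tilde m$-bundles, and types above $c'$ taking their ideal, then $m^{new}$ reproduces $m^*$ outside $[c,c']$ and $\tilde m$ inside, hence strictly beats $m^*$ under $F$ --- contradicting \autoref{ass:interval_optimal}.

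Incentive compatibility of $m^{new}$ is the substance, and I would check it via the indirect-utility characterization: a menu is incentive compatible iff each assigned lottery attains the convex upper envelope $V(v)=\max\{2v\alpha-\beta\}$ (with $\beta=\E[a^2]$), equivalently iff the assigned expected-action function is nondecreasing and equals $\tfrac12 V'$. The two boundaries then split cleanly. At the \emph{upper} boundary (using $c^*\le c'$): a type above $c'$ gets its own ideal action, its globally best outcome, so it envies nothing; and a type $v\le c'$ can be tempted by a high action only as it approaches $c'$ from above, worth at most $2vc'-(c')^2$, which is the value of action $c'$. At the \emph{lower} boundary (using $c\le c^*/2$): for a type $v<c$ any offered high action $a\in(c',\ov]$ satisfies $a>c'\ge c^*\ge 2c\ge 2v$, so it yields utility $a(2v-a)<0$ and is dominated by the status quo; the only remaining worry is that such a type prefers some $\tilde m(v')$, which (since $\alpha\ge 0$) is bounded above by the value $V(c)$ of the lowest conditional type.

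The main obstacle is therefore reduced to two ``no cross-boundary deviation'' requirements on $\tilde m$: that $V(c)=0$ (the lowest conditional types are held to individual rationality, so no outside-low type envies them) and that $V(v)\ge 2vc'-(c')^2$ for every $v\in[c,c']$ (each conditional type weakly prefers its bundle to action $c'$, so no inside type jumps up to the high actions). Both hold for $m^*$ itself --- a direct computation gives $V^*(v)\ge 2vc'-(c')^2$ for all $v\in[c,c']$, and $V^*(c)=0$ since $c\le c^*/2$ --- so the content is to show an \emph{improving} $\tilde m$ can be modified to satisfy them without lowering Proposer's payoff or breaking incentive compatibility; this is exactly where the inequalities $c\le c^*/2\le c^*\le c'$ are indispensable. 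Granting the two requirements, the glued expected-action function is nondecreasing ($0$ below $c$, monotone on $[c,c']$, the identity above $c'$) and the indirect utilities join continuously, so $m^{new}$ is incentive compatible and the contradiction is complete.
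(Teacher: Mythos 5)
Your overall strategy---contradiction plus splicing an improving conditional mechanism into a global menu to contradict \autoref{ass:interval_optimal}---matches the paper's, and your identification of the two boundaries and of where the hypotheses $c\le c^*/2\le c^*\le c'$ enter is correct. But there is a genuine gap, and you concede it yourself: your construction works only if the improving mechanism $\tilde m$ can be taken to satisfy (i) $V(c)=0$ and (ii) $V(v)\ge 2vc'-(c')^2$ for all $v\in[c,c']$, and you never show that an improving mechanism can be modified to satisfy these ``without lowering Proposer's payoff or breaking incentive compatibility.'' That step is not a routine normalization. To force $V(c)=0$ you must worsen the lowest type's bundle, which means either lowering its expected action (costly to Proposer, since $u$ is increasing on $(-\infty,1]$ and the relevant actions lie below $1$) or adding variance (costly to \emph{both} players, by concavity of $u$ and quadratic loss); so it is not clear the modification preserves the strict improvement. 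Likewise there is no evident surplus-preserving way to guarantee (ii), since forcing every interior type up to the value of action $c'$ may require raising allocations in ways that destroy incentive compatibility or the improvement.

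The paper's proof shows these two requirements are avoidable, and that is the missing idea. Rather than adding point masses on all of $(c',\ov]$ and then having to \emph{deter} cross-boundary deviations, the paper first normalizes $m$ (without loss, via explicit payoff-improving modifications spelled out in a footnote) so that $u(m(c'))\ge u(m(v))$ for all $v\in[c,c']$ and $u(m(c))\ge u(0)$, and then augments the menu only with $\{a\in[c^*,1]: u(a)\ge u(m(c'))\}\cup\{0\}$. With this choice, incentive compatibility and individual rationality are automatic (it is a menu with free choice and $0$ available), and deviations never need to be ruled out: if an interior type jumps up to an added action, Proposer's payoff from that type weakly \emph{rises}, because every added action yields at least $u(m(c'))\ge u(m(v))$; if a type below $c$ takes anything, SCED forces it onto either $\tilde m(c)$ or $0$, both worth at least $u(0)$ to Proposer---exactly what such types yield under $[c^*,1]$, since $c\le c^*/2$; and each type $v>c'$ ends up giving Proposer at least $u(v)$. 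The contradiction with \autoref{ass:interval_optimal} then goes through without ever binding individual rationality at $c$ or blocking the upward jumps your conditions (i)--(ii) were designed to prevent. In short, your reduction is strictly harder than the problem requires, and the hard part of your reduction is precisely what is left unproven.
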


The lemma owes to SCED of Vetoer's utility and the optimal static mechanism being interval delegation, rather than just an arbitrary delegation set. The proof uses these properties to establish that if some mechanism outperforms delegation set $[c^*,1]$ for any of the relevant truncated beliefs, then augmenting that mechanism by adding an interval of high actions yields a mechanism that also outperforms $[c^*,1]$ for the original belief.

\autoref{lemma:opt_delegation_set} says, in particular, that delegation set $[c^*,1]$ is an optimal mechanism for the belief $F_{[\underline{v},c^*]}$ and that it remains
optimal for the belief $F_{[c^*/2,c^*]}$ that is induced if Proposer leapfrogs and obtains agreement from all types below $c^*/2$. We use these properties to next establish \autoref{prop:commitment_achievable_general} for the special case in which Proposer's belief is $F_{[\underline{v},c^*]}$.

\begin{lemma}\label{lemma:induction_base}
Suppose \ref{eqmexists}. Proposer can achieve approximately his commitment payoff for belief $F_{[\underline v,c^*]}$.
\end{lemma}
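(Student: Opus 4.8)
The plan is to prove this special case by leveraging the conditional optimality established in \autoref{lemma:opt_delegation_set} together with the leapfrogging option, setting up an induction or limiting argument on the ``width'' of the type interval. The key observation is that when Proposer's belief is $F_{[\uv,c^*]}$, delegation set $[c^*,1]$ is optimal (\autoref{lemma:opt_delegation_set} with $c=\uv\le c^*/2$ and $c'=c^*$), and this delegation set assigns the single action $c^*$ to \emph{all} types in $[\uv,c^*]$ (since every such type $v\le c^*$ has ideal point below $c^*$ and thus weakly prefers $c^*$ to any higher action in $[c^*,1]$). Hence $U(F_{[\uv,c^*]})=u(c^*)$. So the goal reduces to constructing, for large $\delta$, a continuation equilibrium in which Proposer's payoff is approximately $u(c^*)$.

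\textbf{First}, I would exploit that the target payoff $u(c^*)$ is exactly what Proposer obtains by pooling all remaining types on action $c^*$. The natural candidate is an equilibrium in which Proposer offers (approximately) $c^*$ and all types accept. The difficulty—this is where Coasian logic bites—is sequential rationality: if Proposer offers something just above $c^*$, the lowest types reject, and upon rejection Proposer's updated belief is some $F_{[c,c^*]}$ with $c>\uv$; I must ensure Proposer is not tempted to compromise downward below $c^*$, nor to screen the residual types in a way that unravels the pooling. \textbf{The key input} is that, by \autoref{lemma:opt_delegation_set}, $[c^*,1]$ remains the optimal static mechanism for every truncated belief $F_{[c,c^*]}$ with $c\le c^*/2$, so the commitment target for these continuation beliefs is \emph{still} $u(c^*)$—the optimal thing to do never involves offering below $c^*$. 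This conditional optimality is precisely what blocks the Coasian downward spiral: there is no static gain from moderating below $c^*$ even after updating.

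\textbf{Second}, to make this an equilibrium rather than merely a statement about static optima, I would use the leapfrogging option as in the two-type sketch and the non-constructive argument preceding \autoref{prop:twotypes}. Concretely, invoke \ref{eqmexists} to obtain \emph{some} equilibrium for each relevant belief, then surgically modify it: specify that after Proposer makes the designated near-$c^*$ offer, all types accept; after any \emph{other} first-period offer $a$, continuation play reverts to the presumed equilibrium for the posterior belief (which, being an equilibrium, is sequentially rational); and finally let Proposer optimize in the first period. As in the two-type argument, either Proposer finds it optimal to make the pooling offer—delivering approximately $u(c^*)$—or he strictly prefers some deviation, in which case his payoff is \emph{even higher} and the claim holds a fortiori. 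The upper bound \autoref{lemma:static_provides_upper_bound} applied to each continuation belief, combined with the conditional optimality giving that bound the value $u(c^*)$, is what guarantees the deviation payoff cannot exceed $u(c^*)$ by more than vanishes, pinning the equilibrium payoff near $u(c^*)$.

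\textbf{The main obstacle} I anticipate is handling the residual types' incentives precisely enough to confirm that the modified profile is genuinely an equilibrium, not just that the \emph{option} value is high. In the two-type case the residual (the high type) had a trivial continuation; here the residual is a whole subinterval, and after the pooling offer is (hypothetically) rejected by some types, I need a coherent continuation that both is sequentially rational and does not hand those residual types a profitable deviation that breaks pooling. The cleanest route is likely to accept slack: offer $c^*-\eta$ for small $\eta>0$ (or an action scaled by $\delta$, analogous to $a^\delta$ in the two-type example) so that, at the patient limit, essentially all types strictly prefer immediate acceptance to any continuation, making the ``all accept'' response strict and the construction robust; the conditional optimality lemma then ensures that whatever Proposer would optimally do off-path cannot beat $u(c^*)$. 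I expect the bookkeeping to reduce to showing that for $\delta$ near $1$ the loss $u(c^*)-u(c^*-\eta)$ and the updating slack can both be made smaller than any prescribed $\epsilon$.
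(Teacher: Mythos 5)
There is a genuine gap, and it sits at the very first step of your plan. You claim that under the belief $F_{[\uv,c^*]}$ the optimal delegation set $[c^*,1]$ ``assigns the single action $c^*$ to all types in $[\uv,c^*]$,'' so that $U(F_{[\uv,c^*]})=u(c^*)$. This ignores the ever-present veto: a delegation set $A$ lets Vetoer choose from $A\cup\{0\}$, and with quadratic loss $u_V(c^*,v)=2vc^*-(c^*)^2<0$ for every $v<c^*/2$, so all such types strictly prefer the status quo and veto. The correct value is $U(F_{[\uv,c^*]})=u(c^*)\cdot\bigl[F(c^*)-F(c^*/2)\bigr]/\bigl[F(c^*)-F(\uv)\bigr]$ (using $u(0)=0$), not $u(c^*)$. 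The same error sinks your candidate equilibrium: no continuation play can induce a type $v<c^*/2$ to accept an offer near $c^*$, because perpetual rejection guarantees her payoff $0>u_V(c^*,v)$, and replacing $c^*$ by $c^*-\eta$ only shrinks, never eliminates, the set of types who must veto. So the ``all types accept'' profile is not an equilibrium for any specification of off-path play, and the ``bookkeeping'' you defer cannot be made to work. Note also that, despite invoking leapfrogging, your on-path play is a single pooling offer accepted by everyone---which is precisely not leapfrogging.

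The paper's proof repairs exactly this: the on-path play is a two-step leapfrogging path. Proposer first offers $0$, which is accepted precisely by the types below $c^*/2$ (they strictly prefer $0$ now to $c^*$ next period), and upon rejection updates to $F_{[c^*/2,c^*]}$ and offers $c^*$, which all remaining types accept; this yields $\delta\, U(F_{[\uv,c^*]})$, replicating the outcome of $[c^*,1]$ including the vetoes. Your instinct about the two key inputs is right but they are deployed at the second period, not the first: Proposer's sequential rationality after the offer of $0$ follows from \autoref{lemma:static_provides_upper_bound} (any deviation earns at most $U(F_{[c^*/2,c^*]})$) together with \autoref{lemma:opt_delegation_set} (the set $[c^*,1]$ remains optimal for $F_{[c^*/2,c^*]}$, where---and only where---its value really is $u(c^*)$, since all types in $[c^*/2,c^*]$ accept it). Vetoer's acceptance of $c^*$ is supported by specifying that rejection leads to a belief degenerate on $c^*$ and offers of $\min\{2c^*,1\}$ ever after, which every type in $[\uv,c^*]$ finds worse. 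Your surgical-modification scaffolding (start from an equilibrium given \ref{eqmexists}, revert to it after first-period deviations, let Proposer optimize) does match the paper's, but without the initial offer of $0$ clearing the low types, the construction it wraps around does not exist.
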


 The proof deduces an equilibrium in which Proposer has an option to leapfrog that guarantees him approximately the commitment payoff, analogous in spirit to the logic given before \autoref{prop:twotypes}. In the equilibrium, Proposer has the option to follow a path in which he first proposes action $0$, which will be accepted by all types below $c^*/2$, and then proposes action $c^*$, which will be accepted by all types above $c^*/2$. When players are patient this path yields Proposer approximately the same payoff as in the static problem because the delegation set $[c^*,1]$ is outcome-equivalent to $\{c^*\}$ under the belief $F_{[\underline{v},c^*]}$. On this path, Proposer's sequential rationality in the second period with belief $F_{[c^*/2,c^*]}$ is assured by \autoref{lemma:static_provides_upper_bound} and \autoref{lemma:opt_delegation_set}. Sequential rationality for Vetoer after both the initial proposal of $0$ and the subsequent proposal $c^*$ is because a rejection of $c^*$ in the second period would lead Proposer to put probability $1$ on type $c^*$ and make subsequent proposals that are larger than $c^*$, and hence worse for Vetoer regardless of her type in $[\underline v,c^*]$.\footnote{While it is weakly dominated for Vetoer to accept a proposal of $0$, we use action $0$ because of the continuum action space. There are discretizations of the action space in which Proposer's leapfrogging option can be constructed using a strictly positive action instead of $0$.}

\autoref{lemma:induction_base} serves as the base step for an inductive proof of \autoref{prop:commitment_achievable_general}. Specifically, we show that if Proposer can achieve approximately his commitment payoff for belief $F_{[\underline v,c']}$ for some $c'\ge c^*$, then there is a neighborhood of $c'$ such that for any $c''$ in this neighborhood, the commitment payoff can also be achieved approximately for belief $F_{[\underline v,c'']}$.\footnote{This explanation is heuristic; the formal proof ensures that for any $\epsilon>0$, for all large enough $\delta<1$, the induction can traverse the set of types with Proposer obtaining a payoff at least $U(F)-\epsilon$.}

Here is the idea for the inductive step.  Consider the action $a'>c'$ that makes type $c'$ indifferent between accepting $a'$ in the current period and playing a putative continuation equilibrium 
with belief $F_{[\underline v,c']}$ that gives Proposer approximately his commitment payoff under that belief. Presuming this continuation if $a'$ is rejected, it is optimal for types below $c'$ to reject $a'$ because SCED implies that they obtain a higher payoff from using the strategy of type $c'$ in the continuation equilibrium. On the other hand, there is a neighborhood of types above $c'$ within which it is optimal to accept $a'$ because (i) discounting implies that types in a neighborhood of $a'$ prefer accepting $a'$ to receiving even their ideal action in the next period, and (ii) SCED implies that the set of types willing to accept any proposal is an interval. Now suppose Proposer's belief is $F_{[\underline v,c'']}$ for $c''$ strictly larger than but sufficiently close to $c'$. It follows that the belief $F_{[\underline v,c']}$ and the continuation equilibrium we hypothesized is self-fulfilling: anticipating this continuation leads to $a'$ being rejected by precisely the set of types $[\underline v,c']$. Moreover, action $a'$ is an option that assures Proposer approximately his commitment payoff: conditional on rejection by types less than $c'$, the continuation results in approximately the commitment payoff given the conditional distribution, whereas every type $v \in (c',c'')$ accepts proposal $a'>c''$  which  is larger than the action $v$ that Proposer gets from type $v$ in the static problem for belief $F_{[\underline v,c'']}$ (by \autoref{lemma:opt_delegation_set}, given that $c''>c'\geq c^*$).

We highlight that our proof of \autoref{prop:commitment_achievable_general} uses a leapfrogging option to deduce a high-payoff equilibrium for Proposer without actually identifying his equilibrium strategy or the equilibrium outcome (i.e., the mapping from Vetoer types to time-stamped action distributions).\footnote{This is reminiscent of the approach used in the reputation literature \citep[e.g.,][]{FL:89,FL:92}, among other places, although the logic here is distinct. Unlike in those classic papers, we have two long-lived players, and there can also be equilibria in which Proposer obtains a low payoff (\autoref{prop:existence_decreasing_offers} below).} As explained above, the proof uses induction on beliefs of the form $F_{[\underline v,c]}$, exploiting the ``conditional optimality'' of the ex-ante optimal mechanism for such beliefs (\autoref{lemma:opt_delegation_set}). However, in a leapfrogging equilibrium, Proposer’s beliefs need not take only that form. But that is compatible with conditionally optimality of the ex-ante optimal mechanism---indeed, \autoref{lemma:opt_delegation_set} assures that the interval $[c^*,1]$ remains an optimal mechanism so long as Proposer's belief is of the form $F_{[c,c']}$ with $c\leq c^*/2\leq c^*\leq c'$. 
We will see in \autoref{sec:highpayoffeqm} that, under some conditions, there are leapfrogging equilibria in which Proposer's beliefs always have this form on the equilibrium path.

Moving beyond interval delegation, we do not know in general whether our proof strategy for \autoref{prop:commitment_achievable_general} can be used when the optimal mechanism is an arbitrary delegation set; what would be important for our approach is that the delegation set be a conditionally optimal mechanism for a suitable range of beliefs.

\subsection{Committee of Voters}\label{sec:committee}

Our analysis with a single Vetoer can be extended to situations in which a committee votes on Proposer's offer. For some odd number $N$, consider a committee of $N$ voters that aggregates votes via simple majority rule. Each voter $n\in\{1,\ldots,N\}$ has the utility function $u(a,v_n)$, where $v_n$ is her ideal point. Ideal points are drawn from some prior joint distribution, which need not be independent across voters. Each voter observes the realized vector $(v_1,\ldots,v_n)$, but Proposer does not. Crucially, Proposer also does not observe the vote profile in any period, only whether his offer passes. It does not matter whether the voters observe each others' votes.

Let $m:= (N+1)/2$ and let $F$ denote the distribution of the median (i.e., $m$-th highest) ideal point. We claim that so long as $u$ has SCED, every equilibrium of our Proposer-Vetoer two-player game with type distribution $F$  has an outcome-equivalent equilibrium of the committee game. Specifically, the committee game's equilibrium can be described as follows: (i) Proposer behaves just like in the two-player game; (ii) the realized median voter (i.e., the voter who realizes the $m$-th highest ideal point), say voter $m$, behaves just like Vetoer with type $v_m$; and (iii) at every history, every non-median voter votes for the current proposal if and only if she prefers it to the distribution of future agreements (time-stamped actions) induced in the two-player game if Vetoer has type $v_m$ and rejects at that history. Note that all voters behave ``sincerely'' or ``as if pivotal'' in the sense of voting at every history based on their comparison of the current offer with what will happen, in equilibrium, if the offer does not pass.

Here is why the above strategies form an equilibrium of the committee game. Without loss, assume the realized vector of ideal points has $v_1\leq \cdots \leq v_n$. The key observation is that all voters share a common belief about the distribution of future agreements (since $v_m$ is known to all voters), and so SCED assures that the set of voters who have the same preference as the median voter $m$ to accept (or reject) the current offer includes either $\{1,\ldots,m\}$ or $\{m,\ldots,N\}$. Hence, the median voter is always decisive, and all voters are playing sequentially rationally if the median voter is. Since Proposer only observes whether his offer was accepted or rejected, and the median voter behaves just like in the two-player game, it follows that Proposer is behaving sequentially rationally. Finally, being decisive, the median voter is clearly also playing sequentially rationally.

\section{Equilibrium Constructions and Multiplicity}\label{sec:equilibria}

This section constructs two equilibria: a leapfrogging equilibrium that yields Proposer approximately his commitment payoff, and a skimming equilibrium that can yield him a significantly lower payoff. Both constructions require some (plausible) assumptions on the support of the type distribution. Under those assumptions, they settle the equilibrium existence presumed by \Cref{prop:commitment_achievable_general}. Moreover, we also establish a sense in which leapfrogging is necessary to achieve the commitment payoff. Unlike in \autoref{sec:general}, we now permit the upper bound of the type distribution, $\ov$, to be larger than $1$.

\subsection{A Skimming Equilibrium}
\label{sec:skimmingequilibrium}

We first construct a \emph{skimming equilibrium}, which we define, following standard practice \citep[e.g.,][p.~407]{FT:91}, as an equilibrium in which any on-path non-negative offer is accepted by an upper set of Vetoer types.\footnote{We qualify the upper-set acceptance to hold only for (i) non-negative offers and (ii) on-path offers. Point (i) is needed because of Vetoer's single-peaked preferences: if a strictly negative offer is accepted by any remaining types, the acceptance set cannot be an upper set since high types prefer the status quo. Regarding (ii), we could use the stronger definition that includes off-path offers---and our construction in \autoref{prop:existence_decreasing_offers} satisfies that requirement---but restricting to on-path offers strengthens \autoref{prop:leapfrogging_necessary} in \autoref{sec:leapfroggingnecessary} and its implication that leapfrogging is necessary for the commitment payoff.}
This skimming equilibrium shows that a Coasian intuition does have some merit in our setting, which makes it more striking that the commitment payoff can also be achieved. Furthermore, we establish that Proposer’s payoff in our skimming equilibrium converges in the patient limit to that of \emph{full delegation}, i.e., of simply allowing Vetoer to choose her preferred action in $[2\uv^+,1]$, where $\uv^+ := \max\{0,\uv\}$.\footnote{In other words, full delegation is delegation of the interval $[c,1]$ where $c=0$ if $\uv\leq 0$ and $c=2\uv$ if $\uv \in (0,1/2)$. Note that we ignore here, and in the rest of \autoref{sec:equilibria}, the case of $\uv>1/2$; it is uninteresting because there is trivially a skimming equilibrium in which Proposer obtains his ideal point by offering $1$ at every history. Nonetheless, all our statements hold even if $\uv>1/2$ so long as in that case one interprets the notation $2\uv^+$ to mean $1$.}
It follows that there can be a substantial multiplicity in bargaining outcomes.

To state the result, define 
$${\underline U(F)} := \int_{\uv}^{2\uv^+} u(2\uv^+)\mathrm dF(v) + \int_{2\uv^+}^{1} u(v)\mathrm dF(v)+\int_{1}^{\max\{\ov,1\}} u(1)\mathrm dF(v)$$
as the static payoff from delegation set $[2\uv^+,1]$. In this mechanism all Vetoer types below $2\uv^+$ are pooled at action $2\uv^+$,  all types in $[2\uv^+,1]$ obtain their ideal points, and all higher types are pooled at $1$.
\begin{proposition}\label{prop:existence_decreasing_offers}
If either $\uv \le 0$ or $ \ov \le 1/2$, then there is a skimming equilibrium. As $\delta\rightarrow 1$, Proposer's payoff in this equilibrium sequence converges to ${\underline U(F)}$.
\end{proposition}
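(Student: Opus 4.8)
The plan is to adapt the standard ``gap-case'' skimming construction from seller--buyer bargaining (\citealp{Hart:89}; \citealp[pp.~409--10]{FT:91}) to the single-peaked environment, exploiting that SCED delivers interval choice so that acceptance sets are automatically intervals. The natural state variable is the highest Vetoer type that has not yet accepted: since on-path acceptance sets will be upper intervals, after any history Proposer's belief is a truncation $F_{[\uv,w]}$, and the equilibrium is described by how the offer and the cutoff $w$ move. For each $\delta$ I would build a strictly decreasing sequence of offers converging down to the floor $2\uv^+$, together with a decreasing sequence of cutoffs, such that at the offer associated with cutoff $w$ the marginal type $w$ is exactly indifferent between accepting now and waiting for the next (lower) offer, types above $w$ strictly accept, and types below $w$ strictly reject. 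The floor is pinned down by the lowest type $\uv$ being indifferent between the status quo and $2\uv^+$ when $\uv>0$, and by all types below $0$ strictly preferring the status quo to every positive action when $\uv\le 0$.

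The construction proceeds recursively, and because $u_V$ is quadratic loss the marginal-type indifference conditions are explicit and yield a well-defined decreasing sequence. Two checks are then required. Vetoer's sequential rationality is the easy direction: given decreasing offers, single crossing (interval choice) guarantees that the set of types accepting any on-path non-negative offer is an upper interval, with the marginal type indifferent and higher types strictly preferring the current, closer-to-their-ideal action to any discounted lower future action. Proposer's sequential rationality is the substantive direction: at every belief $F_{[\uv,w]}$ the prescribed offer must be optimal against the skimming continuation, and in particular Proposer must not wish to offer an action below the floor $2\uv^+$. This is exactly the \emph{leapfrogging} temptation highlighted in the two-type example (cf.\ the discussion around \Cref{prop:twotypes}): by undercutting the floor, Proposer could clear the lowest types cheaply and then try to extract larger actions from the remaining higher types. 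I expect the hypothesis that either $\uv\le 0$ or $\ov\le 1/2$ to enter precisely here, keeping all relevant offers within the increasing region of $u$ and bounding the gain from such a deviation, so that the decreasing-offer continuation is never worth abandoning.

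For convergence I would invoke the Coasian logic. Along the constructed path the marginal type $w$ is indifferent between the current offer and a discounted continuation in which it eventually obtains essentially its ideal action; as $\delta\to 1$ this forces the action accepted by type $w$ down to approximately $w$ itself whenever $w\in[2\uv^+,1]$, while types above $1$ accept an initial offer near $1$ and types below $2\uv^+$ are pooled at the floor $2\uv^+$ (or, when $\uv\le 0$, the types below $0$ simply never trade and contribute $u(0)=0$). One must also verify that the total discounting incurred before each type trades vanishes in the patient limit; this is the usual Coase-conjecture delay estimate, and is where the gap-like structure provided by the support condition does its work. Assembling the limiting allocation type by type reproduces exactly the full-delegation allocation underlying $\underline U(F)$, giving $\liminf \ge \underline U(F)$; the matching upper bound follows because the marginal-type indifference caps the action obtained by each type at approximately its ideal, so Proposer cannot do better than full delegation along any skimming path.

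The main obstacle is Proposer's sequential rationality, and specifically ruling out the below-floor leapfrogging deviation using the support condition. This is the genuinely new ingredient relative to the textbook seller--buyer construction, where no analogous deviation exists because all buyer types prefer lower prices; by contrast, single-peaked preferences make undercutting the floor a real threat, which is why the skimming equilibrium is available only under an assumption on the support. A secondary technical difficulty is the Coasian delay estimate underlying the convergence claim, ensuring that the (possibly unboundedly many, as $\delta\to 1$) offers nonetheless entail negligible discounting, so that the patient-limit payoff is exactly $\underline U(F)$ rather than something strictly smaller.
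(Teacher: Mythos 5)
Your outline follows the paper's own route at every stage---a dynamic-programming skimming construction with the highest remaining type as state variable, SCED/interval choice for Vetoer's side, the support hypothesis deployed against below-floor deviations, and Coasian estimates for the limit payoff (cf.\ \Cref{lemma:support_equilibrium,lemma:DP_soln})---but the execution has genuine gaps at exactly the points that are new relative to seller--buyer bargaining. First, your claim that quadratic loss makes the recursion yield ``a well-defined decreasing sequence'' is not available: with single-peaked preferences, the indifference action $P(v)$ solving the marginal-type condition need \emph{not} be monotone in the cutoff $v$, so one cannot use $P(v)$ itself as type $v$'s acceptance threshold---acceptance sets would then fail to be upper intervals and posteriors would fail to be right-truncations, breaking the recursion. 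The paper explicitly flags this and resolves it by defining strategies through the increasing envelope $\oP(v)=\sup_{y\le v}P(y)$, with Proposer randomizing between two continuation offers at discontinuities of $P$ so as to keep the marginal type indifferent (\Cref{def:support_eq}). Second, your treatment of the leapfrogging deviation is only a conjecture (``I expect the hypothesis to enter here\ldots bounding the gain''), and your guess about the mechanism---keeping offers in the increasing region of $u$---is not how it works. The paper's construction stipulates that any offer below the floor $2\uv^+$ is accepted by \emph{all} types, so the deviation yields at most $u(a)<u(2\uv^+)$, strictly worse than the skimming continuation; this universal acceptance is rationalized by off-path beliefs degenerate on $\ov$ (when $\ov\le 1/2$) or on $0$ (when $\uv\le 0$), after which Proposer offers $2\ov$ (resp.\ $0$) forever, leaving every type zero continuation surplus. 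That is precisely where the hypothesis enters, and without some such explicit off-path specification Proposer's sequential rationality at the floor remains unverified.

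Third, your backward recursion toward the floor is a gap-case technique, but $\uv\le 0$ is exactly the no-gap case, so your remark that the support condition supplies ``gap-like structure'' is backwards. For $\uv>0$ the paper builds $(R,P)$ locally near $\uv$ (where pooling on $2\uv$ is optimal, following Fudenberg--Levine--Tirole) and extends to $[\uv,\ov]$ by an Ausubel--Deneckere-style extension lemma; for $\uv\le 0$ existence requires a separate approximation argument in the style of Ausubel--Deneckere's Theorem 4.2, which your proposal does not address at all. Finally, on convergence: the lower bound $\liminf \ge \underline{U}(F)$ does follow from the fine-grid deviation option, as you say, but the upper bound needs more than ``indifference caps each type's action at approximately its ideal.'' The paper extracts a pointwise limit $A$ of the accepted-action functions via Helly's selection theorem, shows $A(v)\ge v$ by a local undercutting deviation, and then derives a contradiction from a quantitative delay estimate: if some type accepted an action bounded above its ideal, lower types would face non-vanishing delay, capping Proposer's continuation value strictly below his fine-grid deviation payoff. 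You correctly flag this delay estimate as a difficulty, but the proposal does not supply it, and it is the substance of the upper-bound direction.
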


For any $\delta$, we construct a skimming equilibrium by adapting the approach used in seller-buyer bargaining \citep*[e.g.,][]{GSW:86,AD:89}. Suppose that Proposer's belief at any history is a right-truncation of his prior, i.e., the set of remaining Vetoer types is $[\uv,v]$ for some $v$. The highest remaining type can be used as a state variable for dynamic programming to find Proposer’s optimal sequence of decreasing offers, with a constraint that each subsequent state must be induced by Vetoer’s best response of accepting an offer if and only if she prefers it to the discounted payoff from accepting the subsequent offer. \autoref{def:support_eq} in \appendixref{app:decreasing} makes this program precise. As we discuss there, single-peaked Vetoer preferences introduce some differences in how we formulate and tackle the program relative to seller-buyer bargaining.

A novel issue arises in verifying that there is an equilibrium corresponding to a solution to the aforementioned program: what happens if $\uv>0$ and Proposer deviates at some history to an offer in $[0,2\uv)$? The issue is salient because, unlike in seller-buyer bargaining, leapfrogging could be attractive to Proposer. We use \autoref{prop:existence_decreasing_offers}'s hypothesis that $\ov\leq 1/2$ (given $\uv>0$)  to deter such deviations by stipulating that any such offer is accepted by all Vetoer types, which makes it unattractive to Proposer. It is optimal for Vetoer to accept these low offers because we specify Proposer’s belief after rejection to be degenerate on $\ov$, and accordingly Proposer’s future offers to perpetually be $2\ov$, which yields no surplus to any Vetoer type.\footnote{Our solution concept of Perfect Bayesian equilibrium allows for arbitrary beliefs after a rejection that has zero probability at that history. As such, even if $\ov>1/2$ (and $\uv>0$), strictly speaking one could assign the degenerate belief on $0$ after an unexpected rejection and have Proposer offer action $0$ ever after, which would also yield no surplus to all Vetoer types. 
We do not allow for such beliefs, instead requiring---as is conventional, and in the spirit of \citepos{KW:82} sequential equilibrium---that beliefs must always be supported in the support of the prior, $[\uv,\ov]$.} Both $\uv \leq 0$ and $\ov\leq 1/2$ are reasonable hypotheses: the former says that the status quo may be Pareto efficient; the latter is tantamount to Proposer having monotonic preferences over the set of actions that any Vetoer type would find acceptable.

Another distinction with seller-buyer bargaining is that, as $\delta \to 1$, Proposer's payoff in the skimming equilibrium converges to the full-delegation payoff ${\underline U(F)}$, rather than the payoff from all types accepting $2\uv^+$. On the one hand, our argument for why Proposer’s payoff in the limit cannot be larger than ${\underline U(F)}$ builds on ideas in that literature; roughly, a type $v>2 \uv^+$ would accept an offer strictly higher than $v$ only if there is a significant delay cost to waiting for a more attractive offer, but such a delay cost would make it attractive for Proposer to deviate and hasten agreement. On the other hand, a new observation owing to our setting is that Proposer's payoff cannot be lower than ${\underline U(F)}$ either: intuitively, because of her single-peaked utility, for any $\delta<1$  Vetoer will accept any proposal close enough to her ideal point; hence, as $\delta \to 1$, Proposer must do no worse in the skimming equilibrium than by compromising with an arbitrarily fine sequence of offers that traverses $[2 \uv^+,1]$.

In general, Proposer's payoff from the skimming equilibrium when players are patient, ${\underline U(F)}$, will be strictly less than his commitment payoff, $U(F)$; these payoffs coincide only when full delegation is an optimal mechanism, i.e., the $c^*$ threshold in \autoref{ass:interval_optimal} is precisely $2\uv^+$.
\citet*[Corollary 1]{KKVW:21} identify that full delegation is in fact optimal if the type density is decreasing on $[2\uv^+,1]$. 
Observe that when $\uv\leq 0$, the skimming equilibrium's payoff is a lower bound on Proposer's payoff from any equilibrium when players are patient; for, no equilibrium can yield Proposer a payoff strictly lower than from delegating the $[0,1]$ interval. It follows that if full delegation is optimal and $\uv\leq 0$, then when players are patient all equilibria must yield Proposer the commitment payoff.

Notwithstanding such cases, the general contrast in Proposer’s payoff between \Cref{prop:commitment_achievable_general} and \Cref{prop:existence_decreasing_offers} indicates the importance of equilibrium selection, which we interpret as norms, in veto bargaining. Which norm prevails in a given context may hold significant implications for whether Proposer suffers from an inability to commit to future offers. Moreover, in some environments---e.g., when Proposer prefers a single take-it-or-leave-it offer to full delegation---the norm can determine whether Proposer benefits from or is harmed by the ability to make multiple proposals. But in other environments---e.g., when $\uv\leq 0$ and Proposer prefers full delegation to a single offer---the ability to make multiple proposals benefits Proposer regardless of the norm. We highlight that both the sequential structure of bargaining and incomplete information are necessary for norms to matter in veto bargaining; in particular, \citet{Primo:02} shows that there is a unique equilibrium outcome absent incomplete information.

\subsection{A Commitment-Payoff Equilibrium}\label{sec:highpayoffeqm}

We now build on the previous subsection's skimming equilibrium to construct a leapfrogging equilibrium---one with leapfrogging on path---that delivers (approximately) Proposer's commitment payoff. The construction reveals how the dynamics of leapfrogging may play out, subject to a reasonable assumption that either $\uv \leq 0$ (i.e., the status quo may be Pareto efficient) or $\ov\leq 1/2$ (i.e., Proposer effectively has monotonic preferences), and that full delegation is not optimal. Note that if full delegation is optimal, then skimming achieves the commitment payoff (\autoref{prop:existence_decreasing_offers}).

\begin{proposition}
\label{prop:commitment_construction}
Suppose that either $\uv \leq 0$ or $\ov \leq 1/2$, and that full delegation is not optimal. There is a leapfrogging equilibrium in which Proposer achieves approximately his commitment payoff. In this equilibrium, Proposer first offers $0$, which is accepted if and only if $v\in (0,c^*/2)$; subsequently, Proposer offers a decreasing sequence of offers that culminates in $c^*$, with each offer accepted by an upper interval of remaining types.
\end{proposition}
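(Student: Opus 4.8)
The plan is to paste the leapfrogging path on top of the skimming equilibrium of \autoref{prop:existence_decreasing_offers}, using the latter as an off-path threat in the manner of \citet{AD:89}. Fix $\delta$ near $1$. Since $\uv\le0$ or $\ov\le1/2$, \autoref{prop:existence_decreasing_offers} furnishes a skimming equilibrium $\Sigma^F_{\mathrm{skim}}$ for the prior $F$, with Proposer value $\underline U(F,\delta)\to\underline U(F)$ as $\delta\to1$. On path, Proposer offers $0$. The candidate profile agrees with $\Sigma^F_{\mathrm{skim}}$ after every first-period offer \emph{except} an offer of $0$, which is overridden: every type in $(0,c^*/2)$ accepts, every type in $[c^*/2,\ov]$ rejects (any types with $v\le 0$ are indifferent and assigned to reject and stay inert), and after rejection Proposer holds belief $F_{[c^*/2,\ov]}$ and plays a skimming continuation whose offers decrease to $2\cdot(c^*/2)=c^*$. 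Because full delegation is not optimal, $c^*>2\uv^+$, so the offer of $0$ is accepted by a positive-mass set of low types, each contributing $u(0)=0$---exactly their contribution under the delegation set $[c^*,1]$, in which they too pick the status quo.

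Next I would pin down the payoff. On the rejecting types $[c^*/2,\ov]$, \autoref{prop:existence_decreasing_offers} gives a continuation value converging to $\underline U(F_{[c^*/2,\ov]})$, the payoff of \emph{full} delegation for the truncated belief, i.e.\ of delegating $[2\cdot(c^*/2),1]=[c^*,1]$; and by \autoref{lemma:opt_delegation_set} (with $c=c^*/2$, $c'=\ov$) this set is an \emph{optimal} mechanism for $F_{[c^*/2,\ov]}$. Splicing the two pieces reproduces the delegation-$[c^*,1]$ allocation for the whole prior, whose value is $U(F)$ by \autoref{ass:interval_optimal}; hence the leapfrogging path attains approximately $U(F)$.

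Third, sequential rationality. For Vetoer in period $0$, a type who rejects $0$ eventually secures an action of at least $c^*$, and since $u_V(c^*,v)=c^*(2v-c^*)$ is positive exactly when $v>c^*/2$, single-crossing (interval choice) makes the acceptance set the stated interval with $c^*/2$ the indifferent type (delay is negligible for the marginal type as $\delta\to1$); Vetoer's play inside any skimming phase is inherited from \autoref{prop:existence_decreasing_offers}. For Proposer in period $0$, any offer $a\ne0$ triggers $\Sigma^F_{\mathrm{skim}}$; as its prescribed first offer is a best reply to the skimming continuation, any alternative yields at most $\underline U(F,\delta)$, whereas offering $0$ yields approximately $U(F)>\underline U(F)$, the strict gap coming from full delegation not being optimal, so offering $0$ is strictly optimal for large $\delta$.

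The step I expect to be the main obstacle is Proposer's sequential rationality \emph{within} the post-leapfrog skimming continuation, specifically deterring a \emph{second} leapfrog to some offer $a<c^*$. This is exactly the deviation that \autoref{prop:existence_decreasing_offers} neutralizes through its hypothesis $\ov\le1/2$; but the post-leapfrog belief $F_{[c^*/2,\ov]}$ may have $\ov>1/2$ (the delicate case $\uv\le0<1/2<\ov$), so that argument is unavailable. The resolution---and the crux---is that the \emph{floor is now optimal}: by \autoref{lemma:opt_delegation_set}, delegating $[c^*,1]$ is optimal at every truncation $F_{[c^*/2,v]}$ with $v\ge c^*$, so \autoref{lemma:static_provides_upper_bound} caps any ``offer $a<c^*$ then continue'' at $U(F_{[c^*/2,v]})$, which the path already attains, and concretely Proposer strictly prefers pooling the lowest remaining types at $c^*$ (worth $u(c^*)$) to grabbing them at $a<c^*$ (worth $u(a)<u(c^*)$). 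Thus optimality of the floor $c^*$ substitutes for the $\ov\le1/2$ condition and makes the skimming continuation self-enforcing; verifying this uniformly along the decreasing sequence, together with the bookkeeping for the inert types $v\le 0$ when $\uv\le0$, is where the real work lies.
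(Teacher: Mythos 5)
Your construction is the paper's construction in all its main parts: leapfrog with an initial offer of $0$ accepted by $(0,c^*/2)$, skim down to $c^*$ under the truncated belief, punish any first-period deviation by reverting to the skimming equilibrium of \autoref{prop:existence_decreasing_offers}, and settle Proposer's period-0 incentive by $U(F)>\underline U(F)$, which is exactly where non-optimality of full delegation enters. You also correctly identify the crux: deterring a \emph{second} low offer $a\in[0,c^*)$ inside the post-leapfrog skimming phase, where the hypothesis $\ov\le 1/2$ may fail for the belief $F_{[c^*/2,\ov]}$. But your resolution of that crux has a genuine gap.

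The problem is that your argument delivers only $\epsilon$-deterrence, not exact sequential rationality. \autoref{lemma:static_provides_upper_bound} and \autoref{lemma:opt_delegation_set} cap any ``offer $a<c^*$ then continue'' strategy at $U(F_{[c^*/2,v]})$, but the skimming continuation attains that value only approximately, with a shortfall that is strictly positive at every fixed $\delta<1$; a cap that the path matches only up to $\epsilon(\delta)$ does not rule out a deviation gaining up to $\epsilon(\delta)$, and the proposition asserts an exact equilibrium. Moreover, \autoref{lemma:opt_delegation_set} requires $c'\ge c^*$, so your cap is simply unavailable at off-path truncations $F_{[c^*/2,v]}$ with $v<c^*$ that arise after unexpected rejections late in the skimming phase. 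Your per-type comparison $u(a)<u(c^*)$ does not close the hole either, because the dangerous deviation is precisely a second leapfrog---grab a lower interval of the remaining types at $a$, then extract actions above $c^*$ from the rest---whose value is not bounded by a pooling-versus-pooling comparison. The paper (via \autoref{lem:commitment_skimming}) closes this exactly and differently: it stipulates that any deviation offer in $[0,c^*)$ is accepted by \emph{all} remaining types, made a Vetoer best response by an off-path belief, upon rejection, concentrated on the non-positive types---which is exactly why the inert types $v\le 0$ are retained in the post-leapfrog belief $F_{[\uv,0]\cup[c^*/2,\ov]}$---with Proposer then offering $0$ forever, so every type's continuation surplus is zero. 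Universal acceptance makes the deviation worth only $u(a)<u(c^*)$, strictly below the continuation value, uniformly along the path. So the hypothesis $\uv\le 0$ is not ``unavailable'' in the delicate case, as you assert; it is precisely the resource the paper uses there, deployed through off-path beliefs rather than through the $\ov\le 1/2$ device (which, when $\uv>0$, applies unchanged with belief degenerate on $\ov$ and perpetual offers of $2\ov$).
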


In the equilibrium identified by \Cref{prop:commitment_construction}, Proposer begins by leapfrogging with an offer of $0$; if that offer is rejected, he knows that Vetoer's type is either below $0$ or above $c^*/2$. Naturally, he is only concerned with the latter possibility. So, upon the rejection of offer $0$, we are able to use essentially the same skimming construction as in \Cref{prop:existence_decreasing_offers}, but with the conditional distribution $F_{[c^*/2,\ov]}$. For large $\delta$, this implements a fine-grid sequence of decreasing offers down to $c^*$. As $\delta\to 1$, the overall outcome thus converges to that of Vetoer simply choosing (with no delay cost) her preferred action from the optimal delegation set $[c^*,1]$, or exercising her veto. 

Let us highlight a few points about the construction. First, for the reasons discussed after \autoref{prop:existence_decreasing_offers}, we use the hypothesis that either $\uv\leq 0$ or $\ov\leq 1/2$ to ensure validity of the skimming construction after offer $0$ has been rejected. Notably, then, \autoref{prop:commitment_construction} is valid even when $\ov>1$, so long as $\uv\leq 0$. Second, the equilibrium must incentivize Proposer in the first period to offer action $0$ rather than some higher action. This is ensured by stipulating that if Proposer deviates to action $a>0$ in the first period, continuation play follows that of the skimming equilibrium constructed in \Cref{prop:existence_decreasing_offers}. Such a deviation yields Proposer a payoff no more than (approximately) the payoff from full delegation, which is strictly less than the commitment payoff that is approximately achieved on path.

Third, although we view the leapfrogging-followed-by-skimming dynamics in \autoref{prop:commitment_construction} to be intuitive, we do not rule out other dynamics that also deliver approximately Proposer's commitment payoff. In particular, it is plausible that one may use the same approach to construct equilibria in which Proposer begins with some skimming, then leapfrogs with offer $0$, and then continues skimming again. There may also be other dynamics. Fourth, \autoref{prop:commitment_construction} crucially exploits equilibrium payoff multiplicity: we use a low-payoff skimming equilibrium to construct a high-payoff equilibrium. This approach is reminiscent of the ``reputational equilibria'' in \cite{AD:89}. By contrast, the logic we use to prove our main result, \Cref{prop:commitment_achievable_general}, does not leverage equilibrium payoff multiplicity; it would apply even if there is no skimming equilibrium and even if all equilibria yield Proposer a high payoff.\footnote{On the other hand, we noted at the end of \autoref{sec:mainresult} that it is not straightforward to extend the approach used in proving \autoref{prop:commitment_achievable_general} absent optimality of interval delegation (\autoref{ass:interval_optimal}). But given a low-payoff equilibrium, the logic underlying \autoref{prop:commitment_construction}'s construction ought to support a high-payoff equilibrium so long as some deterministic mechanism---even if not interval delegation---solves the static problem.}

\subsection{Is Leapfrogging Necessary?}\label{sec:leapfroggingnecessary}

We have highlighted leapfrogging as the driving force to achieve Proposer's commitment payoff, so long as full delegation is not optimal (in which case, by Section 5.1, skimming suffices).  In fact, leapfrogging is then more or less necessary:

\begin{proposition}\label{prop:leapfrogging_necessary}
Suppose that the essentially unique solution to the static problem is an interval delegation set that is not full delegation. Proposer's payoff in any skimming equilibrium is bounded away (across $\delta$) from the commitment payoff.
\end{proposition}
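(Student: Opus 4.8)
The plan is to bound Proposer's payoff in any skimming equilibrium by the full-delegation payoff $\underline U(F)$, and to observe that essential uniqueness of interval delegation makes this bound strictly below the commitment payoff. Since full delegation is the delegation set $[2\uv^+,1]$ while the essentially unique optimum is $[c^*,1]$ with $c^*>2\uv^+$, the set $[2\uv^+,1]$ is strictly suboptimal in the static problem, so $\underline U(F)<U(F)$; put $\eta:=U(F)-\underline U(F)>0$. It then suffices to show that every skimming equilibrium yields Proposer a payoff that (for large $\delta$, and then uniformly) does not exceed $\underline U(F)$. This amounts to extending the upper-bound half of \autoref{prop:existence_decreasing_offers}'s patient-limit claim from the specific constructed equilibrium to \emph{all} skimming equilibria; the point is that this half of the argument uses only the skimming property, not the construction.

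First I would record the structural consequences of skimming. Because every on-path non-negative offer is accepted by an upper set of the remaining types, and SCED guarantees interval choice, Proposer's on-path beliefs are right-truncations $F_{[\uv,v^\dagger]}$ and on-path offers are (weakly) decreasing. I would then invoke the standard Coasian force, adapted to single-peaked preferences: Proposer cannot credibly commit not to concede further, and sequential rationality drives on-path offers down to approximately $2\uv^+$---the action that leaves even the lowest serviceable type $\uv^+$ with no surplus (the ``gap'' case when $\uv>0$ and the no-gap limit when $\uv\le 0$). Given that offers descend through her ideal, in the patient limit each type $v\in(2\uv^+,1)$ rejects every action bounded above $v$ (waiting for an offer $\approx v$ becomes nearly costless) and hence reaches agreement on an action converging to $v$ with vanishing delay. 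Integrating, Proposer extracts only $u(v)$ from each such type, with the pooled tails contributing $u(2\uv^+)$ below $2\uv^+$ and $u(1)$ above $1$; the total converges to $\underline U(F)$. In particular the pooling of $[c^*/2,c^*]$ at the action $c^*>v$ that $[c^*,1]$ prescribes is unattainable.

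The robust obstruction underlying this---and the reason the bound is strict even away from the patient limit---is the marginal type at $c^*/2$. Under $[c^*,1]$ this type is pooled at $c^*$, where she is exactly indifferent with the status quo, $u_V(c^*,c^*/2)=0$. In any skimming equilibrium Proposer's lack of commitment forces him, once the higher types have traded, to keep lowering offers toward $2\uv^+$ in order to extract positive surplus from types below $c^*/2$; but then, for \emph{any} $\delta>0$, type $c^*/2$ strictly prefers to wait for such a later positive offer $a'$ (worth $\delta\,u_V(a',c^*/2)>0$) rather than accept $c^*$ (worth $0$). By interval choice this unravels upward to a nondegenerate set of types just above $c^*/2$ that reject $c^*$ and trade at actions strictly below it, so Proposer's extraction from them falls short of the $u(c^*)$ prescribed by interval delegation. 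This is exactly why a skimming equilibrium cannot replicate the conditionally optimal pooling guaranteed by \autoref{lemma:opt_delegation_set}, and it is dual to the leapfrogging logic that \emph{can} sustain that pooling.

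Finally I would assemble the uniform-in-$\delta$ statement. For $\delta$ near $1$, the second step gives payoff at most $\underline U(F)+o(1)<U(F)-\eta/2$. For $\delta$ bounded away from $1$ I would argue separately that heavy discounting precludes approaching $U(F)$: since $U(F)$ is attained only by the continuum menu $[c^*,1]$, and a skimming equilibrium is payoff-relevantly concluded within a number of periods controlled by $\delta$, the induced static mechanism (via \autoref{lemma:static_provides_upper_bound}) stays a fixed distance from interval delegation, so essential uniqueness forces the payoff below $U(F)$; gluing the two regimes yields a single $\eta'>0$. I expect the main obstacle to be precisely this uniform control: making the Coasian ``offers must descend to $2\uv^+$'' claim quantitative and uniform across all skimming equilibria rather than only the constructed one, and in particular ruling out that at intermediate $\delta$ Proposer extracts $c^*$ from most of $[c^*/2,c^*]$ while the anticipated downward lowering that defeats this remains merely second-order.
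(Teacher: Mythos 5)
There is a genuine gap, and it lies in your very first reduction. You propose to bound \emph{every} skimming equilibrium by the full-delegation payoff ${\underline U(F)}$, asserting that the upper-bound half of \autoref{prop:existence_decreasing_offers} ``uses only the skimming property, not the construction.'' That is not so: the paper proves that upper bound only for equilibria supported by an $(R,P)$ pair in the sense of \autoref{def:support_eq}, and the argument leans on the value function $R_n$, the acceptance envelope $\oP_n$, and inequality \eqref{eq:lower_bound_R}, none of which are available for an arbitrary skimming equilibrium. Your substitute---``invoke the standard Coasian force'' to conclude that on-path offers in any skimming equilibrium must descend to approximately $2\uv^+$---is precisely the step that has no general theorem behind it: Coasian payoff uniqueness across \emph{all} equilibria holds in seller--buyer bargaining only in the gap case, and non-stationary reputational constructions in the spirit of \citet{AD:89} are themselves skimming equilibria that sustain offers bounded above the Coasian level. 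Nothing in the paper (or in your sketch) rules out skimming equilibria here with patient-limit payoffs strictly above ${\underline U(F)}$; the proposition deliberately claims only the weaker statement that such payoffs stay bounded away from $U(F)$, and your route proves the weaker statement only by way of an unproven---quite possibly false---stronger one.

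The paper's actual proof avoids this entirely. It supposes skimming-equilibrium payoffs along $\delta_n\to 1$ converge to $U(F)$, encodes each equilibrium outcome as the monotone function $B_n(v):=\E[\delta^t a_t]$, extracts a pointwise and $L^1$ limit $B$ by Helly's selection theorem, and uses the \emph{essential uniqueness} hypothesis to force $B$ to coincide with the interval-delegation outcome: $B=0$ below $c^*/2$, $B=c^*$ on $(c^*/2,c^*)$, etc. The skimming property then guarantees an on-path history at which Proposer's belief is a right-truncation $F_{[\uv,c]}$ with $c\ge c^*/2-\varepsilon$ and his continuation payoff is at most $u(\varepsilon)$; a deviation to a fine descending grid on $[\epsilon',c]$ is accepted near each type's ideal point in \emph{any} Vetoer best response (so no continuation threat can defeat it) and yields a strictly positive payoff---contradiction. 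Your ``marginal type at $c^*/2$'' paragraph contains the right economics (the near-commitment outcome gives the residual low types nothing, which cannot be sequentially rational once Proposer faces only them), but you deploy it as a consequence of a presumed descent to $2\uv^+$ rather than as a localized contradiction against the near-commitment outcome itself, which is the move that makes the proof go through without any global Coasian claim. Your separate worry about $\delta$ bounded away from $1$ is legitimate but secondary (a single offer per period with nonvanishing discounting cannot approximate the continuum menu $[c^*,1]$, given \autoref{lemma:static_provides_upper_bound} and essential uniqueness); the paper's proof addresses only the $\delta_n\to 1$ regime, where all the difficulty sits.
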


We view the assumption that the static problem has a unique solution (essentially---i.e., up to a set of types of measure $0$) as mild. That it is not full delegation is equivalent to $c^*>2\uv^+$. For instance, this inequality holds when $\uv \leq 0$, $u(\cdot)$ is affine on $[0,1]$, and Vetoer's type density $f$ is logconcave and attains a unique peak at some $v>0$.\footnote{An affine $u$ and logconcave $f$ ensure that interval delegation is optimal; $f$ having a unique peak at $v>0$ implies the interval's threshold is $c^*>0$. See \citet*{KKVW:21}.} Note that $\uv \leq 0$ assures existence of both a skimming equilibrium (\autoref{prop:existence_decreasing_offers}) and a commitment-payoff equilibrium (\autoref{prop:commitment_construction}).

The intuition for \autoref{prop:leapfrogging_necessary} is that for any large $\delta<1$, to achieve close to the commitment payoff, the outcome must be approximately that (i) Proposer reaches agreement with all types above $c^*/2$ on their preferred actions in $[c^*,1]$ without excessive delay, and (ii) all types below $c^*/2$ obtain the status quo (or some other actions only after significant delay). But if (i) happens in a skimming equilibrium, then eventually Proposer will be faced with, approximately, the type distribution $F_{[\uv,c^*/2]}$, in which event he will not find it optimal to induce (ii); he could profitably deviate to a fine-grid sequence of offers in $[0,c^*/2]$ that are accepted by most remaining positive types with virtually no delay cost. Note that this logic applies even if we are in the no-gap case ($\uv\leq 0$).

Subject to its conditions, \autoref{prop:leapfrogging_necessary} implies that any equilibrium that achieves approximately the commitment payoff must, with positive probability, have a leapfrogging offer $a\geq 0$ that is accepted by some low type and yet rejected by some higher type. In such an equilibrium, with positive probability, the sequence of on-path offers will not be decreasing: for, an upper set of types would accept the current offer if future offers are certain to be lower. Therefore, leapfrogging plays an indispensable role in yielding the commitment payoff.

\section{Related Literature}
\label{sec:literature}
\label{sec:discussion}

We now relate our work to some prior literature.

\paragraph{Veto Bargaining with Incomplete Information:} 

Existing work on sequential veto bargaining with incomplete information focuses on short horizons, typically two periods, and/or myopic Vetoer behavior (e.g., \citealt{RR:79}, \citealt{DR:92}, Chapter 4 of \citealt{Cameron:00}, \citealt{RosenthalZame2022}, \citealt{Chen2021}).\footnote{We highlight work that is most closely related to ours. But there have, of course, been studies on other aspects of veto bargaining with incomplete information. For example, \citet{Matthews:89} models veto threats, whereby Vetoer sends a cheap-talk message prior to Proposer making a take-it-or-leave-it offer. \citet{mccarty1997presidential} considers two-issue bargaining, wherein Vetoer may reject a proposal on one issue to influence proposals on the second issue. \citepos{groseclose2001politics} model of blame-game politics shows that in a three-player game, Proposer may make an offer that he knows Vetoer will reject in order to convince a third party (e.g., voters) that Vetoer has extreme preferences.} These analyses elucidate nicely some of the strategic forces, but 
either a short horizon 
or myopic Vetoer behavior precludes the potency of Coasian dynamics. The only exception to these approaches that we are aware of is the unpublished work of \cite{CE:94}, who study a long finite horizon with sophisticated players. All these authors, including \citeauthor{CE:94}, are interested in skimming equilibria. Our analysis shows that---unlike in seller-buyer bargaining---it is important to account for the possibility of leapfrogging because that can both invalidate a putative skimming equilibrium (recall the discussions of both \autoref{prop:twotypes}\ref{prop:twotypes1} and \autoref{prop:existence_decreasing_offers}) and lead to qualitatively different equilibria with higher Proposer payoff.

Recently, in a two-period model, \citet{Evdokimov:21} has emphasized what he views to be ``non-Coasian'' equilibria in veto bargaining. He studies committees in which voter preferences are determined by a binary state, analogous to our two-type example. Single-peaked voter preferences are important to his analysis, as they are to ours; however, our papers focus on distinct implications of single-peakedness, and the nature and import of our results are markedly different. To see that, consider his setting when a single vote is enough to overturn the status quo; it is effectively then as if Proposer faces a single vetoer. Here \citeauthor{Evdokimov:21} finds a unique equilibrium, which has skimming. Leapfrogging does not arise because of the combination of only two periods and his assumption that Proposer’s utility is globally increasing in the action.\footnote{An analog would be a two-period version of our \autoref{sec:twotypes} with the assumption that $h<1/2$. In that case, if type $l$ agrees first, then agreement in the second period with type $h$ has to be on action $2h$, which provides $h$ no surplus; so the only first-period action that can support leapfrogging is $0$, which turns out to be unsupportable for any prior. On the other hand, when either $h>1/2$ or there are more than two periods with $\delta<1$, arguments related to those for \autoref{prop:twotypes} can be used to conclude that leapfrogging is supportable for suitable priors.}   Instead, what \citeauthor{Evdokimov:21} deems non-Coasian are equilibrium outcomes in which, using our two-type notation from \autoref{sec:twotypes}, Proposer obtains utility that exceeds $u(2l)$ as $\delta\rightarrow 1$. He notes that such outcomes arise if $h>2l$. The reason is simply that type $h$ prefers some actions strictly above $2l$ to $2l$, and hence Proposer can guarantee a utility exceeding $u(2l)$ by first offering $h$ and then $2l$.  By contrast, we focused on arguably the more interesting case of $h<2l$, because that means separation cannot be achieved (when players are patient) with both types getting actions above $2l$. More generally, we do not take a stance on what the Coase Conjecture ought to mean in veto bargaining. Instead, our key contribution for two types and beyond is to unsheathe the leapfrogging implications of single-peaked preferences, which yield equilibria that have non-skimming dynamics and high Proposer payoffs. Furthermore, our main result (\autoref{prop:commitment_achievable_general}) is substantially stronger than just comparing with a single take-it-or leave it offer, which is \citepos{Evdokimov:21} benchmark.

\paragraph{Seller-Buyer Bargaining:}  In canonical models of seller-buyer bargaining in which the buyer is privately informed of his value, all equilibria feature skimming. \citet*{FLT:85} and \citet*{GSW:86} establish the Coase Conjecture: at the patient limit, the seller’s payoff is that of pricing at the lowest buyer valuation. More precisely, this holds in any equilibrium of the ``gap'' case (the gains from trade are bounded away from $0$) or in any ``stationary/weak Markov’’ equilibrium of the ``no gap'' case. Indeed, there is a unique equilibrium payoff for the seller in the gap case. By contrast, even in the gap case of our model (i.e., $\underline v>0$), Proposer can obtain his commitment payoff and there can be genuine payoff multiplicity. \cite{AD:89} show that in the seller-buyer no gap case, there also exists a non-stationary ``reputational equilibrium’’ in which the seller obtains his commitment payoff. This equilibrium preserves high prices by punishing the seller with Coasian low-payoff continuation play if he deviates. Our argument for Proposer’s commitment payoff is distinct; it owes to leapfrogging, which is ruled out by the skimming property of seller-buyer bargaining.\footnote{For a gap-case specification, \citet{DS:21} show that the Coasian outcome cannot be escaped even using arbitrary within-period mechanisms. In our setting, even if we allow for such mechanisms, it follows from the discussion in \autoref{sec:staticproblem} that our commitment payoff is still an upper bound; consequently, the equivalence between commitment and Proposer's best no-commitment equilibrium would prevail.}

\cite{board2014outside} show that when buyers have outside options, there is a unique equilibrium outcome and it yields a high seller payoff. The seller charges the static monopoly price---defined for the distribution of values net of the outside option---and all buyer types with lower net values immediately take their outside option. Since low types exit immediately, the seller can credibly stick to the monopoly price even upon rejection. In our analysis, leapfrogging also clears low types to subsequently credibly target high types. But our model has no outside options and it is Vetoer's single-peaked preferences that makes leapfrogging viable. Moreover, unlike in \citet{board2014outside}, low-payoff equilibria can coexist with the commitment-payoff equilibrium.\footnote{\cite{HL:17} and \cite{Fanning:21} highlight equilibrium multiplicity in seller-buyer models related to \cite{board2014outside}.} The idea that low agent types’ incentives to exit can allow a principal to obtain her commitment payoff also features in \cite{tirole2016bottom}. But there, unlike in our model, a reverse-skimming property holds, i.e., any equilibrium has ``positive selection’’ at every history.

Also related to our work are models in which the seller sells multiple varieties. \cite{Wang:98}, \cite{hahn2006damaged},  and \citet{Mensch:17} study bargaining when there is a choice of both quality and price (or effort and wage in a labor context). In these models, the seller or principal offers a menu in each period but cannot commit to future menus. The key finding is that the principal obtains his commitment payoff in the unique equilibrium. More recent developments include   \cite{nava2019differentiated}, who propose a multidimensional extension of the Coase Conjecture, and \citet{Peski:22}, who establishes payoff uniqueness in a broad class of bargaining protocols and mechanisms.\footnote{Although \citet{Peski:22} studies a single indivisible good, he allows for commitments to probabilistic trade, which is effectively the same as varieties.}
In our model, not only are transfers infeasible, but moreover Proposer can offer only a single action, rather than a menu, in each period. This hews to the standard approach in studying sequential veto bargaining, and seems appropriate for some non-market applications in politics and organizations. Nevertheless, we deduce equilibria that deliver Proposer's commitment payoff. It would be interesting to study whether allowing for menus eliminates the payoff multiplicity we find. Conversely, our results raise the possibility that if a seller could offer only a single variety in each period in the aforementioned papers’ settings, then there may be payoff multiplicity but the commitment payoff may remain achievable.\footnote{\citet{Kumar:06} studies such a setting and finds a unique equilibrium that does not yield the principal a high payoff. We attribute this to his model/analysis excluding the quality-price pair that would be used for leapfrogging. A similar point applies to \citet{inderst2008durable}, who studies a model with menus but finds that in some cases the principal's commitment payoff does not obtain.}

\paragraph{Renegotiation and Endogenous Status Quo:} Our model assumes that there is commitment to not renegotiate an accepted offer. A useful extension, which we do not pursue here, would be to model any agreement as the status quo for future negotiations; this would, of course, influence Vetoer's incentives to accept an offer insofar as it reveals information about her preferences that will affect future offers.
Although renegotiation has been studied in seller-buyer settings since \citet{hart1988contract} (see \cite{strulovici2017contract}, \citet{Maestri:17}, and \citet{GerardiMaestri:20} for recent contributions), the existing literature on political bargaining with an endogenous status quo, surveyed by \cite{EEZSurvey:20}, has generally not incorporated private information.

\section{Conclusion}\label{sec:conclusion}
Our paper has studied a canonical infinite-horizon model of sequential veto bargaining. We have shown how leapfrogging---making an offer that is accepted by some low types and rejected by some higher types---allows Proposer to alleviate his sequential rationality constraint and credibly extract surplus from high types; so much so that under some conditions, Proposer can (approximately) obtain his commitment payoff in an equilibrium when players are patient.

There are various directions that may be fruitful for future research. On the theoretical side, it would be of interest to incorporate ``pork'' or other forms of transfers in addition to the policy that our players have single-peaked preferences over. Studying a multidimensional policy is also important for political applications. On the empirical side, our work cautions against a presumption that Proposer's offers are successive concessions,\footnote{For example, in their survey article, \citet[p.~424]{CM:04} state a prediction that ``In sequential veto bargaining, Congress makes concessions in repassed bills'', as they did not consider the possibility of leapfrogging.} and calls for attention to whether and when we observe leapfrogging. Given that we have identified the coexistence of skimming and leapfrogging equilibria, norms in sequential veto bargaining with incomplete information are especially important; our results show how significantly Proposer could benefit from a favorable equilibrium. Laboratory experiments may be a fertile ground to deepen our understanding of equilibrium selection.

\vspace{.25in}

\begin{singlespace}
    \addcontentsline{toc}{section}{References}
    \bibliographystyle{ecta}
    \bibliography{AKK}
\end{singlespace} 

\appendix

\section{Proofs for Two-Type Example}

Recall that for the two-type example, we restrict attention to actions in $[0,1]$. The following proofs can be extended straightforwardly to handle actions outside $[0,1]$, but we omit that discussion for brevity.

\begin{lemma}
\label{lem:twotypes-skimming}
Fix any large $\delta<1$. Inductively define an increasing sequence $a^0:= 2l < a^1 < \ldots < a^N := 1$, where for each $i\geq 1$, $a^i$ is defined by either $u_V(a^{i},h)=\delta u_V(a^{i-1},h)$ if there is a solution with $a^i \in (a^{i-1},1]$, and otherwise $a^i:= 1$.\footnote{We suppress the dependence of $N$ and each $a^i$ (for $0<i<N$) on $\delta$.}
\begin{enumerate}[label={\normalfont (\alph*)}]
    \item \label{skimming1} If offers are restricted to lie in $[2l,1]$ then for any prior $\mu_0$ there is a skimming equilibrium in which, on path, Proposer first offers some $a^n$ with probability one and then works his way down the $(a^i)_{i=n}^0$ sequence to $2l$. Any offer $a^i>2l$ is rejected by type $l$ and accepted by type $h$ with positive probability. Both types accept the final offer of $2l$.
    \item \label{skimming2} Define $\mu^\delta \in (\mu^*,1)$ as the smallest belief that makes Proposer indifferent between the payoff from this (restricted) equilibrium and the payoff from leapfrogging, i.e., obtaining $a^\delta$ from type $l$ in the first period and action $1$ from type $h$ in the second period.\footnote{\label{fn:mudelta}The belief $\mu^\delta$ is well defined for large enough $\delta$. To confirm that, note first that for any $\mu_0 \leq \mu^*$, Proposer's payoff from leapfrogging, $\mu_0 \delta u(1)+(1-\mu_0)u(a^\delta)$ is strictly less than $u(2l)$ by definition of $\mu^*$ and that $a^\delta < 2l$; whereas his payoff from the (restricted) skimming equilibrium is at least $u(2l)$. Second, following the established seller-buyer analysis, for any interior belief $\mu_0$ Proposer's payoff in the (restricted) skimming equilibrium converges to $u(2l)$ as $\delta \to 1$, whereas leapfrogging's payoff converges to the strictly larger $\mu_0 u(1)+(1-\mu_0)u(2l)$. The result follows from continuity of both skimming and leapfrogging's payoffs in $\mu_0$.} If $\mu_0\leq \mu^\delta$, then the above skimming equilibrium exists without restriction on the space of offers: any offer in $(a^\delta,2l)$ is accepted by both types, while any offer in $[0,a^\delta]$ is accepted by $l$ and rejected by $h$. As $\delta \to 1$, $\mu^\delta \to \mu^*$.
    \item \label{skimming3} As $\delta \to 1$, Proposer's payoff in the above skimming equilibrium converges to $u(2l)$ regardless of his prior in the relevant range: for any $\epsilon>0$, there exists $\underline \delta<1$ such that if $\delta \in (\underline \delta,1)$ and $\mu_0 \leq \mu^\delta$, then Proposer's payoff in the skimming equilibrium is in $[u(2l),u(2l)+\epsilon)$.
\end{enumerate}
\end{lemma}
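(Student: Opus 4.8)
The three parts adapt the standard Coasian skimming construction from seller--buyer bargaining \citep[e.g.,][pp.~409--410]{FT:91}, the novelty being that single-peaked preferences make type $l$ trade only at the bottom offer $2l$ and open up a leapfrogging deviation that must be deterred. For part \ref{skimming1}, I would first observe that the defining recursion $u_V(a^i,h)=\delta u_V(a^{i-1},h)$ is precisely type $h$'s indifference between accepting $a^i$ now and accepting $a^{i-1}$ one period later; iterating shows that accepting any $a^i$ is indifferent to accepting $2l$ at period $i$, which is what sustains type $h$'s randomization along the path. Since every offer above $2l$ gives type $l$ strictly negative surplus (and $2l$ gives her exactly $0$) while the lowest offer ever reached is $2l$, it is optimal for type $l$ to reject all offers above $2l$ and accept $2l$. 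It remains to build Proposer's strategy: as in the seller--buyer case, I would partition the belief space (the probability of $h$) into intervals by cutoffs and specify that, at a belief in the $i$-th interval, Proposer offers $a^i$ and type $h$ accepts with exactly the probability that moves the posterior to the boundary of the $(i-1)$-th interval upon rejection. The cutoffs and acceptance probabilities are pinned down, as in the seller--buyer construction, by requiring Proposer to be indifferent among the relevant adjacent offers and beliefs to obey Bayes' rule. I would then check that no deviation to another offer in $[2l,1]$ is profitable---routine since $u$ is linear on $[0,1]$. The starting index $n$ is the interval containing $\mu_0$, and by construction offers decrease from $a^n$ down to $2l$.

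For part \ref{skimming2}, I would take the well-definedness and limiting behavior of $\mu^\delta$ as established by the argument accompanying its definition and focus on extending the equilibrium to unrestricted offers for $\mu_0\le\mu^\delta$. The responses to the newly available low offers are as in the statement, and I would verify sequential rationality by setting off-path beliefs after a rejection to be degenerate on $h$ (so Proposer offers $1$ thereafter): then any offer in $(a^\delta,2l)$ leaves both types strictly preferring to accept (type $h$ strictly prefers accepting to waiting one period for $1$, and type $l$ earns positive surplus whereas waiting yields her nothing), while any offer in $[0,a^\delta]$ is accepted by $l$ and rejected by $h$, the latter being exactly indifferent at $a^\delta$ between accepting and waiting for $1$. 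The only remaining task is to rule out a profitable Proposer deviation below $2l$. An offer in $(a^\delta,2l)$ is accepted by both and yields at most $u(2l)$, hence is dominated by the on-path payoff (at least $u(2l)$ by part \ref{skimming3}'s lower bound). An offer $a\le a^\delta$ yields $(1-\mu_0)u(a)+\mu_0\delta u(1)$, which is increasing in $a$ and so is maximized at $a=a^\delta$, giving exactly the leapfrogging payoff; by the definition of $\mu^\delta$ this fails to exceed the skimming payoff precisely when $\mu_0\le\mu^\delta$. The limit $\mu^\delta\to\mu^*$ follows because, as $\delta\to1$, the skimming payoff tends to $u(2l)$ while the leapfrogging payoff tends to $(1-\mu_0)u(a^*)+\mu_0 u(1)$, and equating these is the defining equation \eqref{eq:eg-sepoptimal} of $\mu^*$.

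For part \ref{skimming3}, the lower bound is immediate: offering $2l$ in the first period is always available and is accepted by both types---type $l$ weakly, type $h$ strictly, since rejection leads only to the higher offer $1$---so the equilibrium payoff is at least $u(2l)$. The upper bound is the Coase-conjecture content. Here I would note that we are in a \emph{gap} case: every on-path offer in $[2l,1]$ gives type $h$ surplus at least $2h-1>0$, so gains from trade are bounded away from $0$. Proposer's payoff is bounded above by $u$ of the highest on-path offer, namely $u(a^n)$ with $a^n=2h-\delta^n(2h-2l)$. Importing the seller--buyer gap-case analysis \citep[e.g.,][]{FLT:85,GSW:86}, the starting offer converges to $2l$ as $\delta\to1$; since the starting offer is increasing in the prior and $\mu_0\le\mu^\delta$ with $\mu^\delta\to\mu^*<1$, the relevant priors are uniformly bounded away from the degenerate belief on $h$, so $a^n\to 2l$ uniformly over this range. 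Combining the two bounds gives the claim.

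The main obstacle is twofold. The genuinely novel step---absent in seller--buyer bargaining---is the deterrence of leapfrogging in part \ref{skimming2}: recognizing that the most tempting downward deviation is exactly the offer $a^\delta$ (neither an interior offer nor a still-lower one) and tying the equilibrium's range of validity to $\mu_0\le\mu^\delta$ via Proposer's indifference. The most technically demanding step is the uniform upper bound in part \ref{skimming3}: showing the initial offer collapses to $2l$ requires the full force of the gap-case Coase-conjecture argument, and care is needed to obtain \emph{uniform} rather than merely pointwise convergence, which is exactly where boundedness of $\mu^\delta$ away from $1$ is essential.
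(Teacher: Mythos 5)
Your overall route is the paper's: parts (a) and (c) are imported from the two-type seller--buyer skimming construction, and the genuinely new content is located exactly where you put it---deterring sub-$2l$ deviations in part (b) by specifying that type $l$ accepts them, that the belief jumps to $h$ upon rejection (with offer $1$ forever after), and that the most tempting deviation is therefore exactly $a^\delta$, with validity tied to $\mu_0\le\mu^\delta$. Your part (b) and part (c) arguments, including the use of $\lim_{\delta\to 1}\mu^\delta=\mu^*<1$ to get uniformity in (c), coincide with the paper's.

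There is, however, one concrete gap in part (a), and it is precisely the point the paper singles out as the \emph{one} difference from the seller--buyer template. Your construction delivers type $h$'s on-path indifference automatically from the recursion $u_V(a^i,h)=\delta u_V(a^{i-1},h)$. That chain breaks at the top of the ladder whenever $a^N=1$ is defined by the action cap rather than by indifference, i.e., when $u_V(1,h)>\delta u_V(a^{N-1},h)$. Part (a) is claimed for \emph{any} prior, including priors high enough that the starting offer is this capped $a^N=1$. There, if Proposer's round-2 offer were deterministically $a^{N-1}$, type $h$ would strictly prefer to accept $1$ and so could not be made to randomize; but pure acceptance is also inconsistent, since rejection would then reveal type $l$, prompting an offer of $2l$ next period, and for large $\delta$ type $h$ strictly prefers waiting for $2l$ to accepting $1$ (because $2l<1$ implies $2h-2l>2h-1$). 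In seller--buyer bargaining this never arises---there is no price cap---so your blanket appeal to that construction does not cover it. The fix the paper uses is on-path Proposer randomization in the second round between the starting offers $a^{N-1}$ and $a^{N-2}$, calibrated to make type $h$ exactly indifferent at $1$; such a mixture exists because $h$ strictly prefers accepting $1$ against continuation $a^{N-1}$ and strictly prefers rejecting against $a^{N-2}$ (as $\delta u_V(a^{N-2},h)=u_V(a^{N-1},h)>u_V(1,h)$).

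One smaller repair in part (b): you verify the leapfrogging deviation only at the initial belief $\mu_0$, but the deviation must be unprofitable at every on-path history, i.e., at every posterior $\mu_t\le\mu_0$ reached along the skimming path. This is where the definition of $\mu^\delta$ as the \emph{smallest} indifference belief earns its keep: since the skimming-minus-leapfrogging payoff difference is continuous, strictly positive for $\mu\le\mu^*$, and has no zero below $\mu^\delta$, it is nonnegative on all of $[0,\mu^\delta]$, so the deviation is deterred at every $\mu_t\le\mu^\delta$---the paper makes this monotonicity/no-crossing point explicit, and your ``precisely when $\mu_0\le\mu^\delta$'' phrasing should be upgraded to this statement.
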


\begin{proof}
\underline{Part \ref{skimming1}}: Owing to the restriction to offers in $[2l,1]$, this part follows from arguments analogous to those in the
two-type seller-buyer bargaining problem (\citealp{Hart:86}; \citealp{FT:91}, pp.~{409--10}). So we omit a proof, instead only noting two points. First, if Proposer is indifferent between two first offers (as can also arise in the seller-buyer 
construction), we specify for concreteness that Proposer chooses the lower of the two. Second, there is one difference with the usual seller-buyer construction: if Proposer's first offer is $a^N=1$, and $a^N$ was defined by the action cap of $1$ rather than type $h$'s indifference, then Proposer will need to randomize on path between proposing $a^{N-1}$ and $a^{N-2}$ in the second round. Proposer's second-round randomization is chosen to make type $h$ indifferent between accepting and rejecting $a^N=1$; a suitable randomization exists because $h$ would strictly prefer accepting  $a^N=1$ if Proposer were to offer $a^{N-1}$ next, while $h$ would strictly prefer rejecting $a^N=1$ if Proposer were to offer $a^{N-2}$ next. Such on-path Proposer randomization is not necessary in the seller-buyer problem because there is no price cap---or, in effect equivalently, Proposer ideal point---there. 

\underline{Part \ref{skimming2}}: We stipulate that after a deviation in any period $t$ to $a_t<2l$, type $l$ accepts, whereas $h$ accepts if and only if $u_V(a_t,h) > \delta u_V(1,h)$, which is equivalent to $a_t> a^\delta$. After a rejection of the deviation, Proposer puts probability $1$ on type $h$ and proposes action $1$ ever after. Clearly we have an equilibrium in any continuation game after the initial deviation. So we need only verify that no deviation to $a_t<2l$ is profitable. Plainly, among $a_t\leq a^\delta$, the most profitable deviation is to $a^\delta$; but by definition of $\mu^\delta$, that deviation is not profitable when $\mu_t\leq \mu^\delta$. (A higher $\mu_t$ makes leapfrogging more attractive than the (putative) skimming equilibrium because Proposer prefers the skimming equilibrium when Vetoer is of type $l$ and leapfrogging when Vetoer is of type $h$.) Any deviation to $a_t \in (a^\delta,2l)$ yields a lower Proposer payoff than the (putative) skimming equilibrium because the skimming equilibrium's payoff is at least $u(2l)$. Therefore, no deviation to $a_t<2l$ is profitable when $\mu_t\leq \mu^\delta$, and the skimming equilibrium exists without any restriction on offers.

To see that $\mu^\delta \to \mu^*$ as $\delta \to 1$, observe that for any $\mu_0$, as $\delta \to 1$ Proposer's payoff from leapfrogging goes to $\mu_0 u(1)+(1-\mu_0)u(a^*)$ whereas, as discussed in \autoref{fn:mudelta},  his payoff from skimming goes to $u(2l)$. Hence, by definition of $\mu^*$, for any $\mu_0>\mu^*$, skimming is strictly worse than leapfrogging when $\delta$ is large enough. The result now follows from $\mu^\delta$ being the smallest belief at which the payoffs from skimming and leapfrogging are equal, noting that for any $\delta$ skimming yields a strictly higher payoff than leapfrogging at belief $\mu^*$ (see \autoref{fn:mudelta}).

\underline{Part \ref{skimming3}}: Given the previous two parts, this result follows from the same arguments as in the standard seller-buyer model \citep[e.g.,][pp.~{409--10}]{FT:91}.
\end{proof}

\begin{proof}[Proof of \autoref{prop:twotypes}]
Part \ref{prop:twotypes1} follows from \Cref{lem:twotypes-skimming}.

To prove parts \ref{prop:twotypes2} and \ref{prop:twotypes3}, we first define two critical values: $r^\delta(\mu)$ and the $\bar \mu^\delta$ referred to in the statement of the result. Recall $\mu^\delta\in (0,1)$ from \autoref{lem:twotypes-skimming}\ref{skimming2}. (In what follows, we sometimes suppress the caveat of ``for large $\delta$''.) 
For any belief $\mu\in (\mu^\delta,1)$, let \begin{equation}
\label{eq:rdelta}
 r^\delta(\mu) := \frac{\mu^\delta (1-\mu)}{(1-\mu^\delta)\mu}
\end{equation}
be type $h$'s rejection probability that would lead to posterior $\mu^\delta$ after rejection, given that type $l$ rejects with probability 1.
Now let $\bar \mu^\delta<1$ be the value of $\mu$ that solves\footnote{\label{fn:barmudelta}One can check that the difference between the LHS and the RHS of \autoref{eq:barmudelta} is continuous and strictly decreasing in $\mu$, strictly positive for small $\mu$, and strictly negative for large $\mu$; hence there is a unique solution, which is interior.}
\begin{equation}
(1-\mu)u(a^\delta) +\mu \delta u(1) =(1-\mu)\delta u(a^\delta) +\mu \left[1-r^\delta(\mu)+r^\delta(\mu)\delta^2\right]u(1).\label{eq:barmudelta}
\end{equation}
Given belief $\mu$, the LHS of \autoref{eq:barmudelta} is Proposer's utility from leapfrogging, whereas the RHS corresponds to  getting $a^\delta$ in the next period from $l$ and a lottery from $h$ of either action $1$ in the current period with probability $1-r^\delta$ or the same action in two periods with probability $r^\delta$. It can be verified that $\bar \mu^\delta > \mu^\delta$ and $\lim_{\delta \rightarrow 1} \bar \mu^{\delta}<1$.\footnote{\label{fn:mubardelta_limit}As $\mu \to \mu^\delta$ from above, $r^\delta(\mu)\to 1$, and so the RHS of \autoref{eq:barmudelta} goes to $\delta$ times the LHS, which is strictly smaller than the LHS. The properties noted in \autoref{fn:barmudelta} then imply $\bar \mu^\delta>\mu^\delta$. From \autoref{lem:twotypes-skimming}\ref{skimming2}, $\lim_{\delta \to 1} \mu^{\delta} = \mu^* \in (0,1)$. Algebraic manipulations of Equations \eqref{eq:rdelta} and \eqref{eq:barmudelta} yield $\lim_{\delta \rightarrow 1} \bar \mu^{\delta} \in (\mu^*,1)$.} 

\begin{comment}
AN ARGUMENT FOR WHY \overline{\mu}^\delta converges to something larger than \mu^*:

Let's rewrite \eqref{eq:barmudelta} as:
\begin{align*}
    (1-\mu)u(a^\delta)(1-\delta) = \mu u(1)(1-\delta) -\mu u(1)r^\delta(\mu)(1-\delta^2).
\end{align*}
First divide both sides by $1-\delta$. Then take $\delta \to 1$. Since $\mu^\delta\rightarrow \mu^*$ and $r^\delta(\mu)$ is continuous in $\mu^\delta$, one obtains
\begin{align*}
    (1-\mu)u(a^*) = \mu u (1) - 2\mu u(1)r^*(\mu)
\end{align*}
where $r^*(\mu)$ uses $\mu^*$ rather than $\mu^\delta$. It then follows that
\begin{align*}
    \frac{\mu}{1-\mu} = \frac{u(a^*)}{u(1)(1-2r^*(\mu))}.
\end{align*}
Observe that this equality cannot hold if $\mu = \mu^*$ since the denominator of the RHS is then strictly negative. Because $r^*(\mu)$ is decreasing in $\mu$, it follows that the limiting $\mu$ must be higher than $\mu^*$.
\end{comment}

\noindent\underline{Part \ref{prop:twotypes2}}: The equilibrium strategies, beliefs, and incentives are as follows.

\begin{enumerate}[wide, labelindent=0pt]
    \item Proposer proposes $a^\delta$ in the first period and $1$ in the second period (and ever after), with belief $\mu_t=1$ after any rejection. Vetoer type $l$ accepts in the first period while type $h$ rejects in the first period but accepts any proposal of at least $a^\delta$ starting in the second period. Clearly Proposer has no incentive to deviate starting in the second period, and Vetoer is playing optimally in all periods, so what we must show below is that Proposer has no incentive to deviate in the first period.
    
    \item \label{easydev0} (Region I in \autoref{Figure-Leapfrogging}.) If Proposer deviates and offers any action $a_0 \in [0,a^\delta)$ in the first period, type $l$ accepts and $h$ rejects. After a rejection, Proposer's belief is $\mu_t=1$ ever after and so he proposes $1$ ever after, which is accepted in the second period by type $h$. It is clear that Vetoer is playing optimally and that any such deviation is not profitable for Proposer.

    \item \label{easydev1} (Region II in \autoref{Figure-Leapfrogging}.) If Proposer deviates and offers any $a_0\in(a^\delta,2l]$ in the first period, both types accept that; for large $\delta$, this outcome is worse for Proposer than the on-path outcome, since the latter's payoff is larger than $u(2l)$. Both types accept any $a_0\in (a^\delta,2l]$ because we stipulate if any such offer is rejected (a zero probability event), Proposer holds belief $\mu_t=1$ ever after and offers action $1$ ever after.
    
    \item \label{easydev2} (Region III in \autoref{Figure-Leapfrogging}.) Let $u_h^*$ denote type $h$'s payoff in the skimming equilibrium discussed in \autoref{lem:twotypes-skimming} when Proposer has belief $\mu^\delta$ defined there. Since $\mu^\delta \to \mu^*$, it follows from the established seller-buyer analysis that for $\delta$ large enough, Proposer's first offer in our skimming equilibrium is arbitrarily close to $2l$ and hence $u_h^*$ is arbitrarily close to but strictly less than $u_V(2l,h)$. Let $\bar a^\delta>2l$ be such that $h$ is indifferent between accepting $\bar a^\delta$ in the current period and receiving payoff $u_h^*$ in the next period. %\footnote{I.e., $\bar a^\delta$  is the larger of the two solutions to $\delta u_V(2l,h)= u_V(\bar a^\delta,h)$.} %Plainly, $\bar a^\delta \searrow 2l$ as $\delta \to 1$.  
    Note that $\bar a^\delta \approx 2l$ for large $\delta$.
    
Consider the interval $(2l,\bar a^\delta]$. As described in \autoref{lem:twotypes-skimming}, the skimming equilibrium (defined assuming actions constrained in $[2l,1]$) is constructed using a sequence of actions $a^{0}\equiv 2l<a^{1}<\ldots<a^{N}\equiv 1$
that is defined by $h$'s indifference. (We suppress the dependence of the sequence on $\delta$ to reduce notation.) Let $M\leq N-1$ be such that $a^M < \bar a^\delta\leq a^{M+1}$.

For any deviation $a_0\in (2l,a^1]$, $l$ rejects and $h$ accepts; Proposer holds belief $\mu_t=0$ and offers $a_t=2l$ ever after (accepted by type $l$ in the second period).

Suppose $\bar a^\delta>a^1$. For any deviation $a_0\in (a^{1},\bar a^\delta]$, let $n\in\{1,\ldots,M\}$ be such that $a_0\in(a^{n},a^{n+1 }]$.
Type $l$ rejects, while type $h$ rejects with the probability that makes the posterior $\mu_1=\mu^n$, where $\mu^n$ is the unique belief that makes Proposer indifferent between starting the decreasing offer sequence with $a^{n}$ and $a^{n-1}$. (Type $h$'s rejection probability is well-defined and unique so long as $\mu^n\leq \mu_0$, which will be verified below by showing that $\mu^n\leq \mu^\delta$.) Proposer will then randomize in the second period between the starting offers of $a^{n}$ and $a^{n-1}$. If Proposer were to start with $a^{n}$, $h$ would prefer to accept $a_0$; if Proposer were to start with $a^{n-1}$, $h$ would prefer to reject $a_0$; so there is a unique randomization that makes $h$ indifferent.
We are left to check that $\mu^n\leq \mu^\delta$: if so, then Proposer prefers the decreasing offer sequence to leapfrogging, and we can support the skimming equilibrium by specifying behavior for offers in $[0,2l]$ as in the proof of \autoref{lem:twotypes-skimming}\ref{skimming2}. Indeed $\mu^n\leq \mu^\delta$, since $n\leq M$ and under belief $\mu^\delta$ Proposer starts the decreasing offer sequence with $a^{M}$ while under belief $\mu^n$ it is optimal to start with $a^{n}$ (and a higher belief corresponds to a higher starting offer in the skimming equilibrium).\footnote{That Proposer starts the decreasing offer sequence with $a^{M}$ under belief $\mu^\delta$ follows from type $h$'s indifference in the definition of $\bar a^\delta$ and $\bar a^\delta \in (a^M,a^{M+1}]$. For, if Proposer started with an offer $a^{M-1}$ or lower, then $h$ would strictly prefer to wait for that offer in the next period rather than accept $\bar a^\delta$ in the current period; if Proposer started with an offer $a^{M+1}$ or higher, then $h$ would strictly prefer to accept $\bar a^\delta$.}

So a deviation to any $a_0 \in (2l,\bar a^\delta]$ yields Proposer a payoff that is no higher than from a skimming equilibrium with restricted action space $[2l,1]$ and belief $\mu_0$ (see \autoref{lem:twotypes-skimming}\ref{skimming1}). As $\delta \to 1$, the payoff from a (restricted) skimming equilibrium converges uniformly to $u(2l)$ on any interval of priors bounded away from $1$, whereas the payoff from leapfrogging converges uniformly to $\mu_0 u(1)+(1-\mu_0) u(a^*)$. The latter limit is strictly larger than the former limit when $\mu_0>\mu^*$, by definition of $\mu^*$. Since $\mu^\delta>\mu^*$ and $\lim_{\delta \to 1} \bar \mu^\delta <1$, it follows that for all $\delta$ large enough, the payoff from leapfrogging is strictly larger than from the (restricted) skimming equilibrium for all $\mu_0 \in (\mu^\delta,\bar \mu^\delta)$. Hence, for $\delta$ large enough, a deviation to any $a_0 \in (2l,\bar a^\delta)$ is not profitable.

    \item \label{interestingdev} (Region IV in \autoref{Figure-Leapfrogging}.) It remains to consider any first-period deviation $a_0\in (\bar a^\delta,1]$.
    \begin{itemize}
        \item Type $l$ rejects since $a_0>2l$. Type $h$ rejects with probability $r^\delta(\mu_0)$, independent of $a_0$, which leads to second-period belief $\mu_1=\mu^\delta$.

        \item In the second period: Proposer randomizes between starting the play of a skimming equilibrium (see \autoref{lem:twotypes-skimming}) with some probability $\lambda(a_0)$ and starting the leapfrogging path with remaining probability. By definition of $\mu^\delta$, Proposer is indifferent between starting either of these two paths. The randomization probability $\lambda(a_0)$ is set to make type $h$ indifferent between accepting $a_0$ in the first period and getting a lottery over payoff $u^*_h$ in the second period with probability $\lambda(a_0)$ and getting action $1$ in the third period with complementary probability.\footnote{I.e.,
        $u_V(a_0,h)=\lambda(a_0) \delta u_h^*+(1-\lambda(a_0))\delta^2 u_V(1,h)$. There is a unique $\lambda(a_0)$ that solves this equation because $\delta u^*_h>u_V(a_0,h)>\delta^2 u_V(1,h)$, where the first inequality is because $a_0>\bar a^\delta>h$ and $\delta u^*_h=u_V(\bar a_\delta,h)$.}
        
        For any second-period offer $a_1$ besides the two that Proposer randomizes over, we stipulate that continuation play would follow that in a skimming equilibrium with initial offer $a_1$. Plainly, no such offer $a_1$ is a profitable deviation.

        \item Finally, we argue that among deviations to $a_0\in (\bar a^\delta,1]$, the most profitable deviation is to action $1$, and that is not profitable because $\mu_0\leq \bar \mu^\delta$. Note that after a rejection of any $a_0>\bar a^\delta$, leapfrogging is optimal for Proposer in the second period. So Proposer's expected payoff from any $a_0 > \bar a^\delta$ is
$$(1-\mu_0)\delta u(a^\delta)+\mu_0\left[\left(1-r^\delta(\mu_0)\right)u(a_0)+r^\delta(\mu_0)\delta^2 u(1)\right].$$
This payoff is maximized when $a_0=1$, in which case it becomes the RHS of \autoref{eq:barmudelta} (with $\mu=\mu_0$). Since $\mu_0\leq\bar \mu^\delta$, the definition of $\bar \mu^\delta$ implies that leapfrogging starting in the first period is at least as good for Proposer (see \autoref{fn:barmudelta}).

    \end{itemize} 
\end{enumerate}

\noindent\underline{Part \ref{prop:twotypes3}}: The construction for this part is the same as that for part \ref{prop:twotypes2}, except that Proposer now proposes action 1 in the first period, rather than $a^\delta$. By the logic used in the last bullet of point \ref{interestingdev} above, proposing $a_0=1$ is better for Proposer than proposing any $a_0\in (\bar a^\delta,1)$, and also now better than proposing $a_0=a^\delta$ because $\mu_0>\bar \mu^\delta$. By points \ref{easydev0}--\ref{easydev2} above, $a_0=a^\delta$ is in turn better than any other first-period offer less than $\bar a^\delta$.
\end{proof}

\section{Proofs for General Analysis}

\subsection{Obtaining the Commitment Payoff without Commitment} 
\begin{proof}[Proof of \autoref{lemma:static_provides_upper_bound}]
Fix any strategy for Proposer and any best response for Vetoer, and denote this strategy profile by $\sigma$. For any type $v$, the profile $\sigma$ induces a probability distribution $\lambda_v$ over $\mathbb R \times \mathbb{N}\cup \{\infty\}$, where $(a,t)\in \mathbb R \times \mathbb N$ denotes the outcome that proposal $a$ is accepted in period $t$, and $\infty$ denotes no agreement. We construct an incentive compatible and individually rational mechanism for the static problem that achieves the same expected payoff for Proposer as under $\sigma$. 

For any $t\in\mathbb{N}$, let $\lambda_v(t)$ be the measure on $\Reals$ defined by $\lambda_v(t)(A):= \lambda_v(A\times\{t\})$ for every (Borel) set $A\subseteq \mathbb R$. Define a mechanism for the static problem as follows:
\[ m(v):= \sum_{t=0}^{\infty} \delta^t \lambda_v(t)+ \Big( 1- \sum_{t=0}^{\infty} \delta^t \lambda_v(t)(\Reals) \Big) \mathbf{1}_0,\]
where $\mathbf{1}_0$ denotes the Dirac measure on $0$. Intuitively, for every agreement $(a,t)$ that has positive probability under $\lambda_v$, $m(v)$ gives probability $\delta^t$ to action $a$ and probability $1-\delta^t$ to action $0$. It can be verified that $m(v)$ is a probability measure over $\Reals$. 

Since
\[ \int_{a} u_V(a,v)  \mathrm d m(v)(a)= \sum_{t=0}^{\infty} \delta^t \int_{a} u_V(a,v)  \mathrm d \lambda_{v'}(t)(a), \]
the expected utility for type $v$ reporting $v'$ in the static mechanism is the same as in the dynamic game were type $v$ to play as $v'$ does.  Hence, as  
Vetoer is playing a best response in $\sigma$, mechanism $m$ is incentive compatible and individually rational.

Analogous arguments show that Proposer's expected utility in the static mechanism is the same as his expected utility in the dynamic game under strategy profile $\sigma$. Therefore, Proposer can replicate his payoff from the dynamic game using a static mechanism, and hence can do no worse in the static problem.
\end{proof}

\begin{proof}[Proof of \autoref{lemma:opt_delegation_set}]
To obtain a contradiction, suppose there is a (potentially stochastic) mechanism $m$ that yields a strictly higher payoff than the delegation set $[c^*,1]$ under prior $F_{[c,c']}$ for some $c\le c^*/2\le c^*\le c'$. Let $M:=m([c,c'])$ denote the image of $[c,c']$ under $m$. We can assume without loss of generality that $u(m(c'))\ge u(m(v))$ for all $v\in[c,c']$ and that $u(m(c))\ge u(0)$.\footnote{If $u(m(c'))< u(m(v))$ for some $v\in[c,c']$, add the action $\min\{1,\mathbb{E}_{m(c')}[a]\}$ to $M$ and consider the corresponding mechanism $\hat m$ in which each type chooses its favorite lottery, breaking indifference in Proposer's favor. Since $\mathbb{E}_{m(v)}[a]$ is increasing in $v$ 
because mechanism $m$ is IC, the new mechanism $\hat m$ yields Proposer a higher payoff than $m$ and satisfies $u(\hat m(c'))\ge u(\hat m(v))$ for all $v\leq c'$.

Now suppose $u(m(c))<u(0)$. If $c\leq 0$, consider an alternative mechanism $\hat m$ that is identical to $m$ except for assigning action $0$ with probability one to all types below $0$. This mechanism is individually rational (IR) and IC and yields Proposer a higher payoff than $m$ and satisfies $u(\hat m(c))=u(0)$. If $c>0$, consider an alternative mechanism $\hat m$ that is identical to $m$ except for $\hat m(c)$ assigning probability one to an action in $[0,\E[m(c)]$ that makes type $c$ indifferent with $m(c)$. Such an action exists because $u_V(m(c),c)\geq u_V(0,c)$, as $m$ is IR, and $u_V(\cdot,c)$ is continuous. 
Since $m$ is IC and IR, and any type $v > c$ prefers $m(v)$ to $\hat m(c)$ (by SCED, type $c$'s indifference between $m(c)$ and $\hat m(c)$, and that type $\hat m(c)$ strictly prefers $\hat m(c)$ to $m(c)$), it follows that $\hat m$ is incentive compatible and individually rational. Moreover, $u(\hat m(c))\ge u(0)$ and the mechanism $\hat m$ yields Proposer a higher payoff than $m$.}   Define a menu of stochastic actions by
\begin{align*}
\tilde{M} := M\cup \{v\in[c^*,1]: u(v)\ge u(m(c'))\}\cup\{0\}.
\end{align*}
Let $\tilde{m}$ be the induced mechanism where each type $v$ chooses its favorite action in $\tilde{M}$ and indifference is broken in Proposer's favor.  Plainly, $\tilde m$ is incentive compatible and individually rational.
We will show that given prior $F$, Proposer's payoff from $\tilde{m}$ is strictly higher than from  delegation set $[c^*,1]$. 

Conditional on the event $\{v:v\in [c,c']\}$, Proposer's payoff from menu $M$ is strictly higher than from menu $[c^*,1]$ by assumption. Compared to menu $M$, the additional actions in $\tilde{M}$ chosen by types $v\in[c,c']$ are ones that Proposer prefers to $m(c')$, which he prefers to $m(v)$ for any $v\in[c,c']$. Hence,  conditional on $\{v:v\in [c,c']\}$, Proposer's payoff from menu $\tilde{M}$ is strictly higher than from menu $[c^*,1]$.

We next show that for every $v> c'$, $u(\tilde{m}(v))\ge u(v)$. Since Vetoer's utility satisfies SCED and she breaks indifference in favor of Proposer, either $\tilde{m}(v)=m(c')$ or $\tilde{m}(v)\in \tilde{M}\setminus (M\cup \{0\})$. In either case, $u(\tilde{m}(v))\ge u(m(c'))$. If $u(m(c'))>u(v)$ then it follows that $u(\tilde{m}(v))\ge u(v)$. If, instead, $u(v)\ge u(m(c'))$ then $\tilde{m}(v)=v$ and we conclude $u(\tilde{m}(v))= u(v)$.

Moreover, SCED implies that for all $v<c$, either $\tilde{m}(v)=\tilde m(c)$ or $\tilde m(v)=0$. Since $u(\tilde{m}(c))\ge u(0)$ and $u(0)$ is Proposer's payoff under delegation set $[c^*,1]$ whenever $v<c$, it follows that Proposer's payoff from mechanism $\tilde{m}$ is  higher than his payoff from delegation set $[c^*,1]$ under belief $F$, a contradiction.
\end{proof}

\begin{proof}[Proof of \autoref{lemma:induction_base}]
Fix any $\varepsilon>0$. Let $\underline{\delta}<1$ be such that $\underline{\delta} U(F_{[\underline{v},c^*]})\ge U(F_{[\underline{v},c^*]})-\varepsilon$, and fix any $\delta\ge \underline{\delta}$.
Let $(\tilde{\sigma},\tilde{\mu})$ be an equilibrium when Proposer's prior belief is $F_{[\underline{v},c^*]}$, where $\tilde{\sigma}$ denotes the strategy profile and $\tilde{\mu}$ the system of beliefs. If Proposer's payoff in equilibrium $(\tilde{\sigma},\tilde{\mu})$ is higher than $\delta U(F_{[\underline{v},c^*]})$ then the claim holds; so suppose his payoff is strictly lower.
Define a candidate equilibrium profile $(\sigma,\mu)$ as follows: 
\begin{itemize}
\item On path, Proposer offers $0$ in the first period, $c^*$ in the second period, followed by $\min\{2c^*,1\}$ ever after. Vetoer of type $v$ accepts the first proposal $0$ if and only if she strictly prefers it to $c^*$ in the next period;
in the second period she accepts $c^*$ if and only if she (weakly) prefers it to both $\min\{2c^*,1\}$ and $0$ in the third period; and for any subsequent history starting with proposal sequence $(0,c^*)$, she accepts the current proposal if and only if she (weakly) prefers it to both $\min\{2c^*,1\}$ and $0$ in the next period. For any on-path history $h$, let $\mu(h)$ be derived from Bayes' rule whenever possible, and for any history $h$ starting with $(0,c^*)$, let $\mu(h)$ put probability 1 on type $c^*$.
\item For any off-path history $h$ that starts with $(0,a)$ for $a\neq c^*$, let $(\sigma,\mu)$ specify some continuation equilibrium with the starting belief $F_{[c^*/2,c^*]}$; a continuation equilibrium exists by hypothesis \eqref{eqmexists}. 
For any off-path history $h$ in which the first proposal is different from 0, let $(\sigma,\mu)(h)=(\tilde{\sigma},\tilde{\mu})(h)$.
\end{itemize}

Proposer's payoff from the strategy profile $\sigma$ is $\delta U(F_{[\underline{v},c^*]})$ because on path types below $c^*/2$ accept proposal 0 and types in $[c^*/2,c^*]$ accept proposal $c^*$ in period 1; while in the static problem, \autoref{lemma:opt_delegation_set} implies that for belief $F_{[\underline{v},c^*]}$ the delegation set $[c^*,1]$ is optimal, which results in all types in $[\underline{v},c^*/2)$ obtaining action $0$ and all types in $(c^*/2,c^*]$ obtaining action $c^*$. We will argue that the profile $(\sigma,\mu)$ is an equilibrium, which proves the claim.

First, Proposer is playing a best response in the profile $(\sigma,\mu)$ at the start of the game since any deviation induces the same payoff as in equilibrium $(\tilde{\sigma},\tilde{\mu})$, which is strictly lower than $\delta U(F_{[\underline{v},c^*]})$ by hypothesis. Moreover, by construction, Vetoer is playing a best response at the history $h=(0)$, i.e., after the initial proposal of $0$. 

Second, we claim that Proposer is playing a best response at history $h=(0)$. Note that the second-period belief after this history is $\mu(0)=F_{[c^*/2,c^*]}$ and that in the continuation game starting at $h=(0)$ the strategy profile $\sigma$ yields payoff $U(F_{[c^*/2,c^*]})$: all types in $[c^*/2,c^*]$ accept proposal $c^*$ immediately and the delegation set $[c^*,1]$ solves the static problem by \autoref{lemma:opt_delegation_set}.
Any deviation by Proposer to an offer $a\neq c^*$ gives Proposer a payoff 
of at most $U(F_{[c^*/2,c^*]})$ by \autoref{lemma:static_provides_upper_bound}.
Therefore, Proposer is playing a best response at history $h=(0)$. 

Finally, we claim that both players are playing best responses at any other history. Indeed, for any history starting with proposals $(0,c^*)$, best responses are assured by construction. For any history starting with $(0,a)$ with $a\neq c^*$, our construction specifies some continuation equilibrium. For any history starting with a proposal different from $0$ players are playing an equilibrium because $(\tilde{\sigma},\tilde{\mu})$ is an equilibrium for prior belief $F_{[\underline{v},c^*]}$.

As it is straightforward that the system of beliefs $\mu$ satisfies Bayes Rule whenever possible, we conclude that $(\sigma,\mu)$ is an equilibrium.
\end{proof}

\begin{proof}[Proof of \autoref{prop:commitment_achievable_general}]

Without loss of generality, we assume $U(F)\leq 1$, as Proposer's utility can be rescaled accordingly. Furthermore, we prove the result only for $c^*>0$; the $c^*=0$ case is implied by \autoref{prop:existence_decreasing_offers}. 

As a roadmap: Steps 1--4 below use induction to show that there are equilibria in which Proposer can obtain arbitrarily close to his commitment payoff on some interval of types below a threshold. Step 5 establishes this threshold can be made arbitrarily close to $\ov$. Step 6 then argues that there is an equilibrium in which Proposer obtains arbitrarily close to his commitment payoff from the full interval of types $[\uv,\ov]$.

We begin with some preliminaries for the inductive argument. Let $c_0(\varepsilon, \delta):=c^*>0$ and define for all integers $n>0$, $$c_n(\varepsilon, \delta):=\min\left\{c_{n-1}(\varepsilon,\delta)+\frac{\varepsilon}{4u'(0)}, c_{n-1}(\varepsilon,\delta) \sqrt{1+\sqrt{1-\delta}} \right\}.\footnote{If $u$ is not differentiable at $0$, let $u'(0)$ denote the right-derivative at $0$, which exists because $u$ is concave.}$$ It follows that there is some $n\in \mathbb{N}$ such that $c_n(\varepsilon,\delta)\ge \overline{v}$. Let $\underline{f}>0$ denote a lower bound for $f$ on $[\underline{v},\overline{v}]$.
For $\epsilon>0$, define $$\delta^*(\epsilon,\delta) := 1- \frac{\epsilon}{2} \underline{f}\min\left\{\frac{\epsilon}{4u'(0)}, c^* \left(\sqrt{1+\sqrt{1-\delta}}-1\right)\right\},$$ and let $\underline{\delta}(\epsilon)\in(\sqrt{1-\epsilon},1)$ be such that for all $\delta\in(\underline{\delta}(\epsilon),1)$, $\delta\ge \delta^*(\epsilon,\delta)$. Such a $\underline{\delta}(\epsilon)$ exists because $\delta^*(\epsilon,1)=1$, $\delta^*(\epsilon,\cdot)$ is continuous, and $\lim_{\delta \uparrow 1} \frac{\partial \delta^*(\epsilon,\delta)}{\partial \delta}=+\infty$.

The induction hypothesis for $n\geq 0$ is:

\smallskip

For all $\varepsilon>0$, $\delta>\underline{\delta}(\epsilon)$, and $c$ satisfying $c^*\le c\le c_n(\varepsilon,\delta)$, if Proposer's belief is $F_{[\underline{v},c]}$ then  there is an equilibrium in which Proposer's payoff is at least $U(F_{[\underline{v},c]})-\epsilon$. 

\smallskip

The induction hypothesis holds for $n=0$ by \autoref{lemma:induction_base}.

Let $(\hat{\sigma},\hat{\mu})$ be an equilibrium for the game with belief $F_{[\underline{v},c_{n-1}\ed]}$ that yields Proposer payoff at least $U(F_{[\underline{v},c_{n-1}\ed]})-\epsilon$ (such an equilibrium exists under the induction hypothesis) and let $a_{n-1}\ed$ be the largest action that makes type $c_{n-1}(\epsilon,\delta)$ indifferent between accepting $a_{n-1}\ed$ and playing $(\hat{\sigma},\hat{\mu})$ from the next period on.
Steps 1--4 below establish that if the induction hypothesis holds for $n$ and $a_{n-1}\ed \le 1$ then it holds for $n+1$, given \eqref{eqmexists}.

\underline{Step 1}:  Fix arbitrary $\varepsilon>0$, $\delta\ge \underline{\delta}(\epsilon)$, and $c$ satisfying $c_n(\varepsilon,\delta)< c\le c_{n+1}(\varepsilon,\delta)$, and an equilibrium $(\tilde{\sigma},\tilde{\mu})$ for the game with belief $F_{[\underline{v},c]}$. If Proposer's payoff is at least $U(F_{[\underline{v},c]})- \epsilon$ we are done; so suppose Proposer's payoff is strictly less. Below, we suppress the dependence of $c_n$ and $a_{n-1}$ on $\varepsilon$ and $\delta$, and we set $c_{-1}(\epsilon,\delta):=c^*$.

 We construct a new equilibrium $(\sigma,\mu)$ for the game with belief $F_{[\underline{v},c]}$ as follows:
Proposer's first offer is $a_{n-1}$. On path, types above $c_{n-1}$ accept $a_{n-1}$ and types below $c_{n-1}$ reject $a_{n-1}$. After a rejection of $a_{n-1}$, Proposer updates to $F_{[\underline{v},c_{n-1}]}$ and continuation play proceeds as specified by $(\hat{\sigma},\hat{\mu})$. 
Moreover, if Proposer deviates in the first period, continuation play is as specified by $(\tilde{\sigma},\tilde{\mu})$.

\underline{Step 2}: We show that Vetoer is playing a best response when $a_{n-1}$ is proposed in the first period. 

It is optimal for types below $c_{n-1}$ to reject $a_{n-1}$ since type $c_{n-1}$'s equilibrium strategy in the continuation game yields a higher payoff (using that $a_{n-1}> c_{n-1}$ and Vetoer's preferences satisfy SCED).\footnote{To elaborate, note that when comparing action $a_{n-1}$ and the lottery induced by type $c_{n-1}$'s equilibrium strategy, $c_{n-1}$ is indifferent whereas (a possibly hypothetical) 
type $a_{n-1}$ strictly prefers action $a_{n-1}$. SCED implies that given any two lotteries and any three types $v_1<v_2<v_3$, if $v_2$ is indifferent and $v_3$ strictly prefers one lottery, then $v_1$ (weakly) prefers the other lottery.} We now explain why it is optimal for types in $[c_{n-1},c]$ to accept $a_{n-1}$; there is no need to consider types above $c$ because Proposer's belief is supported on $[\uv,c]$. Accepting $a_{n-1}$ is a best response for types $c_{n-1}$ and $a_{n-1}$, and SCED implies that the set of types for which it is a best response to accept is an interval. Therefore, if $a_{n-1}\ge c$, then  accepting $a_{n-1}$ is a best response for all types in $[c_{n-1},c]$.
So suppose $a_{n-1}\in [c_{n-1},c)$. It would be a best response for type $c$ to accept $c_{n-1}$ since that is even better than obtaining $c$ next period (as $2 c c_{n-1} -c_{n-1}^2 \ge \delta c^2$ because of our assumption that $c\le c_{n-1} + c_{n-1}\sqrt{1- \delta }$). Therefore, since type $c$ prefers $a_{n-1}\in [c_{n-1},c)$ to $c_{n-1}$, accepting $a_{n-1}$ is a best response for type $c$ and hence for all types in $[c_{n-1},c]$.

\underline{Step 3}: 
We show that Proposer's payoff from profile $\sigma$ is at least $U(F_{[\underline{v},c]}) - \epsilon$.

Proposer's payoff if the first proposal $a_{n-1}$ is accepted times the probability of acceptance is at least
\begin{align*}  
[F_{[\underline{v},c]}(c)-F_{[\underline{v},c]}(c_{n-1})] u(c_{n-1})&\ge
\int_{c_{n-1}}^{c} [u(v)-u'(0)(v-c_{n-1})]\mathrm dF_{[\underline{v},c]}\\
&\ge
\int_{c_{n-1}}^{c} [u(v) - \varepsilon/2] \mathrm dF_{[\underline{v},c]}, 
\end{align*}
where the first expression is because $a_{n-1}\in[c_{n-1},1]$,
the first inequality is because $u(v)-u(c_{n-1})\le u'(0)(v-c_{n-1})$,
and the second inequality is because $c-c_{n-1}\le \frac{\varepsilon}{2u'(0)}$.

For the case $n=0$, Proposer's payoff conditional on proposal $a_0$ being rejected times the probability of rejection is at least $\delta^2 U(F_{[\underline{v},c_{n-1}]})F_{[\underline{v},c]}(c_{n-1})$ by \autoref{lemma:induction_base}. Since $\delta\ge \sqrt{1-\epsilon}$ and $U(F_{[\underline{v},c_{n-1}]})\le 1$, these two bounds imply that Proposer's payoff is at least
\[[U(F_{[\uv, c_{n-1}]})-\epsilon]F_{[\uv, c]}(c_{n-1}) + \int_{c_{n-1}}^{c} [u(v) - \varepsilon/2] \mathrm dF_{[\underline{v},c]}. \]
Since the delegation set $[c^*,1]$ is optimal for belief $F_{[\uv,c]}$ by \autoref{lemma:opt_delegation_set}, this implies that Proposer's payoff is at least $U(F_{[\underline{v},c]}) - \epsilon$.

Consider now the case $n \geq 1$. Proposer's payoff conditional on proposal $a_{n-1}$ being rejected times the probability of rejection is at least
$\delta \left[U(F_{[\underline{v},c_{n-1}]})-  \epsilon\right] F_{[\underline{v},c]}(c_{n-1})$.  
Therefore, Proposer's payoff is at least 
\begin{align*}
&    \delta \left[U(F_{[\underline{v},c_{n-1}]})-  \epsilon\right] F_{[\underline{v},c]}(c_{n-1})+ \int_{c_{n-1}}^{c} [u(v) - \varepsilon/2] \mathrm dF_{[\underline{v},c]}\\
\geq & U(F_{[\underline{v},c]})-  \epsilon+  \frac{\varepsilon}{2} [F_{[\underline{v},c]}(c)-F_{[\underline{v},c]}(c_{n-1})] - (1-\delta)\\
\geq &  U(F_{[\underline{v},c]})- \epsilon,
\end{align*}
where the first inequality is because the delegation set $[c^*,1]$ is optimal for belief $F_{[\underline{v},c]}$ (by \autoref{lemma:opt_delegation_set}) and $U(F_{[\underline{v},c_{n-1}]})\le 1$,
and the second inequality is because
\[    F_{[\underline{v},c]}(c)-F_{[\underline{v},c]}(c_{n-1})\ge \underline{f} \min\left\{\frac{\varepsilon}{4u'(0)},c^*\left(\sqrt{1+\sqrt{1-\delta}}-1\right)\right\} \] 
and
\[ \delta\ge \delta^*(\epsilon,\delta)= 1- \frac{\varepsilon}{2} \underline{f}\min\left\{\frac{\varepsilon}{4u'(0)},c^*\left(\sqrt{1+\sqrt{1-\delta}}-1\right)\right\}.\]
This establishes Step 3.

\underline{Step 4}: To verify that $(\sigma,\mu)$ is an equilibrium, observe that
Proposer plays a best response in the first period since any deviation gives a payoff less than $U(F_{[\underline{v},c]})-\epsilon$ by supposition. Vetoer plays a best response to proposal $a_{n-1}$ as argued above. Finally, both players play best responses after any other history because we began in Step 1 with equilibria 
$(\tilde{\sigma},\tilde{\mu})$ and  $(\hat{\sigma},\hat{\mu})$. This establishes the induction step if $a_{n-1}\le 1$.

\underline{Step 5}: We show that, when $\epsilon$ is small and $\delta$ is large, the inductive argument in Steps 1--4 covers a large fraction of types.

Let $\bar c(\epsilon,\delta):=c_{n}(\epsilon,\delta)$, where $n$ is the smallest index such that the action $a_{n}\ed$ defined in our induction argument is strictly above 1.
We claim
\begin{align}\label{eq:c_n-1-bound}
\int_{\bar c(\epsilon,\delta)}^{\ov} \left[u(1)-u(v)\right] \mathrm{d} F(v) \le 1-\delta+\varepsilon,
\end{align}
which implies that $\bar c(\epsilon,\delta) \to \ov$ as $\delta\rightarrow 1$ and $\epsilon\rightarrow 0$.

To derive inequality \eqref{eq:c_n-1-bound}, note that because $a_{n}\ed > 1$ there is a static incentive compatible and individually rational mechanism in which all types above $\bar{c}\ed$ receive action $1$ and Proposer's payoff from types below $\bar{c}\ed$ is at least as in equilibrium $(\hat{\sigma},\hat{\mu})$   discounted by $\delta$. This mechanism gives Proposer payoff at least
 \[\delta [U(F_{[\underline{v},\bar{c}\ed]})-\varepsilon]F(\bar{c}\ed) +\int_{\bar{c}\ed}^{\ov}u(1) \mathrm d F(v).\] 
 By \autoref{lemma:opt_delegation_set}, this is less than the payoff from delegation set $[c^*,1]$, which can be written as
 \[ U(F_{[\uv,\bar{c}\ed]})F(\bar{c}\ed)+\int_{\bar{c}\ed}^{\ov}u(v) \mathrm d F(v).\]
 Some algebra using $U(F_{[\uv,\bar{c}\ed]})\le 1$ now yields inequality \eqref{eq:c_n-1-bound}.
 
 \underline{Step 6}: Given the belief $F$ and an arbitrary $\epsilon>0$, we show that for all $\delta$ large enough there is an equilibrium in which Proposer's payoff is at least $U(F)-\epsilon$, which completes the proof.
 
 For any $\epsilon'>0$ and $\delta> \underline{\delta}(\epsilon')$, we have established in Steps 1--5 that for belief $F_{[\uv,\bar c(\epsilon',\delta)]}$  there is an equilibrium, denoted by $(\sigma, \mu)$, in which Proposer's payoff is at least $U(F_{[\uv,\bar c(\epsilon',\delta)]})-\epsilon'$. Let $a(\epsilon',\delta)$ be the largest action that makes type $\bar c(\epsilon',\delta)$ indifferent between accepting $a(\epsilon',\delta)$ and playing $(\sigma,\mu)$ from next period on. Note that $a(\epsilon',\delta) \in (1,2]$ by definition (that it is less than $2$ is because actions above 2 are worse than the status quo for all types).
 
 Consider a strategy profile in which Proposer initially offers $a(\epsilon',\delta)$, followed by continuation play as described by $(\sigma,\mu)$. It is a best response for all types in $[\bar c(\epsilon',\delta),\ov]$ to accept $a(\epsilon',\delta)$ because of SCED and that accepting is a best response for type $\bar c(\epsilon',\delta)$ and a (hypothetical) type $a(\epsilon',\delta)$ that is larger than $\ov$; it is also a best response for all types below $\bar c(\epsilon',\delta)$ to reject $a(\epsilon',\delta)$.
 Since $a(\epsilon',\delta)\in (1,2]$ and Proposer's ideal point is $1$, it follows that Proposer's payoff given this strategy profile is at least
 \begin{align*}
     \delta [U(F_{[\uv,\bar c(\epsilon',\delta)]})-\epsilon'] F(\bar c(\epsilon',\delta)) + \int_{\bar c(\epsilon',\delta)}^{\ov} u(2) \mathrm dF(v).
 \end{align*}
 For $\epsilon'>0$ small enough and $\delta<1$ large enough, this payoff is at least $U(F)-\epsilon$. Given \eqref{eqmexists} it follows that there is an equilibrium in which Proposer's payoff is at least as large: analogous to the logic used in Step 1, if a given equilibrium does not yield payoff at least $U(F)-\epsilon$, we can modify it by having Proposer offer $a(\epsilon',\delta)$ in the first period with continuation play given by $(\sigma,\mu)$.
\end{proof}

\subsection{A Skimming Equilibrium}
\label{app:decreasing}
We construct a skimming equilibrium building on ideas from the seller-buyer literature, which are summarized instructively by \citet[pp.~1912--15]{ACD:02}. Our first step is to define a pair of functional equations whose joint solution describes a skimming equilibrium.

\begin{definition}\label{def:support_eq}
Let $R:[\uv,v^*]\to \Reals$ be continuous and $P:[\uv,v^*]\to \Reals$ be right-continuous, where  $v^*\in (\uv,\ov]$. We say that $(R,P)$ \emph{supports a skimming equilibrium on $[\underline{v},v^*]$} if, for all $v\in [\uv,v^*]$,
\begin{align}
    \label{eq:Bellman} R(v)&=\max_{y\in[\uv, v]}\left\{u(\oP(y))[F(v)-F(y)]+\delta R(y)\right\},\\
    \label{eq:Bellman2} u_V(P(v),v) &=\delta u_V(\oP(t(v)),v),
\end{align}
where $T(v)$ denotes the argmax correspondence in  \eqref{eq:Bellman}, $t(v):=\max T(v)$,  $P(v)$ is the largest proposal that satisfies \eqref{eq:Bellman2}, and $\oP$ is the increasing envelope of $P$, i.e., $\oP(v):= \sup_{y\le v}P(y)$.\footnote{The maximizers in this definition exist because $P$ being right-continuous implies $\oP$ is right-continuous, and since it is also increasing, $\oP$ is upper semicontinuous.} 
\end{definition}

The idea behind this definition is that $R(y)$ describes Proposer's value function and $\oP(v)$ describes Vetoer's acceptance behavior. We will construct an equilibrium in which at any history, type $v$ accepts a positive offer if and only if the offer is below $\oP(v)$. Alternatively, given that $\oP$ is increasing, any offer $\oP(v)$ is accepted precisely by all types above $v$.\footnote{This statement is imprecise when there are multiple $\tilde v$ such that $\oP(\tilde v)=\oP(v)$; we gloss over this issue for this heuristic explanation.} Consequently, at any history, Proposer's belief is a right-truncation of the prior to $[\uv,v]$ for some $v$. The upper endpoint $v$ thus acts like a state variable that Proposer optimizes. \Cref{eq:Bellman} is the dynamic programming equation that captures Proposer's tradeoff between extracting surplus via screening and the cost of delay: given the current state $v$, if Proposer brings the state down to $y$ with an offer $\oP(y)$, then with probability $F(v)-F(y)$ (ignoring a normalization factor) he obtains current payoff $u(\oP(y))$; in addition, after a one-period delay he obtains payoff $R(y)$. 
Concomitantly, \Cref{eq:Bellman2} is the indifference condition for type $v$ between accepting offer $P(v)$ and waiting one period for the next offer, which would be $\oP(t(v))$.  
Note that $P(\uv) = 2\uv^+$ because $t(\uv)=\uv$, and hence $P(v)\geq \max\{v,2\uv^+\}$ for all $v$. Consequently, $R(v)>0$ for all $v>\uv^+$.

The following result establishes that there is in fact an equilibrium corresponding to the pair of functions $(R,P)$. If $P$ is continuous, then on the equilibrium path Proposer first targets 
the threshold type $t(\ov)$ with offer $\oP(t(\ov))$, and then successively follows with offers $\oP(t^2(\ov)),\oP(t^3(\ov)),\ldots$. This is a decreasing sequence because $\oP$ and $t$ are increasing functions; the latter point owes to a monotone comparative statics argument. Vetoer accepts the initial offer if her type is in $[t(\ov),\ov]$, the second offer if her type is in $[t^2(\ov),t(\ov))$, the third offer if her type is in $[t^3(\ov),t^2(\ov))$, and so on.

\begin{lemma}\label{lemma:support_equilibrium}
Suppose $\uv\le 0$ or $\ov\le 1/2$. If $(R,P)$ supports a skimming equilibrium on $[\underline{v},\overline{v}]$ then there is an equilibrium in which proposals will be decreasing along the equilibrium path.
\end{lemma}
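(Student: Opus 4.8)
The plan is to promote the functional solution $(R,P)$ into a full Perfect Bayesian Equilibrium, adapting the seller--buyer template (see \citet[pp.~1912--15]{ACD:02}) but with two modifications forced by single-peaked preferences. I take as the state variable the right endpoint $v$ of Proposer's belief truncation $[\uv,v]$, initialized at $\ov$. On path, at state $v$ Proposer offers $\oP(t(v))$; since $\oP$ is increasing this is accepted exactly by the upper interval $[t(v),v]$ and moves the state to $t(v)$, so iterating yields the offers $\oP(t(\ov)),\oP(t^2(\ov)),\dots$. Vetoer uses the upper-set rule: at any history her type $v$ accepts a positive offer $a$ if and only if $a\le\oP(v)$, and after any rejection Proposer's belief is the truncation to the rejecting types. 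The delicate part of the specification is off path: when $\uv>0$ (so the operative hypothesis is $\ov\le 1/2$), I stipulate that any offer $a\in[0,2\uv)$ is accepted by all remaining types, with post-rejection belief degenerate on $\ov$ and all subsequent offers equal to $2\ov$.

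For Vetoer optimality I invoke that quadratic loss gives SCED and hence interval choice \citep[Theorem 1]{KLR:19}, so at every history the acceptance set is an interval; its boundary is the type pinned down by the indifference condition \eqref{eq:Bellman2}, and because $\oP$ is increasing the next offer is lower, so discounting makes acceptance strictly optimal above the cutoff and rejection optimal below it. For the off-path low offers in the $\ov\le1/2$ case, acceptance is direct: writing $u_V(a,v)=a(2v-a)$, any $a\in(0,2\uv)\subseteq(0,2v)$ gives $u_V(a,v)>0$, whereas the stipulated continuation of perpetual $2\ov$ delivers $u_V(2\ov,v)=4\ov(v-\ov)\le 0$ for $v\le\ov$; hence every type strictly prefers to accept, and the offer $a=0$ can be accepted under indifference.

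Proposer optimality reduces on path to the Bellman equation: given Vetoer's upper-set rule, inducing cutoff $y$ (so that exactly types $[y,v]$ accept) calls for the offer $\oP(y)$, yielding payoff $u(\oP(y))[F(v)-F(y)]+\delta R(y)$, which \eqref{eq:Bellman} maximizes at $y=t(v)$; thus $\oP(t(v))$ is optimal among offers that screen via an upper interval. The genuinely new task is to rule out leapfrogging deviations to low offers. When $\uv\le0$ the lowest pooling action is $0$, offers below it give $u\le u(0)=0$ and are not accepted by the high types Proposer values, so no such deviation helps. When $\ov\le1/2$ and $\uv>0$, a deviation to $a\in[0,2\uv)$ is, by the above stipulation, accepted by all remaining types and yields $u(a)F(v)<u(2\uv)F(v)\le R(v)$: the strict inequality holds because $u$ is increasing on $(-\infty,1]$ with $a<2\uv\le1$, and the weak inequality because pooling at $2\uv$ (targeting $y=\uv$, where $\oP(\uv)=2\uv$ and $R(\uv)=0$) is feasible in \eqref{eq:Bellman}. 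Hence the deviation is unprofitable.

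Finally I record that the on-path offers are decreasing because $t$ is nondecreasing (the monotone-comparative-statics argument noted before the lemma) and $\oP$ is increasing, so $\oP(t^{k+1}(\ov))\le\oP(t^k(\ov))$; as beliefs satisfy Bayes' rule wherever applicable, the profile is a PBE. The main obstacle is precisely this leapfrogging step, absent in seller--buyer bargaining, where the hypothesis $\uv\le0$ or $\ov\le1/2$ is indispensable; the only other nonroutine ingredient is leveraging interval choice to show that no offer outside the upper-interval-screening family (other than the handled low offers) can beat the Bellman-optimal offer.
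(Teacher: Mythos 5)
Your architecture is the same as the paper's---cutoff acceptance via the increasing envelope $\oP$, Proposer optimality via the Bellman equation \eqref{eq:Bellman}, and the all-accept stipulation for offers in $[0,2\uv)$ with a punishment belief degenerate on $\ov$ (your deterrence inequality $u(a)F(v)<u(2\uv)F(v)\le R(v)$ and the check $u_V(2\ov,v)=4\ov(v-\ov)\le 0$ are exactly right). But there is a genuine gap at the step where you assert that Vetoer's rule ``accept iff $a\le\oP(v)$'' is sequentially rational because the boundary type is ``pinned down by the indifference condition \eqref{eq:Bellman2}'' and the next offer is lower. The central difficulty---flagged in the text right after the lemma---is that single-peaked preferences prevent one from guaranteeing a monotone $P$; only $\oP$ is increasing, and it is merely right-continuous, so it can have jumps and flats. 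Consider an off-path offer $a$ strictly inside a jump of $\oP$ at the cutoff $d$, i.e., $a\in[\lim_{d'\uparrow d}\oP(d'),\oP(d))$. The indifference \eqref{eq:Bellman2} ties type $d$ to $P(d)$, not to $a$: if Proposer's next offer were the deterministic $S(d)=\oP(t(d))$, type $d$ would strictly prefer to accept $a$, and by continuity of payoffs in the type, so would types just below $d$---yet your rule prescribes that they reject, so the profile is not an equilibrium. The paper repairs this by having Proposer \emph{randomize} after such rejections between $S(d)$ and $\lim_{d'\uparrow d}S(d')$, calibrated to make type $d$ exactly indifferent to accepting $a$; it must then verify that Proposer is willing to randomize, i.e., that $\lim_{d'\uparrow d}S(d')$ is itself an optimal proposal, which uses upper hemicontinuity of the argmax correspondence $T$ and a continuity argument forcing $\lim_{d'\uparrow d}\oP(t(d'))=\oP(\lim_{d'\uparrow d}t(d'))$. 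None of this machinery appears in your proof, and without it the construction fails whenever $\oP$ is discontinuous.

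A second consequence of the same issue: ``increasing'' is weak, so your claim that the offer $\oP(t(v))$ is accepted exactly by the upper interval $[t(v),v]$ does not follow from monotonicity alone---if $\oP$ were flat at the value $\oP(t(v))$ on an interval extending below $t(v)$, strictly lower types would accept too, breaking the Bayesian update to $F_{[\uv,t(v)]}$ and the Bellman accounting. The paper proves separately (\autoref{lemma:maximizer_not_in_decreasing}) that $\oP(z)<\oP(y)$ for every $z<y$ with $y$ a maximizer in \eqref{eq:Bellman}, ruling this out; you need this or an equivalent. Relatedly, when $\oP(d)>P(d)$ the marginal type's indifference runs through limits, $\lim_{d'\uparrow d}P(d')=\oP(d)$ against the next offer $\lim_{d'\uparrow d}S(d')$, not through \eqref{eq:Bellman2} at $d$ directly. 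Finally, two smaller omissions: you never specify behavior for negative offers or for types $v<0$ when $\uv<0$ (the paper has such types accept $a$ iff $u_V(a,v)\ge u_V(0,v)$), and you leave the punishment belief unspecified for the case $\ov>1/2$ with $\uv\le 0$ (the paper uses a belief degenerate on $0$ with perpetual offer $0$, which is needed, e.g., after the zero-probability rejection of the offer $a=0$).
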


The proof of \autoref{lemma:support_equilibrium} builds on arguments from the seller-buyer bargaining literature \citep*[e.g.,][Theorem 1]{GSW:86}, and is relegated to the supplementary appendix. As discussed in the main text after \autoref{prop:existence_decreasing_offers}, novel considerations arise in deterring Proposer from deviating to offers below $2\uv^+$; for that we use \autoref{lemma:support_equilibrium}'s hypothesis that either $\uv\leq 0$ or $\ov\leq 1/2$. For readers familiar with the seller-buyer arguments, we also flag that another notable aspect of our argument is the use of the increasing envelope $\oP$. We use this because, owing to single-peaked Vetoer preferences, we cannot guarantee that there is a solution to equations \eqref{eq:Bellman} and \eqref{eq:Bellman2} in which the $P$ function is (even weakly) increasing. The lack of monotonicity precludes specifying $P(y)$ as type $y$'s acceptance threshold---we would not be assured that Proposer's beliefs are right-truncations. Using the increasing envelope $\oP$ to specify strategies allows us to surmount non-monotonicities in $P$.

For \autoref{lemma:support_equilibrium} to be useful, we must assure existence:

\begin{lemma}\label{lemma:DP_soln}
There is $(R,P)$ that supports a skimming equilibrium on $[\uv,\ov]$.
\end{lemma}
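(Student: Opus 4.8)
The plan is to realize $(R,P)$ as a fixed point of a self-map on a compact set of candidate acceptance functions. Let $\mathcal P$ be the set of increasing, right-continuous functions $g:[\uv,\ov]\to[2\uv^+,2\ov]$ with $g(v)\ge\max\{v,2\uv^+\}$; this is the natural domain for the envelope $\oP$, since no type accepts an offer outside $[0,2\ov]$ and the indifference boundary forces $g(\uv)=2\uv^+$. I will build a map $\Phi:\mathcal P\to\mathcal P$ whose fixed points are exactly the envelopes $\oP$ for which the induced triple $(R,t,P)$ solves \eqref{eq:Bellman}--\eqref{eq:Bellman2}, so that producing a fixed point proves the lemma.

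Given $g\in\mathcal P$, I construct $\Phi(g)$ in three steps. A preliminary reduction restricts attention to offers in $[2\uv^+,1]$: since $u$ peaks at $1$ and a lower offer is accepted by a larger set of types, it is never strictly optimal in \eqref{eq:Bellman} to choose a $y$ with $g(y)>1$, so the maximizer can be taken where $g\le 1$; on that region $u\circ g$ is increasing and, as $g$ is increasing and right-continuous, upper semicontinuous. \emph{First}, treating $g$ as $\oP$, the right-hand side of \eqref{eq:Bellman} defines an operator $R\mapsto\max_{y\le v}\{u(g(y))[F(v)-F(y)]+\delta R(y)\}$ on $C[\uv,\ov]$; the objective is u.s.c. in $y$ on the compact $[\uv,v]$ and attains its max, and the operator is a sup-norm contraction of modulus $\delta$, yielding a unique continuous $R=R_g$. \emph{Second}, let $T(v)$ be the (nonempty, compact) argmax correspondence and set $t(v):=\max T(v)$; a Topkis argument, using that the objective's cross term is $u(g(y))$ times $F(v)$ with $u\circ g$ increasing, shows $t$ is increasing, and the maximal selection is right-continuous. \emph{Third}, for each $v$ let $P(v)$ be the largest root of $u_V(P(v),v)=\delta u_V(g(t(v)),v)$; since $u_V(\cdot,v)$ is a downward parabola peaking at $v$, this upper root exists, satisfies $P(v)\ge\max\{v,2\uv^+\}$, and is measurable. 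Set $\Phi(g):=\sup_{y\le v}P(y)$, which lies in $\mathcal P$.

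For the fixed point I would invoke Schauder's theorem: $\mathcal P$ is convex, and by Helly's selection theorem it is sequentially compact in the $L^1$ topology (equivalently, pointwise at continuity points), so it suffices to verify continuity of $\Phi$. This is the main obstacle, and it is exactly where the two devices flagged in the text are needed: the argmax $T(v)$ can jump as $g$ varies, but selecting the \emph{maximal} maximizer $t=\max T$ preserves upper semicontinuity and monotonicity through ties, and passing to the \emph{increasing envelope} $\oP$ absorbs the non-monotonicities of $P$ created by Vetoer's single-peaked $u_V$, guaranteeing $\Phi(g)$ is genuinely increasing and right-continuous. Concretely, along $g_k\to g$ the value functions converge uniformly (the contraction is uniform in $g$ and $u\circ g_k$ converges suitably), which by a Berge-type argument gives upper hemicontinuity of $T$; combined with the maximal selection this yields $\limsup_k t_k\le t$ at continuity points, and hence convergence of the induced envelopes.

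A more elementary alternative, which may be cleaner to write out in full, is to discretize the type space on a finite grid, solve the resulting finite skimming problem by backward recursion (as in the construction behind \autoref{prop:twotypes}), obtain grid solutions $(R_k,P_k)$, and extract via Helly a subsequence with $P_k\to P$ and $R_k\to R$. The boundary condition $P(\uv)=2\uv^+$ (from $t(\uv)=\uv$) and the monotonicity of $t$ pass to the limit automatically; the one delicate step is carrying the maximization in \eqref{eq:Bellman} and the indifference in \eqref{eq:Bellman2} through the limit at the at most countably many discontinuities of $\oP$, which is again handled by working with the maximal selection $t$ and the increasing envelope $\oP$. Either route delivers a pair $(R,P)$ supporting a skimming equilibrium on all of $[\uv,\ov]$, as required.
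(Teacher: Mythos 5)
Your fixed-point strategy is a genuinely different route from the paper's, but it contains a real gap at precisely the step you label ``the main obstacle'': continuity of $\Phi$, which Schauder requires, is neither proved nor plausibly true. There are two distinct failures. First, the map $g\mapsto R_g$ is not continuous in the $L^1$/Helly topology you put on $\mathcal P$: the Bellman value depends on the pointwise value $u(g(y))$ at the chosen cutoff, and the contraction estimate bounds $\lVert R_{g_k}-R_g\rVert_\infty$ only by $\tfrac{1}{1-\delta}\sup_y \lvert u(g_k(y))-u(g(y))\rvert$; convergence in $L^1$ (equivalently, pointwise at continuity points) of increasing functions gives no such uniform control---think of a jump of $g_k$ drifting toward a jump of $g$---so ``$u\circ g_k$ converges suitably'' is doing unsupported work. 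Second, and more fundamentally, even granting uniform convergence of values, Berge delivers only \emph{upper} hemicontinuity of $T$, and your inference ``$\limsup_k t_k\le t$ \ldots and hence convergence of the induced envelopes'' is a non sequitur: at a tie, $t_k(v)$ can converge to a maximizer strictly below $t(v)=\max T(v)$, in which case $P_k(v)$ converges to the indifference offer against $g(\lim_k t_k(v))\neq g(t(v))$, and taking increasing envelopes does not repair this---the envelope of the limiting $P$ need not be the limit of the envelopes $\Phi(g_k)$. The maximal selection preserves monotonicity in $v$ for \emph{fixed} $g$, but it is exactly this selection that is discontinuous in $g$; that discontinuity of argmax selections is the standard reason this literature does not use Schauder-type arguments. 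Your fallback grid construction stalls at the same point: the ``one delicate step'' you defer---passing \eqref{eq:Bellman}--\eqref{eq:Bellman2} through the Helly limit at the discontinuities of $\oP$---is essentially the entire content of the lemma. Neither route addresses the no-gap case $\uv\le 0$ separately, where bottom-anchored constructions degenerate.

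For contrast, the paper avoids any fixed-point theorem by a constructive bootstrap in the spirit of Fudenberg--Levine--Tirole and Ausubel--Deneckere: (i) for $\uv>0$ it solves the system in closed form on a small interval $[\uv,\uv+\epsilon]$, namely $R(v)=u(2\uv)F(v)$ and $P(v)=v+\sqrt{v^2-4\delta\uv(v-\uv)}$, showing the delay cost makes $t(v)=\uv$ the unique maximizer there; (ii) it extends a solution on $[\uv,v^*]$ to $[\uv,v']$ by choosing $v'$ with $u(1)[F(v')-F(v^*)]\le \tfrac12(1-\delta)R_{v^*}(v^*)$, which forces the argmax in \eqref{eq:Bellman} strictly below $v^*$, so the extension is pinned down by the already-constructed pair, and since $R$ is increasing the increment $v'-v^*$ is uniformly bounded below, so finitely many steps reach $\ov$; (iii) it handles $\uv=0$ by an approximation argument \`a la Ausubel--Deneckere and pastes $R=P=0$ below $0$ for $\uv<0$. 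If you wish to salvage your approach, you must actually prove continuity of $\Phi$ (or recast the problem so that a Kakutani-type argument applies, which your ``largest root'' and ``maximal maximizer'' selections obstruct by destroying convex-valuedness); as written, the proof is incomplete at its load-bearing step.
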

The proof of this result adapts arguments from the seller-buyer literature, and is relegated to the supplementary appendix. In a nutshell, we first suppose $\uv > 0$ and follow the reasoning of \citet*[pp.~78--79]{FLT:85} to show that there is an $(R,P)$ that supports a skimming equilibrium on $[\uv,\uv+\epsilon]$ if $\epsilon>0$ is small enough; the intuition is that when Proposer's belief is concentrated near $\uv$, the cost of delay outweighs the benefit from screening types and it is optimal to just offer $\oP(t(v))=2\underline{v}$ for all remaining types. An argument following \citet[Lemma A.3]{AD:89} allows us to extend $(R,P)$ to support a skimming equilibrium on $[\uv,\ov]$, proving \autoref{lemma:DP_soln} so long as $\uv>0$. Lastly, an approximation argument analogous to that in \citet[Theorem 4.2]{AD:89} allows us to cover the case of $\uv=0$, which in turn can be straightforwardly extended to $\uv<0$.

\begin{proof}[Proof of \autoref{prop:existence_decreasing_offers}]
Together, \Cref{lemma:support_equilibrium} and \Cref{lemma:DP_soln} establish a skimming equilibrium if either $\uv\le 0$ or $\ov\le 1/2$.

Let us show that Proposer's payoff in this equilibrium converges to ${\underline U(F)}$. Since Proposer never makes a strictly negative offer in this equilibrium and no type $v<0$ accepts a strictly positive offer, we assume without loss of generality that $\underline{v} \in [0,1/2)$. Let $A^*(v)$ denote type $v$'s choice from the menu $[2\underline{v},1]$.
As noted after \autoref{def:support_eq}, it holds that $P(v)\ge \max\{v,2\uv\}$. Hence, $P(v)\ge A^*(v)$. 

To show that Proposer's payoff is at least ${\underline U(F)}$ in the patient limit, observe that for any $v$ and any strictly positive integer $m$ there is $\underline{\delta}(m)$ such that for all $\delta>\underline{\delta}(m)$,
\begin{align}\label{eq:lower_bound_R}
    R(v) \ge (1-1/m) \int_{\underline{v}}^{v} \left[u(\min\{\oP(v'),1\}) -1/m\right] \mathrm dF(v'). 
\end{align} 
The intuition for this inequality is that if Proposer makes offers with small step size, he can ensure that each type $v$ accepts a proposal close to $\min\{\oP(v),1\}$, because each type $v$ accepts a proposal if and only if it is less than $\oP(v)$; moreover, as $\delta \to 1$ the cost of delay vanishes. Together with $\oP(v)\ge P(v)\ge A^*(v)$, inequality \eqref{eq:lower_bound_R} implies that if $(R,P)$ supports a skimming equilibrium then Proposer's payoff in this equilibrium is at least ${\underline U(F)}$ in the patient limit.

It remains to show that Proposer's payoff in any such equilibrium is at most ${\underline U(F)}$ in the patient limit.
Suppose not. Then there is $\epsilon'>0$ and a sequence $\delta_n\rightarrow 1$ such that for each $n$ there is $(R_n,P_n)$ supporting a skimming equilibrium that yields payoff at least ${\underline U(F)}+\epsilon'$. Let $A_n(v)$ be the proposal that is accepted in this equilibrium by type $v$ and let $\tau_n(v)$ be the time at which type $v$ accepts.\footnote{If type $v$ never accepts any proposal, we set $A_n(v):=0$ and $\tau_n(v):=\infty$.} Since $A_n$ is monotonic and uniformly bounded (as $0\le A_n(v)\le 1$ for all $v$ and $n$), Helly's selection theorem implies that there is a subsequence, which we also index by $n$ for convenience, along which $A_n\rightarrow A$ pointwise.

We claim $A(v)\ge v$ for all $v\le 1$.
Suppose not. Then there is $v\le 1$ 
and $\epsilon>0$ such that for all $n$ large enough, $A_n(v)\le v-\epsilon$. Let $x_n$ denote the state (in the sense described after \autoref{def:support_eq}) in which Proposer makes offer $A_n(v)$. Since $\oP_n(v)\ge v$, Proposer could offer $A_n(v)+\epsilon/2$ in state $x_n$ and get it accepted by all types in $[v-\epsilon/2, v]$, which have probability at least $\min\{\epsilon/2,v-\underline{v}\} \underline{f}$. For $\delta$ high enough such an offer is profitable, contradicting that $(R_n,P_n)$ supports a skimming equilibrium.

Since Proposer's payoff is at least ${\underline U(F)}+\epsilon'$, there must exist $v\in[2\underline v, \min\{\ov,1\}]$ 
and $\epsilon>0$ such that $A(v)= v+ \epsilon$ (by the dominated convergence theorem). Choose $v_1$ such that $A(v_1)=v+\epsilon$ and such that there is $v_2\ge v_1-\epsilon/5$ with $A(v_2)<A(v_1)$. We can then choose $\omega\in (0,\epsilon)$ such that $ A(v_2)\le v_1+\epsilon -\omega$. Since $v_1-\epsilon/5\le v_2 \le A(v_2)$, we can find $N$ such that for all $n>N$, $A_n(v_1)>v_1+ \epsilon - \omega/2$ and
\begin{align}\label{eq:bound_Av2}
v_1-\epsilon/4 \le A_n(v_2)\le v_1+\epsilon -3\omega/4. 
\end{align}

Let $s_n$ be the state in which Proposer makes offer $A_n(v_1)$ in equilibrium $(R_n, P_n)$. By definition, type $v_1$ accepts the offer $A_n(v_1)$ at time $\tau_n(v_1)<\infty$ (since $A_n(v_1)>0$) and therefore prefers $A_n(v_1)$ at time $\tau_n(v_1)$ over $A_n(v_2)$ at time $\tau_n(v_2)$. Moreover, the inequalities in \eqref{eq:bound_Av2} imply that type $v_1$ prefers $A_n(v_2)$ over $v_1+\epsilon -3\omega/4$. Hence,
$$
u_V(v_1+\epsilon- \omega/2, v_1)\ge \delta_n^{\tau_n(v_2)-\tau_n(v_1)} u_V(v_1+\epsilon -3\omega /4,v_1),$$
which rearranges to yield
$$\delta^{\tau_n(v_2)-\tau_n(v_1)} \le \frac{u_V(v_1+\epsilon-\omega/2, v_1)}{u_V(v_1+\epsilon -3\omega/4,v_1)}<1.$$
But this implies the following bound on $R_n$ in state $t_n(s_n)$ (after proposal $A_n(v_1)$ in state $s_n$ has been rejected; if the state is $\lim_{d'\uparrow s_n} t_n(d')$ the argument is analogous):
\begin{align}
 R_n(t_n(s_n)) \le & \int_{v_2}^{t_n(s_n)} u(\min\{\oP_n(v),1\}) \mathrm dF(v) \nonumber \\ 
 & +  \delta_n^{\tau_n(v_2)-\tau_n(v_1)} \int_{\underline{v}}^{v_2}u(\min\{\oP_n(v),1\}) \mathrm dF(v).   \label{eq:R_upperbound}
\end{align}
 To understand inequality \eqref{eq:R_upperbound}, note that for types above $v_2$ an upper bound on Proposer's utility is getting $\min\{\oP_n(v),1\}$ accepted immediately. Since type $v_2$, and therefore all lower types, cannot accept before waiting $\tau_n(v_2)-\tau_n(v_1)$ periods, an upper bound on Proposer's utility is getting $\min\{\oP_n(v),1\}$ accepted after $\tau_n(v_2)-\tau_n(v_1)$ periods.

For any strictly positive integer $m$, inequality \eqref{eq:lower_bound_R} implies that for all integers $n$ large enough, $$R_n(t_n(s_n))\ge (1-1/m)\int_{\underline{v}}^{t_n(s_n)} u(\min\{\oP_n(v),1\}) \mathrm dF(v)-1/m.$$ It follows that there exist $m$ and $n$ such that inequality \eqref{eq:R_upperbound} contradicts \eqref{eq:lower_bound_R}. 
\end{proof}

\subsection{A Commitment-Payoff Equilibrium}

\begin{lemma}
\label{lem:commitment_skimming}
Suppose $c^*>0$, and that either (i) $\uv < 0$ and $\Supp G=[\uv, 0]\cup [c^*/2,\ov]$ or (ii) $\uv=0$ and $\Supp G=[c^*/2,\ov]$.\footnote{We assume that $G$ has a density bounded away from $0$ and $\infty$ on its support.} 
There is a skimming equilibrium in which, on the equilibrium path, there is a decreasing sequence of proposals culminating in $c^*$, with Proposer payoff approximately $\int_{c^*/2}^{\ov} u(\max\{v,c^*\}) \mathrm{d}G(v)$.
\end{lemma}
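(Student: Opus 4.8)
The plan is to transcribe the skimming apparatus of \autoref{prop:existence_decreasing_offers}---the dynamic-programming pair $(R,P)$ of \autoref{def:support_eq} together with Lemmas \ref{lemma:support_equilibrium} and \ref{lemma:DP_soln}---but for the belief $G$ in place of the prior $F$, modified to accommodate the gap in $\Supp G$. The first point is that every type $v<0$ (possible only in case (i)) rejects every strictly positive offer, and since $u(0)=0$ these types are \emph{inert}: they add nothing to Proposer's payoff and remain in his belief at every history. I would therefore parametrize Proposer's on-path belief by the upper endpoint of the positive part of the support, letting the state range over $[c^*/2,\ov]$, and solve \eqref{eq:Bellman}--\eqref{eq:Bellman2} there. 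The payoff-relevant boundary feature is that the lowest positive type is $c^*/2$, so the computation giving $P(\uv)=2\uv^+$ in the remark after \autoref{def:support_eq} now yields $P(c^*/2)=2\cdot (c^*/2)=c^*$; hence $\oP\ge c^*$, the on-path offers decrease exactly to $c^*$, and a transition through the empty gap $(0,c^*/2)$ would capture no mass and is never optimal---which is precisely why the floor is $c^*$ rather than $2\uv^+=0$.

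Existence of $(R,P)$ on $[c^*/2,\ov]$ and the decreasing-offer equilibrium then follow by transcribing Lemmas \ref{lemma:DP_soln} and \ref{lemma:support_equilibrium}: I would initialize the construction near the bottom type $c^*/2$ (where the delay cost dominates the screening gain, so the constant offer $c^*$ is optimal), extend upward as in \citet[Lemma A.3]{AD:89}, and use the increasing envelope $\oP$ to define acceptance thresholds and Bayes-consistent right-truncated beliefs on the positive part. For the payoff claim I would bound $R$ below by the analogue of \eqref{eq:lower_bound_R}; together with $\oP(v)\ge P(v)\ge \max\{v,c^*\}$ this gives a limiting lower bound $\int_{c^*/2}^{\ov} u(\min\{\max\{v,c^*\},1\})\,\mathrm dG(v)$. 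The matching upper bound is \autoref{lemma:static_provides_upper_bound} combined with the fact that delegation set $[c^*,1]$ is optimal for $G$: for the positive part this is \autoref{lemma:opt_delegation_set} with $c=c^*/2$, and the inert negative types optimally receive the status quo (contributing $u(0)=0$). Both bounds equal the displayed value, reading the action Proposer extracts as capped at his ideal point $1$.

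The one genuinely new issue---and the step I expect to be the main obstacle---is deterring deviations to offers $a\in(0,c^*)$, below the new floor $c^*$ rather than below $2\uv^+=0$. This is precisely the leapfrogging deviation that, by \autoref{prop:leapfrogging_necessary}, destroys skimming when the support fills $(0,c^*/2)$; here the gap is what renders it harmless, since no positive mass lies below $c^*/2$ to be cleared cheaply. In case (i) I would specify that after such a deviation Proposer's belief moves onto the negative types and he offers $0$ forever---sequentially rational there, and Bayes-consistent because only the inert negative types reject $a$---so that every positive type, facing continuation value $0<u_V(a,v)$, accepts $a$ immediately; the deviation then nets only $u(a)<u(c^*)$ and is unprofitable for \emph{any} $\ov$. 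Case (ii) is harder because there is no negative ``punishment pool'': high types will reject $a$ and trigger a leapfrog continuation. Here I would lean on \autoref{lemma:static_provides_upper_bound}, by which the deviation together with \emph{any} continuation induces an outcome-equivalent static mechanism of value at most $U(G_{[c^*/2,v]})$, which by \autoref{lemma:opt_delegation_set} equals the value of delegation set $[c^*,1]$ that on-path skimming already approaches. The delicate part to argue carefully is the finite-$\delta$ comparison: since the induced mechanism assigns an action below $c^*$ to a positive mass of types, it falls short of the optimum by a gap bounded away from zero once $a$ is bounded away from $c^*$, whereas offers $a$ near $c^*$ are nearly on-path and controlled directly by the optimality embedded in \eqref{eq:Bellman}. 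Deviations to offers $\ge c^*$ are already unprofitable by the construction of $(R,P)$, and strictly negative offers are excluded by requiring beliefs to remain supported in $\Supp G$.
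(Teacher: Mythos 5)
Your overall architecture is the paper's: transplant the $(R,P)$ apparatus of \autoref{def:support_eq} via analogues of \autoref{lemma:DP_soln} and \autoref{lemma:support_equilibrium} to the positive part of $\Supp G$, with $c^*/2$ playing the role of $\uv$ so that $P(c^*/2)=c^*$ and on-path offers decrease exactly to $c^*$; and your case (i) deterrence of deviations $a\in[0,c^*)$---all positive types accept, a (zero-probability) rejection shifts the belief onto the negative types, after which Proposer offers $0$ forever---is precisely what the paper does.

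The genuine gap is your case (ii). You assert that with $\uv=0$ there is ``no negative punishment pool,'' but there is one: after a zero-probability rejection, a PBE belief need only be supported in the support of the \emph{prior}, $[\uv,\ov]=[0,\ov]$ (this is the paper's stated convention in \autoref{sec:skimmingequilibrium}), not in $\Supp G$. So Proposer may place probability $1$ on the non-positive type $0$; under quadratic loss, type $0$ accepts only $a=0$, so perpetually offering $0$ is sequentially rational and gives every type continuation value $0$. Hence every type $v\ge c^*/2$ accepts any deviation $a\in[0,c^*)$ (as $a<c^*\le 2v$), and the deviation nets at most $u(a)<u(c^*)$, uniformly in $\delta$---which is exactly why the paper's proof disposes of both hypotheses with the single stipulation ``because in fact $\uv\le 0$.'' Your substitute route for case (ii) through \autoref{lemma:static_provides_upper_bound} cannot close the argument: it bounds the deviation payoff above by $U(G_{[c^*/2,\ov]})$, which (up to the vanishing discounting loss) is the on-path payoff itself, so it provides no strict slack at any fixed $\delta<1$; and your patch for $a$ near $c^*$---that such offers are ``controlled by the optimality embedded in \eqref{eq:Bellman}''---does not hold, because offers in $[0,c^*)$ lie strictly below the range of $\oP$ on $[c^*/2,\ov]$ and the Bellman equation is silent about them (in \autoref{lemma:support_equilibrium} such offers are handled only by the explicit accept-all stipulation you are trying to avoid). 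So your case (ii) is incomplete as written, though the lemma itself is rescued by applying your case (i) construction verbatim with the belief degenerate at $0$. (A final, minor point: your cap $u(\min\{\max\{v,c^*\},1\})$ is if anything the more careful reading of the displayed payoff when $\ov>1$, consistent with how the lemma is invoked in \autoref{prop:commitment_construction}.)
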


\begin{proof}
First, by an argument analogous to \autoref{lemma:DP_soln}, there is $(R,P)$ that supports a skimming equilibrium on $[c^*/2,\overline{v}]$. Second, analogous to \autoref{lemma:support_equilibrium}, we can use that $(R,P)$ to construct a skimming equilibrium with the desired properties: 
just treat $c^*/2$ here like $\uv$ in \autoref{lemma:support_equilibrium}; the only point to note is that because in fact $\uv \leq 0$, no matter whether hypothesis (i) or (ii) in the lemma holds, we can deter deviations to any offer in $[0,c^*)$ by stipulating that any such offer is accepted, with the belief upon rejection supported on non-positive types and all subsequent offers being $0$.
\end{proof}

\begin{proof}[Proof of \autoref{prop:commitment_construction}]
We consider two cases, explaining in each case the beliefs and behavior off path that support the on-path behavior described in the proposition.

First, consider $\uv\leq 0$. If the first-period offer of $0$ is rejected, Bayes rule implies that Proposer updates to the belief $F_{[\uv, 0]\cup [c^*/2,\ov]}$, which is the prior's conditional distribution when excluding types $(0,c^*/2)$. Continuation play then follows the skimming equilibrium of \autoref{lem:commitment_skimming}. 
If Proposer makes a first-period offer other than $0$, continuation play follows the skimming equilibrium of \autoref{lemma:support_equilibrium}. It remains only to show that Proposer has no profitable deviation in the first period. Since Proposer's belief when his initial offer of $0$ is rejected is $F_{[\uv,0]\cup[c^*/2,\ov]}$, it follows from \autoref{lem:commitment_skimming} that Proposer's on-path payoff is approximately $\int_{c^*/2}^{1} u(\max\{v,c^*\}) \mathrm dF(v) + \int_{1}^{\ov} u(1) \mathrm dF(v)$, 
%$\int_{c^*/2}^{\ov} u(\max\{v,c^*\}) \mathrm dF(v)$, 
which equals $U(F)$. On the other hand, \autoref{prop:existence_decreasing_offers} implies that deviating to a first-period offer other than $0$ yields a payoff no more than approximately $\underline{U}(F)$. As $\underline{U}(F)< U(F)$, no such deviation is profitable.

Second, consider $\uv>0$ (and correspondingly $\ov\leq 1/2$). If the first-period offer of $0$ is rejected, Bayes rule implies that Proposer updates to the belief $F_{[c^*/2,\ov]}$. Continuation play then follows the skimming equilibrium of \autoref{lemma:support_equilibrium} applied to this belief, i.e., replacing $F$ in that lemma with $F_{[c^*/2,\ov]}$. If Proposer makes a first-period offer other than $0$, continuation play follows the skimming equilibrium of \autoref{lemma:support_equilibrium} with the original belief $F$. It follows from an essentially identical argument to that in the previous paragraph that no first-period deviation is profitable for Proposer.
\end{proof}

\subsection{Is Leapfrogging Necessary?}

\begin{proof}[Proof of \autoref{prop:leapfrogging_necessary}]
Towards contradiction, suppose there is a sequence of $\delta_n \to 1$ and corresponding skimming equilibria such that Proposer's payoff converges to $U(F)$. For each $n$ and $v$, let $B_n(v)$ denote the expected discounted proposal that type $v$ accepts: $B_n(v):= \E[\delta^t a_t]$, where the expectation is taken over the accepted proposals and agreement times for type $v$ given the equilibrium strategies. Since $B_n$ is monotonic (because the corresponding mechanism is IC) and uniformly bounded, Helly's selection theorem implies that there is some $B$ and a subsequence of $B_n$, which we also index by $n$ for convenience, along which $B_n\rightarrow B$ pointwise and in $L^1$-norm.

Since interval delegation is (essentially) uniquely optimal, it must hold that (up to measure zero sets) $B(v)=0$ for $v\in[\uv, c^*/2)$, $B(v)=c^*$ for $v\in(c^*/2,c^*)$, $B(v)=v$ for $v\in(c^*,\min\{\ov,1\})$, and $B(v)=1$ for $v\in[1,\ov]$. (Suppose not. $B$ corresponds to some feasible mechanism in the static problem and therefore, by the essential uniqueness assumption, yields payoff at most $U(F)-\varepsilon$ for some $\varepsilon>0$. Since $B_n\rightarrow B$ in the $L^1$-norm, for all $n$ large enough Proposer's payoff in the equilibrium corresponding $B_n$ is at most $U(F)-\varepsilon/2$, a contradiction.)

For any $\varepsilon>0$, there is $N$ such that for all $n>N$, $B_n(v)\le \varepsilon$ for all $v\in[\uv,c^*/2-\varepsilon]$. Then for all $n$ large enough, there is a history at which Proposer's belief is $F_{[\uv,c]}$ for some cutoff $c\ge c^*/2-\varepsilon$ (since on-path offers are accepted by upper sets) and Proposer's payoff in the continuation equilibrium is at most $u(\varepsilon)$. But, for any $\epsilon'\in(0,c)$, Proposer can deviate to make decreasing offers on a fine grid between $\epsilon'$ and $c$ such that all types in $[\epsilon',c]$ accept one of the offers close to their type or higher, and there is approximately no cost of delay as $\delta\rightarrow 1$.\footnote{One can verify that type $v\ge 0$ strictly prefers any action in $\left(v-v\sqrt{1-\delta},v+v\sqrt{1-\delta}\right)$ to action $v$ next period. Therefore, if Proposer makes decreasing offers between $\epsilon'$ and $c$ on a fine grid with diameter $\epsilon'\sqrt{1-\delta}$, every type in $[\epsilon',c]$ will accept one of the offers close to its type or higher in any best response. Moreover, agreement with those types is reached within at most $\frac{c-\epsilon'}{\epsilon' \sqrt{1-\delta}}+1$ rounds. Since $\delta^{\left(\frac{c-\epsilon'}{\epsilon' \sqrt{1-\delta}}+1\right)}\rightarrow 1$ as $\delta\rightarrow 1$, Proposer incurs essentially no cost of delay for types in $[\epsilon',c]$ as $\delta\rightarrow 1$.} Proposer's payoff from this deviation is strictly greater than $u(\epsilon)$ for $\epsilon$ and $\epsilon'$ small enough and $\delta$ large enough, contradicting Proposer's payoff in the continuation equilibrium being at most $u(\epsilon)$.
% %Version from 11/10/22 below:
% For any $\varepsilon>0$, there is $N$ such that for all $n>N$, $B_n(v)\le \varepsilon$ for all $v\in[\uv,c^*/2-\varepsilon]$. Then for all $n$ large enough, there is a history after which Proposer's belief is $F_{[\uv,c]}$ for some cutoff $c\ge c^*/2-\varepsilon$ (since on-path offers are accepted by upper sets) and for which Proposer's payoff in the continuation equilibrium is at most $u(\varepsilon)$. But Proposer has a deviation to make offers on a fine grid between $\uv$ and $c$ such that all types accept one of the offers and there is approximately no cost of delay as $\delta\rightarrow 1$.\footnote{One can verify that type $v$ strictly prefers any action in $(v-v\sqrt{1-\delta},v+v\sqrt{1-\delta})$ to action $v$ next period. Therefore, if Proposer makes offers between $\uv$ and $c$ on a fine grid with diameter $\uv\sqrt{1-\delta}$, every type in $[\uv,c]$ will accept one of the offers in any best response. Moreover, agreement is reached within at most $\frac{c-\uv}{\uv \sqrt{1-\delta}}+1$ rounds. Since $\delta^{\left(\frac{c-\uv}{\uv \sqrt{1-\delta}}+1\right)}\rightarrow 1$ as $\delta\rightarrow 1$, Proposer incurs essentially no cost of delay as $\delta\rightarrow 1$.} Proposer's payoff from this deviation is strictly greater than $u(\epsilon)$ for all $\epsilon$ small enough and $\delta$ large enough, contradicting Proposer's payoff in the continuation equilibrium being at most $u(\epsilon)$.
\end{proof}

\newpage
\section{Supplementary Appendix (For Online Publication)}

This supplementary appendix provides proofs for the lemmas stated in \appendixref{app:decreasing}. To reduce notation, we denote $S(v):= \oP(t(v))$. 

\begin{lemma}\label{lemma:maximizer_not_in_decreasing}
For any $v$ and $z<y \in T(v)$, we have $\oP(z)<\oP(y)$.
\end{lemma}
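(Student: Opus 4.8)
The plan is to show that $\oP$ takes strictly ordered values across the maximizers in $T(v)$ by ruling out equal values. Throughout, write $\Phi(y,v):=u(\oP(y))[F(v)-F(y)]+\delta R(y)$ for the objective in \eqref{eq:Bellman}, so that $T(v)=\arg\max_{y\in[\uv,v]}\Phi(\cdot,v)$. Since $\oP$ is increasing and $z<y$, we have $\oP(z)\le\oP(y)$, so it suffices to rule out $\oP(z)=\oP(y)$. Suppose for contradiction that $\oP(z)=\oP(y)=:a^*$. Because $z,y\in T(v)$ both attain the maximum, $\Phi(z,v)=\Phi(y,v)$; subtracting and using $\oP(z)=\oP(y)=a^*$ gives the indifference identity $u(a^*)[F(y)-F(z)]=\delta[R(y)-R(z)]$. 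As noted after \autoref{def:support_eq}, $\oP(w)\ge\max\{w,2\uv^{+}\}$, so $a^*=\oP(y)\ge y$; in the substantive case $y>0$ (the only case in which distinct positive-offer cutoffs arise) we have $a^*>0$ and hence $u(a^*)>u(0)=0$, by strict monotonicity of $u$ on $(-\infty,1]$.

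The crux is the bound $R(y)-R(z)\le u(\oP(y))[F(y)-F(z)]$ for $z\le y$, which I expect to be the main obstacle. Intuitively, enlarging Proposer's pool of remaining types from $[\uv,z]$ to $[\uv,y]$ adds only the types in $(z,y]$, and the most additional surplus he can extract from them is to sell to all of them immediately at the highest available price $\oP(y)$; any delay only lowers this by discounting. Formally I would establish it by unfolding the Bellman recursion \eqref{eq:Bellman} along Proposer's optimal continuation from state $y$: letting $w^*:=t(y)$, if $w^*\le z$ then $w^*$ is feasible at state $z$, so $R(z)\ge u(\oP(w^*))[F(z)-F(w^*)]+\delta R(w^*)$, which together with $R(y)=u(\oP(w^*))[F(y)-F(w^*)]+\delta R(w^*)$ yields $R(y)-R(z)\le u(\oP(w^*))[F(y)-F(z)]\le u(\oP(y))[F(y)-F(z)]$; if instead $w^*\in(z,y]$, then $w^*<y$ (because $\Phi(y,y)=\delta R(y)<R(y)$ forces $t(y)<y$ whenever $R(y)>0$), so one recurses on the pair $(z,w^*)$. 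The recursion terminates since the iterates $t^k(y)$ strictly decrease to $\uv$ and hence fall below $z$ in finitely many steps (with $z=\uv$ serving as the base case). The one subtlety this argument relies on is that the relevant offers lie in the increasing range $[0,1]$ of $u$ --- equivalently, that Proposer never benefits from proposing above his ideal point $1$ --- so that $\oP(w^*)\le\oP(y)$ implies $u(\oP(w^*))\le u(\oP(y))$.

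Finally I would combine the two facts. Substituting the bound into the indifference identity gives $u(a^*)[F(y)-F(z)]=\delta[R(y)-R(z)]\le\delta\,u(a^*)[F(y)-F(z)]$. Because $F$ admits a density bounded away from $0$, we have $F(y)>F(z)$, and $u(a^*)>0$ from the first paragraph; dividing by $u(a^*)[F(y)-F(z)]>0$ yields $1\le\delta$, contradicting $\delta<1$. Hence $\oP(z)<\oP(y)$, as claimed.
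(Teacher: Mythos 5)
Your proof is correct in substance and takes the same route as the paper's. The paper's entire argument is: if $\oP(z)\ge \oP(y)$ for $z<y$, then $\oP$ is constant, say $\bar p$, on $[z,y]$; the key bound $R(y)-R(z)\le u(\bar p)[F(y)-F(z)]$ holds; and substituting this into the objective in \eqref{eq:Bellman} leaves a slack of $(1-\delta)u(\bar p)[F(y)-F(z)]>0$ in favor of $z$. What you add is a genuine proof of the key bound, which the paper disposes of in one clause (``the payoff from any type in $[z,y]$ is at most $u(\bar p)$''); your Bellman-unfolding argument is the right way to make that clause rigorous, and the two side conditions you flag---positivity of $u$ at the common value $\bar p$ (your $a^*$) and offers lying in the increasing range of $u$---are exactly the caveats the paper handles only implicitly (it restricts attention to $\{y:\oP(y)\le 1\}$ in a footnote adjacent to this lemma, and its strict inequality likewise presumes $u(\bar p)[F(y)-F(z)]>0$).

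Two repairs are worth making. First, a scope issue: you read ``$z<y\in T(v)$'' as requiring $z\in T(v)$ as well, and so work from the indifference identity $\Phi(z,v)=\Phi(y,v)$. The paper proves the stronger claim that $y\notin T(v)$ whenever $z<y$ and $\oP(z)\ge\oP(y)$, with $z$ \emph{arbitrary}, and it is the stronger form that Step 2 of \autoref{lemma:support_equilibrium}'s proof uses: every type strictly below $t(d)$ must have acceptance threshold strictly below $S(d)=\oP(t(d))$, and such types need not correspond to maximizers. Your argument covers this with a one-line change: optimality of $y$ alone gives $\Phi(z,v)\le\Phi(y,v)$, i.e., $u(\bar p)[F(y)-F(z)]\le\delta[R(y)-R(z)]$, and combining with your bound again forces $1\le\delta$. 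Second, your termination claim---that the iterates $t^k(y)$ strictly decrease to $\uv$ and hence pass below $z$ in finitely many steps---is not justified by \autoref{def:support_eq} alone; the iterates decrease but need not reach $\uv$, or even pass a given $z$, in finite time. This is easily patched without any termination claim: unfold \eqref{eq:Bellman} along $v_0=y$, $v_{k+1}=t(v_k)$, note $\delta^k R(v_k)\le \delta^k R(y)\to 0$ since $R$ is increasing, and telescope to obtain $R(y)\le u(\oP(y))\left[F(y)-\lim_k F(v_k)\right]\le u(\oP(y))[F(y)-F(z)]$ whenever the iterates remain above $z$ (if they drop below $z$, your finite argument applies); together with $R(z)\ge 0$, which follows from $R(v)\ge \Phi(v,v)=\delta R(v)$, this yields the bound.
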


\begin{proof}
Suppose that there are $v$ and $z<y$ such that $\oP(z)\ge \oP(y)$. We prove that $y\notin T(v)$. Since $\oP$ is increasing, it is constant on $[z,y]$; call that value $\bar p$.
It follows that
\begin{align*}
& u(\bar p)[F(v)-F(y)]+\delta R(y)\\ \le & u(\bar p)[F(v)-F(y)]+\delta\left\{u(\bar p)[F(y)-F(z)]+ R(z)\right\}\\
 < & u(\bar p)[F(v)-F(z)]+\delta R(z),
\end{align*}
where the first inequality is because the payoff from any type in $[z,y]$ is at most $u(\bar p)$ (and hence $R(y)-R(z)\le u(\bar p)[F(y)-F(z)]$). Thus, $y\notin T(v)$.
\end{proof}

Below, we will use the fact that $T$ is upper hemicontinuous. This follows from the generalized theorem of the maximum in \citet[p.~527]{AD:89}. The theorem is applicable because: (i) the maximand function $ u(\oP(y))[F(v)-F(y)] +\delta R(y)$
 is upper semicontinuous as a function of $y$ for every $v$, which in turn is because $\oP$ is upper semicontinuous, and $u$ and $F$ are continuous and increasing on the relevant range $\{y: y\leq v \text{ and } \oP(y)\le 1\}$;\footnote{There is no loss in restricting attention to this range by a similar argument to that in the proof of \autoref{lemma:maximizer_not_in_decreasing}.} and (ii) for any sequence $v_n\rightarrow v$, the maximand function converges uniformly.

\begin{proof}[Proof of \autoref{lemma:support_equilibrium}]
\underline{Step 1}: We begin by specifying beliefs and strategies:
\begin{itemize}
    \item $\mu$ is derived from Bayes' rule whenever possible; if at history $h=(h',a)$ a probability $0$ rejection occurs, $\mu(h)$ puts probability $1$ on $\ov$ if $\ov\le 1/2$ and probability $1$ on $0$ if $\ov>1/2 $ (in the latter case, $\uv \leq 0$ by assumption);
    
    \item At any history $h=(h',a)$, any Vetoer type not in the support of Proposer's current belief plays an arbitrary best response; type $v \ge 0$ in the support accepts $a$ if and only if $a\in[0,\oP(v)]$; type $v<0$ in the support accepts if and only if $u_V(a,v)\ge u_V(0,v)$;
	
    \item Proposer's first offer is $S(\overline{v})$. To describe the rest of Proposer's strategy, consider any  history $h=(h',a)$.  Given Vetoer's strategy and the belief updating specified above, if Proposer holds a non-degenerate belief upon rejection of $a$ then this belief equals $F_{[\uv,d]}$ for some $d$. We stipulate that if 
    $a=\oP(d)=P(d)$, then Proposer offers $S(d)$;
    if $a=\oP(d)>P(d)$, then Proposer offers $\lim_{d'\uparrow d} S(d')$;
    if
    $a\in [\lim_{d'\uparrow d} \oP(d'), \oP(d))$, then Proposer randomizes between $\lim_{d'\uparrow d} S(d')$ and $S(d)$ so that type $d$ is indifferent between $a$ in the current period and the lottery in the next period; and for any $a\not\in[\oP(\uv),\oP(\ov)]$, Proposer offers $S(d)$. Finally, whenever Proposer's belief is degenerate on $x\ge 0$ ($x\in \{0,\ov\}$), Proposer offers $\min\{2x,1\}$ in all future periods. 
\end{itemize}

Observe that at any history, Proposer's subsequent on path offers are decreasing, either trivially if the current belief is degenerate, or for any non-degenerate belief because the belief cutoffs are decreasing by definition and $\oP$ and $t$ are increasing.

\underline{Step 2}: We verify that Proposer is playing a best response to Vetoer's strategy given beliefs $\mu$. As this is obvious whenever he has a degenerate belief, assume he has a non-degenerate belief.  As noted above, any such belief is of the form $F_{[\uv,d]}$ for some $d$. 
Proposer's strategy prescribes some randomization (possibly degenerate) between $S(d)$ and $\lim_{d'\uparrow d} S(d')$.

We first claim that $S(d)$ is an optimal proposal. Given Vetoer's strategy, $R(d)$ is an upper bound on Proposer's payoff. Furthermore, it follows from \autoref{lemma:maximizer_not_in_decreasing} that Vetoer's strategy has all types above $t(d)$ accepting $S(d)$ and all types strictly below rejecting. The claim follows.

We next claim that $\lim_{d'\uparrow d} S(d')$ is also an optimal proposal. Since $T$ is upper hemicontinuous, $\lim_{d'\uparrow d} t(d')\in T(d)$. Hence, given Vetoer's strategy, $\oP(\lim_{d'\uparrow d} t(d'))$ is an optimal proposal. It therefore suffices to show that $\lim_{d'\uparrow d} S(d')=\oP(\lim_{d'\uparrow d} t(d'))$, or equivalently, $\lim_{d'\uparrow d} \oP(t(d'))= \oP(\lim_{d'\uparrow d} t(d'))$.
Note that $\lim_{d'\uparrow d} \oP(t(d'))\le \oP(\lim_{d'\uparrow d} t(d'))$ because $t$ and $\oP$ are increasing. But if $\lim_{d'\uparrow d} \oP(t(d'))<\oP(\lim_{d'\uparrow d} t(d'))$ then continuity of $R$ and $u$ and strict monotonicity of $u$ in the relevant range imply the contradiction
\begin{align*}
R(d)&=u(\lim_{d'\uparrow d}\oP(t(d')))[F(d)-F(\lim_{d'\uparrow d}t(d'))]+ \delta R(\lim_{d'\uparrow d} t(d')) \\
&< u(\oP(\lim_{d'\uparrow d} t(d')))[F(d)-F(\lim_{d'\uparrow d}t(d'))]+ \delta R(\lim_{d'\uparrow d} t(d')) = R(d).
\end{align*}

All that remains is to verify that at a history $h=(h',a)$ with $a \in [\lim_{d'\uparrow d} \oP(d'),\oP(d))$, there is a randomization between $S(d)$ and $\lim_{d'\uparrow d} S(d')$ that makes type $d$ indifferent between $a$ in the current period and the lottery in the next period. To confirm this, note that since $P$ is right-continuous and $P(v)\ge v$ for any $v$, we have
\[ u_V(\lim_{d'\uparrow d}P(d'),d)\ge u_V(a,d)\ge u_V(P(d),d). \]
The existence of a suitable randomization now follows from continuity of $u_V(\cdot,d)$ and \Cref{eq:Bellman2}.

\underline{Step 3}: We verify that Vetoer is playing a best response at each history.
Consider any history $(h,a)$ with $\mu(h)=F_{[\uv,q]}$. Since types outside of the support of Proposer's belief play a best response by assumption, we only consider types in $[\uv,q]$. 
\begin{itemize}
    \item If $a>\oP(q)$, Vetoer's strategy prescribes that no type below $q$ accepts, and Proposer will propose $S(q)$ next period. Since type $q$ is indifferent between $P(q)$ in the current period and $S(q)$ next period, and $S(q)\le P(q)\le \oP(q)<a$, type $q$ prefers $S(q)$ next period to $a$ in the current period. The same holds for all lower types, and hence Vetoer is playing a best response.
    \item If $a<0$, then: (i) it is clearly a best response for all types $v\ge 0$ to reject; and (ii) types $v<0$ accept if and only if they prefer $a$ to $0$, which is a best response because Proposer will never make a strictly negative offer in the continuation equilibrium.
    \item If $a$ is positive but below the range of $\oP$, all types $v\ge0$ accept. After a rejection, Proposer  
    will either perpetually offer $0$ or $2\ov$, yielding a continuation payoff of $0$ to all types, and so it is a best response for any type $v\geq 0$ to accept $a$.
    \item Otherwise, $a$ is between $\oP(\uv)$ and $\oP(q)$. 
    
    If $a=\oP(d)=P(d)$ for some $d\le q$, Vetoer's strategy prescribes that all and only those types above $d$ accept.\footnote{If there are multiple values of $d$ satisfying $a=\oP(d)$, all types above the lowest one accept.} On path, Proposer will propose $S(d)$ next period followed by lower offers; since type $d$ is indifferent between $a$ in the current period and $S(d)$ next period, and all future offers are below $a$, SCED implies that it is a best response for all higher types to accept and for all lower types to reject. Hence, Vetoer is playing a best response.
    
    If there is $d\le q$ such that $a=\oP(d)>P(d)$, Vetoer's strategy prescribes that all and only those types above $d$ accept. Proposer will propose $\lim_{d'\uparrow d} S(d')$ next period, followed by lower offers. Since type $d'$ is indifferent between $P(d')$ in the current period and $S(d')$ next period, continuity of $u$ implies that type $d$ is indifferent between $\lim_{d'\uparrow d} P(d') = \oP(d)=a$ in the current period and  $\lim_{d'\uparrow d} S(d') $ next period. Hence, Vetoer is playing a best response.
    
    If there is $d\le q$ such that $a\in[\lim_{d'\uparrow d} \oP(d'),\oP(d))$, Vetoer's strategy again prescribes that all and only those types above $d$ accept. Proposer will randomize next period between $\lim_{d'\uparrow d}S(d')$ and $S(d)$ to make type $d$ indifferent between accepting $a$ or getting the lottery next period. Therefore, Vetoer is playing a best response. 
    \qedhere
\end{itemize}
\end{proof}

 \begin{proof}[Proof of \autoref{lemma:DP_soln}]
 
 \underline{Step 1}: Suppose $\uv>0$.  We claim that there is $\epsilon>0$ such that $(R,P)$ given by 
\begin{align*}
    R(v)&:= u(2\uv) F(v)\\
    P(v)&:=v+\sqrt{v^2-4\delta \uv(v-\uv)}
\end{align*}
supports a skimming equilibrium on $[\underline{v},\underline{v}+\epsilon]$. Plainly, $R$ and $P$ are continuous, given that %$\uv>0$ and 
$F$ is continuous. Also, $P$ is increasing and hence $\oP=P$. Some algebra confirms that $R(v)$ is the value from securing acceptance from all types below $v$ on action $2 \uv$, while $P(v)$ is the action that makes type $v$ indifferent between accepting that action now and getting action $2\uv$ in the next period. Therefore, it is sufficient for us to show that there is $\epsilon>0$ such that for all $v\in[\uv,\uv+\epsilon]$ the unique maximizer of the RHS of \Cref{eq:Bellman} is $\uv$, which implies $t(v)=\uv$.

To that end, observe that the derivative of the objective function in \Cref{eq:Bellman} with respect to $y$ is
 \begin{align}\label{eq:derivative_Bellman}
     u'(\oP(y)) \oP'(y) [F(v)-F(y)]- u(\oP(y)) f(y) + \delta u(2\uv) f(y).
 \end{align}
Since $0<u(2\uv)\le u(\oP(y))$ and $f$ is bounded away from 0, the sum of the last two terms in expression \eqref{eq:derivative_Bellman} is strictly negative and bounded away from 0. Since $u'(\oP(y))$ is bounded (by concavity), $\oP'(y)$ is bounded (as $v^2- 4 \delta \underline{v}(v-\underline{v})>0$ for all $v$), $F$ is continuous, and $v,y\in[\uv,\uv+\epsilon]$, the first term in expression \eqref{eq:derivative_Bellman} goes to $0$ as $\epsilon \to 0$. It follows that there is $\epsilon>0$ such that expression \eqref{eq:derivative_Bellman} is strictly negative for all $y\in [\uv,\uv+\epsilon]$, and hence the maximum of the RHS of \Cref{eq:Bellman} is attained uniquely at $t(v)=\uv$ whenever $v\le \uv+\epsilon$. 

\underline{Step 2}: Suppose $(R_{v^*},P_{v^*})$ supports a skimming equilibrium on $[\uv,v^*]$, where $0<\uv<v^*<\ov$. We will show that there is $(R,P)$ that supports a skimming equilibrium on $[\uv,\ov]$ with the property that $P(v)=P_{v^*}(v)$ and $R(v)=R_{v^*}(v)$ for all $v\in[\uv,v^*]$.

Pick $v'\in (v^*, \overline{v}]$ as large as possible such that 
 \begin{align}
      u(1)[F(v')-F(v^*)]\le (1/2) (1-\delta)R_{v^*}(v^*).\label{eq:choice_of_v'}
 \end{align}
Note that $v'$ is well-defined because $F$ is continuous and $R_{v^*}(v^*)>0$ (this inequality holds because of $v^*>\uv$ and the property noted at the end of the paragraph following \autoref{def:support_eq}). Moreover, letting $\overline f$ denote an upper bound for $f$, it holds that
\begin{align}\label{eq:extension_bound}
v'-v^* \ge \frac{(1/2)(1-\delta)R_{v^*}(v^*)}{u(1)\overline{f}}>0.    
\end{align}

We extend $R_{v^*}$ to $R_{v'}$ defined on $[\uv,v']$ by setting $R_{v'}(v):= R_{v^*}(v)$ for $v\in [\uv,v^*]$, and for $v\in (v^*,v']$,
 \begin{align*}
     R_{v'}(v):=\max_{y\in[\uv,v^*]} \left\{u(\oP_{v^*}(y))[F(v)-F(y)] +\delta R_{v^*}(y)\right\}
 \end{align*}
 and define $t_{v'}(v)$ to be the largest value in the argmax correspondence. Observe that $\oP_{v^*}$ is upper semicontinuous (since $P_{v^*}$ is right-continuous by assumption, and hence $\oP_{v^*}$ is right-continuous) and $R_{v^*}$ is continuous; hence, $R_{v'}(v)$ and $t_{v'}(v)$ are well-defined. We extend $P_{v^*}$ to $P_{v'}$ defined on $[\uv,v']$ by setting $P_{v'}(v):=P_{v^*}(v)$ for $v\in[\uv,v^*]$, and for $v\in(v^*,v']$ by letting $P_{v'}(v)$ be the largest value satisfying
 \begin{align*}
     u_V(P_{v'}(v),v)=\delta u_V(\oP_{v^*}(t_{v'}(v)),v).
 \end{align*}
 So $(R_{v'},P_{v'})$ satisfies \Cref{eq:Bellman2}.  We can apply the  generalized theorem of the maximum in \citet[p.~527]{AD:89} analogously to the discussion after \autoref{lemma:maximizer_not_in_decreasing} and conclude that $R_{v'}$ is continuous and $T_{v'}$ is non-empty and upper hemicontinuous. Therefore, $t_{v'}$ is upper semicontinuous and, since it is increasing, right-continuous. 
 These properties of $t_{v'}$ and the hypothesis that $P_{v^*}$ is right-continuous imply that $P_{v'}$ is right-continuous.
 $(R_{v'},P_{v'})$ also satisfies \Cref{eq:Bellman}, i.e.,
 \begin{align*}
     R_{v'}(v)=\max_{y\in[\uv,v]} \left\{u(\oP_{v'}(y))[F(v)-F(y)] +\delta R_{v'}(y)\right\}
 \end{align*}
 for all $v\in[\uv,v']$, because for all $y\in[v^*,v]$, 
 \begin{align*}
     &u(\oP_{v'}(y))[F(v)-F(y)] +\delta R_{v'}(y)\nonumber\\
     \le &u(1)[F(v)-F(y)] +\delta R_{v'}(y)\nonumber\\
     \le &(1/2) (1-\delta)R_{v^*}(v^*)+\delta R_{v'}(y)\nonumber \\
     \le &(1/2) (1-\delta)R_{v'}(y)+\delta R_{v'}(y)\nonumber \\
     < &R_{v'}(v).
 \end{align*}
Here the second inequality is because the choice of $v'$ satisfies inequality \eqref{eq:choice_of_v'} and the second inequality is because $R_{v^*}(v^*)=R_{v'}(v^*)$ and $R_{v'}$ is increasing. Therefore, the maximum is attained for $y\in [\underline{v},v^*)$ and the claim follows since $R_{v'}(y)=R_{v^*}(y)$ for any such $y$.

We have established that $(R_{v'},P_{v'})$ supports a skimming equilibrium on $[\uv,v']$. Since $R_{v'}$ is increasing, it follows from inequality \eqref{eq:extension_bound} that a finite number of repetitions of this argument extends $(R_{v^*},P_{v^*})$ to the entire $[\uv,\ov]$ interval.
 
\underline{Step 3}: By an approximation argument analogous to that in \citet[Theorem 4.2]{AD:89}, there exists $(R,P)$ that supports a skimming equilibrium on $[\uv,\ov]$ if $\uv=0$; we omit details.  The case of $\uv<0$ is handled by setting $R(v)=0$ and $P(v)=0$ for all $v<0$, and pasting that to a solution when we take $\uv=0$ and set the distribution on $[0,\ov]$ to be the conditional distribution $F_{[0,\ov]}$.
 \end{proof}
 
 \end{document}